  \providecommand\BibTeX{{%
    \normalfont B\kern-0.5em{\scshape i\kern-0.25em b}\kern-0.8em\TeX}}}
\newcommand{\sref}[2]{\hyperref[#2]{#1 \ref{#2}}}
\newcommand{\algmargin}{\the\ALG@thistlm}
\renewcommand\paragraph{\@startsection{paragraph}{4}{\parindent}%
  {2pt}
  {-\parindent}
  {\ACM@NRadjust{\@parfont\@adddotafter}}}
\newlength{\whilewidth}
\algnewcommand{\parState}[1]{\State%
	\parbox[t]{\dimexpr\linewidth-\algmargin}{\strut #1\strut}}
\newtheorem{thm}{Theorem}[section]
\newtheorem{lem}[thm]{Lemma}
\newtheorem{pro}[thm]{Proposition}
\newtheorem{cor}[thm]{Corollary}
\newtheorem{dfn}[thm]{Definition}
\newtheorem{ass}[thm]{Assumption}
\patchcmd{\hyper@makecurrent}{%
    \ifx\Hy@param\Hy@chapterstring
        \let\Hy@param\Hy@chapapp
    \fi
}{%
    \iftoggle{inappendix}{%
        \@checkappendixparam{chapter}%
        \@checkappendixparam{section}%
        \@checkappendixparam{subsection}%
        \@checkappendixparam{subsubsection}%
        \@checkappendixparam{paragraph}%
        \@checkappendixparam{subparagraph}%
    }{}%
}{}{\errmessage{failed to patch}}
\newcommand*{\@checkappendixparam}[1]{%
    \def\@checkappendixparamtmp{#1}%
    \ifx\Hy@param\@checkappendixparamtmp
        \let\Hy@param\Hy@appendixstring
    \fi
}
\apptocmd{\appendix}{\toggletrue{inappendix}}{}{\errmessage{failed to patch}}
\apptocmd{\subappendices}{\toggletrue{inappendix}}{}{\errmessage{failed to patch}}
\DeclareMathOperator*{\argmin}{arg\,min}
\newcommand{\calT}{\ensuremath{\mathcal{T}}}
\newcommand{\SOCO}{\ensuremath{\mathsf{SOCO}}\xspace}
\newcommand{\OCO}{\ensuremath{\mathsf{OCO}}\xspace}
\newcommand{\OWT}{\ensuremath{\mathsf{OWT}}\xspace}
\newcommand{\MTS}{\ensuremath{\mathsf{MTS}}\xspace}
\newcommand{\CFL}{\ensuremath{\mathsf{CFL}}\xspace}
\newcommand{\ALG}{\textsc{Alg}\xspace}
\newcommand{\ADV}{\textsc{Adv}\xspace}
\newcommand{\OPT}{\textsc{Opt}\xspace}
\newcommand{\CLIPacro}{\textsc{CLIP}\xspace}
\newcommand{\CLIP}{\textsc{SC}\xspace}
\newcommand{\CarbonClipper}{\textsc{ST-CLIP}\xspace}
\newcommand{\OFF}{\ensuremath{\mathtt{OFF}}\xspace}
\newcommand{\ON}{\ensuremath{\mathtt{ON}}\xspace}
\newcommand{\CASLB}{\ensuremath{\mathsf{SOAD}}\xspace}
\newcommand{\CSWM}{\ensuremath{\mathsf{SOAD}}\xspace}
\newcommand{\CSWMt}{\ensuremath{\mathsf{SOAD\text{-}T}}\xspace}
\newcommand{\SOAD}{\ensuremath{\mathsf{SOAD}}\xspace}
\newcommand{\SOADt}{\ensuremath{\mathsf{SOAD\text{-}T}}\xspace}
\newcommand{\PCM}{\textsc{PCM}\xspace}
\newcommand{\norm}[1]{\ensuremath{\Vert #1 \Vert_{\ell_1(\mathbf{w})}}}
\newcommand{\mbf}[1]{{\mathbf{#1}}}
\newcommand{\ex}{\mathbb{E}}
\newcommand{\diam}{\text{diam}}
\colorlet{blue}{black}
\colorlet{teal}{black}
\begin{document}

\title{Learning-Augmented Competitive Algorithms for Spatiotemporal Online Allocation with Deadline Constraints}

\author{Adam Lechowicz}
\affiliation{%
  \institution{University of Massachusetts Amherst}
  \country{USA}
}
\email{alechowicz@cs.umass.edu}

\author{Nicolas Christianson}
\affiliation{%
  \institution{California Institute of Technology}
  \country{USA}
}
\email{nchristianson@caltech.edu}

\author{Bo Sun}
\affiliation{%
  \institution{University of Waterloo}
  \country{Canada}
}
\email{bo.sun@uwaterloo.ca}

\author{Noman Bashir}
\affiliation{%
  \institution{Massachusetts Institute of Technology}
  \country{USA}
}
\email{nbashir@mit.edu}

\author{Mohammad Hajiesmaili}
\affiliation{%
  \institution{University of Massachusetts Amherst}
  \country{USA}
}
\email{hajiesmaili@cs.umass.edu}

\author{Adam Wierman}
\affiliation{%
  \institution{California Institute of Technology}
  \country{USA}
}
\email{adamw@caltech.edu}

\author{Prashant Shenoy}
\affiliation{%
  \institution{University of Massachusetts Amherst}
  \country{USA}
}
\email{shenoy@cs.umass.edu}

\renewcommand{\shortauthors}{Adam Lechowicz et al.}

\begin{abstract}

We introduce and study spatiotemporal online allocation with deadline constraints ($\mathsf{SOAD}$), a new online problem motivated by emerging challenges in sustainability and energy.  In $\mathsf{SOAD}$, an online player completes a workload by allocating and scheduling it on the points of a metric space $(X, d)$ while subject to a deadline $T$.  At each time step, a service cost function is revealed that represents the cost of servicing the workload at each point, and the player must irrevocably decide the current allocation of work to points.  Whenever the player moves this allocation, they incur a movement cost defined by the distance metric $d(\cdot, \ \cdot)$ that captures, e.g., an overhead cost.  $\mathsf{SOAD}$ formalizes the open problem of combining general metrics and deadline constraints in the online algorithms literature, unifying problems such as metrical task systems and online search.  We propose a competitive algorithm for $\mathsf{SOAD}$ along with a matching lower bound establishing its optimality.  Our main algorithm, \textsc{ST-CLIP}, is a learning-augmented algorithm that takes advantage of predictions (e.g., forecasts of relevant costs) and achieves an optimal consistency-robustness trade-off.  We evaluate our proposed algorithms in a simulated case study of carbon-aware spatiotemporal workload management, an application in sustainable computing that schedules a delay-tolerant batch compute job on a distributed network of data centers.  In these experiments, we show that \textsc{ST-CLIP} substantially improves on heuristic baseline methods.

\end{abstract}

\maketitle

\section{Introduction}
\label{sec:intro}

{\color{blue}
We introduce and study spatiotemporal online allocation with deadline constraints (\SOAD), an online optimization problem motivated by emerging challenges in sustainability.
In \SOAD, an online player aims to service a workload by allocating and scheduling it on one of $n$ points %
represented by a metric space $(X, d)$. They pay a service cost at a point if the workload is currently being serviced there, a spatial movement cost defined by the metric whenever they change the allocation between points, and a temporal switching cost when bringing the workload into or out of service at a single point.  
The workload arrives with a \textit{deadline constraint} $T$ that gives the player some \textit{slack}, i.e., the workload can be \textit{paused} for some time to avoid high cost periods without violating the constraint. %
}

\SOAD builds on a long history of related problems in online algorithms.
In particular, two lines of work share specific features in common with our setting.  One line of work focuses on metrical task systems (\MTS) and smoothed online convex optimization (\SOCO), where problems consider online optimization with movement costs over general metrics, but do not accommodate long-term constraints, such as deadlines~\cite{Borodin:92, Koutsoupias:09, chenSmoothedOnlineConvex2018a, bubeckChasingNestedConvex2019, Bubeck:21MTSTrees, Bansal:22, bubeckRandomizedServerConjecture2022}. A complementary line of work is that of online search problems with long-term constraints, such as one-way trading (\OWT) and online knapsack -- these problems enforce that a player's cumulative decisions satisfy a constraint over the entire input, but do not consider general metric (decision) spaces or movement costs~\cite{ElYaniv:01, Lorenz:08, Mohr:14, SunZeynali:20}.
\SOAD extends both \MTS/\SOCO-type problems and \OWT-type problems by simultaneously considering general metric movement/switching costs and deadline (i.e., long-term) constraints.

{\color{blue}
For many applications, the underlying problem to be solved often requires a model with \textit{both} smoothed optimization (i.e., movement costs) \textit{and} deadline constraints.  Furthermore, for an application such as carbon-aware workload management in data centers, where the spatial movement cost corresponds to, e.g., network delays (see \autoref{sec:examples}), it is necessary to consider a general metric space, since pairwise network latencies do not necessarily correspond to simple distances such as Euclidean (geographic) distances.
The question of whether it is possible to design competitive online algorithms in this combined setting has remained open for over a decade, with theoretical progress emerging only in the last few years in special cases such as the unidimensional setting, in $\ell_1$ vector spaces, or with different performance metrics such as regret~\cite{Jenatton:16, Badanidiyuru:18, Lechowicz:23, Lechowicz:24, Bostandoost:24, Lechowicz:24CFL}. 
This work seeks to close this gap by answering the following question:
}
\begin{center} 
{\color{blue}
\textit{Is it possible to design online algorithms for \SOAD 
that manage the challenges of \textbf{general metrics} and provide competitive guarantees
without violating the \textbf{deadline constraint}?
}
}
\end{center}

\noindent It is well known that problems related to \SOAD, such as \MTS and \OWT, are difficult in the sense that their competitive ratios scale in the size of the decision space or the ratio between maximal and minimal prices. However, these pessimistic lower bounds hold in the worst case, while in practice a decision-maker can often leverage data-driven machine learning approaches to obtain algorithms that perform better empirically. Recent work in the online algorithms literature has leveraged the paradigm of learning-augmented algorithms~\cite{Lykouris:18, Purohit:18} to design and analyze algorithms that can take advantage of patterns in the input via untrusted ``advice'' (e.g., predictions from a machine learning model) without losing adversarial competitive bounds.  Such learning-augmented algorithms have been designed for precursor problems to \SOAD, including \MTS and \OWT~\cite{Antoniadis:20MTS, Christianson:22, Christianson:23MTS, SunLee:21}. 
{\color{blue}
In the \SOAD setting, supported by the availability of practical predictions for our motivating applications and a lack of learning-augmented algorithmic strategies that accommodate both general metrics and deadline constraints, we additionally consider the question:
}
\begin{center} 
{\color{blue}
\textit{ Can we design algorithms for \SOAD that integrate \textbf{untrusted advice} (such as machine-learned predictions) to further improve performance without losing worst-case guarantees?  }
}
\end{center}

\subsection{Related work}

Our results address a long-standing open problem of combining online optimization with general switching costs ($\mathsf{MTS/SOCO}$) and deadline constraints.  Although general \MTS is a famously well-studied problem in online algorithms~\cite{Borodin:92, Blum:97, Bartal:97, Bansal:10, Coester:19, Bansal:22}, it has not been studied under the general form of long-term intertemporal constraints that we consider in the \SOAD formulation.  Amongst the two lines of related work, prior work has focused on either designing \MTS-style algorithms using techniques such as mirror descent~\cite{Bubeck:21MTSTrees}, primal-dual optimization~\cite{Bansal:10}, and work functions~\cite{Abernethy:10}, or \OWT-style algorithms using techniques such as threshold-based algorithm design~\cite{Lorenz:08, Zhou:08}, pseudo-reward maximization~\cite{SunZeynali:20, Yang2021Competitive}, and protection-level policies~\cite{ElYaniv:01}.  As these distinct techniques have been tailored to their respective problem settings, there has been almost no cross-pollination between \MTS-type and \OWT-type problems until this work.
The combination of smoothed optimization and long-term constraints has drawn recent attention in the paradigm of regret analysis in problems such as bandits with knapsacks and \OCO with long-term constraints~\cite{Jenatton:16, Badanidiyuru:18}.
However, despite established connections between \MTS and online learning~\cite{Blum:97}, the problem of optimal \textit{competitive} algorithm design in general settings of this form has yet to be explored.

A select few works~\cite{Lechowicz:23, Lechowicz:24, Bostandoost:24, Lechowicz:24CFL} consider both \textit{switching costs} and deadline constraints in a competitive regime, although they are restricted to special cases such as unidimensional decisions or $\ell_1$ metrics.  
{\color{blue}
Due to these assumptions, their results and algorithms fail to capture the general problem that we consider. As just one example, all of these works assume that the switching cost is only temporal, in the sense that the online player pays the same cost whether they are switching into or out of a state that makes progress towards the deadline constraint.  
This assumption is overly restrictive because it cannot accommodate switching cost situations that may arise in motivating applications, e.g., the case where the player chooses to move between points of the metric while simultaneously switching into an \ON state.
}

Our work also contributes to the field of learning-augmented algorithms, 
designed to bridge the performance of untrusted advice and worst-case competitive guarantees \cite{Lykouris:18, Purohit:18}.
{\color{blue}
Learning-augmented design has been studied in many online problems including ski rental \cite{Wei:20}, bipartite matching \cite{jinOnlineBipartiteMatching2022}, and several related problems including \MTS/\SOCO and \OWT~\cite{Antoniadis:20MTS, Christianson:22, SunLee:21, Lee:24, Lechowicz:24, Lechowicz:24CFL}.
}
For \MTS/\SOCO, a dominant algorithmic paradigm is to adaptively combine the actions of a robust decision-maker and those of e.g., a machine-learning model~\cite{Antoniadis:20MTS, Christianson:22}; optimal trade-offs between robustness and consistency have also been shown in the case of general metrics~\cite{Christianson:23MTS}.  For \OWT and $k$-search, several works have given threshold-based algorithms incorporating predictions that are likewise shown to achieve an optimal robustness-consistency trade-off~\cite{SunLee:21, Lee:24}. 
The advice models for these two tracks of literature are quite different and lead to substantially different algorithms -- namely, online search problems typically assume that the algorithm receives a prediction of, e.g., the \textit{best price}, while \MTS/\SOCO typically consider \textit{black-box advice} predicting the optimal decision at each time step.  Amongst the limited prior literature that considers learning-augmentation in problems with switching costs and deadline constraints~\cite{Bostandoost:24, Lechowicz:24, Lechowicz:24CFL}, both advice models have been considered, underscoring the challenge of %
the \SOAD setting, where the optimal choice of advice model (and corresponding design techniques) is not obvious \emph{a priori}.

\subsection{Contributions}

{\color{blue}
Our main technical contributions make progress on a longstanding problem in online optimization that models emerging practical problems in areas such as sustainability.
Our algorithms and lower bounds for \SOAD  are the first results to consider competitive analysis for deadline-constrained problems on general metrics.  
We obtain positive results for both of the questions posed above under assumptions informed by practice.  In particular, we provide the first competitive algorithm, \PCM (\textbf{P}seudo-\textbf{C}ost \textbf{M}inimization, see \sref{Algorithm}{alg:pcm}), %
}
for this type of problem in \autoref{sec:pcm}, and show that it achieves the best possible competitive ratio up to $\log$ factors that result from the generality of the metric.  Surprisingly, the competitive upper bound we prove for \SOAD (\autoref{thm:etaCompPCM}) compares favorably against known strong lower bounds for precursor problems such as \MTS and \OWT; which suggests an insight that additional structure imposed by constraints can actually facilitate competitive decision making, despite the added complexity of the general metric and deadline constraint.  
To achieve this result, we develop theoretical tools from both lines of related work that help us tackle challenging components of \SOAD.  
For instance, we leverage randomized metric embeddings and optimal transport to endow a general metric $(X, d)$ with a structure that facilitates analysis.  
From the online search literature, we leverage techniques that balance the trade-off between cost and constraint satisfaction, specifically generalizing these ideas to operate in the more complex metric setting necessitated by \SOAD.

{\color{blue}
In \autoref{sec:clip}, we introduce our main learning-augmented algorithm, \CarbonClipper, which integrates black-box decision advice based on, e.g., machine-learned predictions to significantly improve performance without losing worst-case competitive bounds.
}
In our approach, we first prove an impossibility result on the robustness-consistency trade-off for any algorithm.  Using an adaptive optimization-based framework first proposed by~\cite{Lechowicz:24CFL}, we design an algorithm that combines the theoretical tools underpinning \PCM in concert with a constraint hedging against worst-case service costs and movement costs that threaten the desired consistency bound.  This ensures \CarbonClipper attains the optimal trade-off (up to $\log$ factors) in general metric spaces.

{\color{blue}
In \autoref{sec:timevarying}, motivated by real-world applications where the movement cost between points of the metric may not be constant,
we present a generalization of \SOAD where distances are allowed to be \textit{time-varying} and show that our algorithms extend to this case.
Finally, in \autoref{sec:exp}, we evaluate our algorithms in a case study of carbon-aware spatiotemporal workload management (see \autoref{sec:examples}) on a simulated global network of data centers.  We show that \CarbonClipper is able to leverage \textit{imperfect} advice and significantly improve on heuristic baselines for the problem.
}

\section{Problem Formulation, Motivating Applications, Challenges, and Preliminaries} 
\label{sec:problem}
{\color{blue}
In this section, we introduce the spatiotemporal online allocation with deadline constraints (\SOAD) problem and provide motivating applications as examples.  We also discuss some intrinsic challenges in \SOAD that prevent the direct application of existing techniques, and introduce relevant preliminaries from related work.
}

\subsection{Spatiotemporal online allocation with deadline constraints (\SOAD)}\label{sec:prob}

\noindent\textbf{Problem statement.}
{\color{blue}
Consider a decision-maker that manages $n$ points %
defined on a metric space $(X,d)$, where $X$ denotes the set of points and $d(u, \ v)$ denotes the distance %
between any two points $u,v \in X$.
In a time-slotted system, the player aims to complete a unit-size workload before a deadline $T$ while minimizing the total service cost by allocating the workload across points and time. 
}

\textit{Allocation definition. \ }
{\color{blue}
The decision-maker specifies a spatial allocation to one of the points $u \in X$.  At the chosen point, they also make a temporal allocation that \textit{fractionally} divides the allocation between two states, $\ON^{(u)}$ and $\OFF^{(u)}$, where the allocation to $\ON^{(u)}$ represents the amount of resources actively servicing the workload.
Let $\mbf{x}_t := \{ {x}_t^{\ON^{(u)}}, x_t^{\OFF^{(u)}} \}_{u\in X}$ denote the allocation decision at time $t$ across all points and states, %
where ${x}_t^{\ON^{(u)}}$ and $x_t^{\OFF^{(u)}}$ denote the allocation to the \ON and \OFF states at point $u$, respectively.
The feasible set for this vector allocation is given by $\mathcal{X} :=\{ \mbf{x}\subseteq [0,1]^{2n}: {x}^{\ON^{(u)}} + x^{\OFF^{(u)}} \in \{0,1\}, \forall u\in X, \Vert \mbf{x}\Vert_1 = 1\}$. %
}

\textit{Deadline constraint. \ } Let $c(\mbf{x}_t): \mathcal{X} \rightarrow [0,1]$ denote a constraint function that is known to the decision-maker and models the fraction of the workload completed by an allocation $\mbf{x}_t$. 
{\color{blue}
Specifically, we let $c(\mbf{x}_t) = \sum_{u \in X} c^{(u)} \cdot {x}_t^{\ON^{(u)}}$, where ${c}^{(u)}$ is a positive \textit{throughput constant} that encodes how much of the workload is completed 
during one time slot with a full allocation to the state $\ON^{(u)}$.
}
Across the entire time horizon, the decision-maker is subject to a \textit{deadline constraint} stipulating that the cumulative allocations must satisfy $\sum^T_{t=1} c(\mbf{x}_t) \ge 1$. This encodes the requirement that sufficient allocations must be assigned to the $\ON$ states to finish a unit-size workload by the deadline $T$. 
{\color{blue}
The unit size is w.l.o.g. by scaling $c(\cdot)$ appropriately -- e.g., if a workload doubles in size, $c(\cdot)$ is scaled by a factor of $\nicefrac{1}{2}$ to reflect this.
}

\textit{Service and switching costs. \ } %
{\color{blue}
At each time $t$, the cost of allocation $\mbf{x}_t$ consists of a \textit{service cost} ${f_t(\mbf{x}_t) = \sum_{u \in X} f_t^{(u)} \cdot {x}_t^{\ON^{(u)}}}$ for allocations to \ON states, where $f_t^{(u)}$ represents the service cost of point $u$ at time $t$; and a \textit{switching cost} $g(\mbf{x}_t, \mbf{x}_{t-1})$ that includes a spatial movement cost of moving the allocation between points and a temporal switching cost incurred between \ON and \OFF states within one point.
Specifically, whenever the decision-maker changes the allocation across points, they pay a movement cost $d(u_{t-1}, u_{t})$, where $u_{t'} = \{u\in X: {x}^{\ON^{(u)}}_{t'} + x^{\OFF^{(u)}}_{t'} = 1\}$ is the location of the allocation at time $t'$ (in \autoref{sec:timevarying}, we give a generalization where $d(\cdot, \ \cdot)$ also varies with time).
}
{\color{blue}
Within each point, the decision-maker pays a switching cost $\Vert \mbf{x}_t - \mbf{x}_{t-1}\Vert_{\ell_1(\boldsymbol{\beta})} = \sum_{u\in X} \beta^{(u)} |{x}^{\ON^{(u)}}_{t} - {x}^{\ON^{(u)}}_{t-1}|$, where $\beta^{(u)}$ is the switching overhead factor when changing the \ON \ / \OFF allocation %
at point $u$. The overall switching cost is $g(\mbf{x}_t, \mbf{x}_{t-1}) \coloneqq d(u_{t-1}, u_{t}) + \Vert \mbf{x}_t - \mbf{x}_{t-1}\Vert_{\ell_1(\boldsymbol{\beta})}$, and $g$ is known in advance.
}
The decision-maker starts (at $t=0$) with a full allocation at some $\OFF$ state, and they \textit{must} end (at $t=T$+1) with their full allocation at any $\OFF$ state.

\textit{Spatiotemporal allocation with deadline constraints. \ } %
The objective of the player is to minimize the total cost while satisfying the workload's deadline constraint.
Let $\mathcal{I}:=\{f_t(\cdot)\}_{t\in[T]}$ %
denote an input sequence of \SOAD. For a given $\mathcal{I}$, the offline version of the problem can be formulated as:
\begingroup
\allowdisplaybreaks
\begin{align} 
[\SOAD]\quad\min_{\{\mbf{x}_t\}_{ t \in [T] }} &\underbrace{ \ \sum\nolimits_{t=1}^T f_t( \mbf{x}_t ) }_{\text{Service cost}} \ + \underbrace{ \sum\nolimits_{t=1}^{T+1} g \left( \mbf{x}_t, \mbf{x}_{t-1} \right) }_{\text{Switching cost (e.g., overhead)}} 
\text{s.t.}\quad \underbrace{\sum\nolimits_{t=1}^T c(\mbf{x}_t) \geq 1,}_{\text{Deadline constraint}} \label{align:obj} \quad \mbf{x}_t \in \mathcal{X}.
\end{align}
\endgroup
In the offline setting, we note that the above formulation is convex, implying that it can be solved efficiently using, e.g., iterative methods.
However, our aim is to design an online algorithm that chooses an allocation $\mbf{x}_t$ for each time $t$ without knowing future costs $\{f_{t'}(\cdot)\}_{t'> t}$. 

{\color{blue}
\subsection{Motivating applications} \label{sec:examples}

In this section, we give examples of applications that motivate the \SOAD problem. 
We are particularly motivated by an emerging class of \textit{carbon-aware sustainability problems} that have attracted significant attention in recent years -- see the first example.  
\SOAD also generalizes canonical online search problems such as one-way trading~\cite{ElYaniv:01}, making it broadly applicable across domains as we discuss in detail below.  We focus on key components of each setting without exhaustively discussing idiosyncrasies, although we mention some extensions of \SOAD in each setting.  We defer a few more problem examples to \autoref{apx:examples}.

\noindent\textbf{Carbon-aware workload management in data centers. \ }
Consider a delay-tolerant compute job scheduled on a distributed network of data centers with the goal of minimizing the total carbon (CO$_2$) emissions of the job.  Each job arrives with a deadline $T$ that represents its required completion time, typically in minutes or hours.  Service costs $f_t^{(u)}$ represent the carbon emissions of executing a workload at full speed in data center $u$ at time $t$.  
The metric space $(X,d)$ and the spatial movement cost $d(u_{t}, u_{t-1})$ capture the \textit{carbon emissions overhead} of geographically migrating a compute workload between data centers.
The temporal switching cost $\Vert \mbf{x}_t - \mbf{x}_{t-1}\Vert_{\ell_1(\boldsymbol{\beta})}$ captures the carbon emissions overhead %
due to reallocation of resources (e.g., scaling up/down) within a single data center~\cite{Hanafy:2023:War}.  Finally, the constraint function $c( \mbf{x}_t )$ encodes what fraction of the job is completed by a given scheduling decision $\mbf{x}_t$.
The topic of shifting compute in time and space to decrease its carbon footprint has seen significant attention in recent years~\citep{Chien:21, bashir2021enabling, Wiesner:21, radovanovic2022carbon, acun2022holistic, Hanafy:23, Sukprasert:23, Chien:23}, particularly for compute needs with long time scales and flexible deadlines (e.g., ML training), which realize the most benefits from temporal shifting.
These works build on a long line of work advancing sustainable data centers more broadly (e.g., in terms of energy efficiency), some of which leverage techniques from online optimization~\cite{liu2011greening, Liu:11, lin2012dynamic, liu2012renewable, Lin:12, Liu:13, Gan:14, Wierman:14, Shuja:16, Nadjaran:17, Dou:17, zhang2017optimal, gupta2019combining, Zheng:20}.
We comment that \SOAD is the first online formulation that can model the necessary combined dimensions of spatial and temporal switching costs with deadlines.  
However, we also note that some aspects of this problem may not yet be fully captured by \SOAD -- for instance, it might be necessary to consider \textit{multiple concurrent} batch workloads rather than a single one, resource contention, data center capacity constraints, or processing delays caused by migration that are not fully captured by the current formulation.  In this sense, \SOAD serves as a building block that could accommodate extensions to consider these aspects of the problem -- we consider one such extension in \autoref{sec:timevarying}.

\noindent\textbf{Supply chain procurement. \ } Consider a firm that must source a certain amount of a good before a deadline $T$, where the good is stored in several regional warehouses~\cite{Ma:19}.
Service costs $f_t^{(u)}$ are proportional to the per unit cost of purchasing and transporting goods from warehouse $u$ during time slot $t$.
The metric space $(X,d)$ and the spatial movement cost $d(u_{t}, u_{t-1})$ capture the overhead of switching warehouses, including, e.g., personnel costs to travel and inspect goods. 
The temporal switching cost $\Vert \mbf{x}_t - \mbf{x}_{t-1}\Vert_{\ell_1(\boldsymbol{\beta})}$ captures the overhead of stopping or restarting the purchasing and transport of goods from a single warehouse.  Finally, the constraint function $c( \mbf{x}_t )$ dictates how many goods can be shipped during time $t$ according to purchasing decision $\mbf{x}_t$.
We note that in practice, the firm may need to purchase from multiple warehouses concurrently -- they are restricted to purchase from only one in the strict \SOAD formulation given above, but this can be relaxed without affecting the algorithms or results that we present in the rest of the paper.

\noindent\textbf{Mobile battery storage. \ }  Consider a mobile battery storage unit (e.g., a battery trailer~\cite{Moxion}) that must service several discharge locations by, e.g., the end of the day (deadline $T$), with the goal of choosing when and where to discharge based on the value that storage can provide in that time and place.  Service costs $f_t^{(u)}$ can represent the value of discharging at location $u$ during time slot $t$ (lower is better).  The metric space $(X,d)$ and the spatial movement cost $d(u_{t}, u_{t-1})$ capture the \textit{overhead} (e.g., lost time or fuel cost) due to moving between locations, and the temporal switching cost $\Vert \mbf{x}_t - \mbf{x}_{t-1}\Vert_{\ell_1(\boldsymbol{\beta})}$ captures the overhead of connecting or disconnecting from a discharge point at a single location, including e.g., cell degradation due to cycling~\cite{Zhang:06}. %
The constraint function $c( \mbf{x}_t )$ captures how much energy has been discharged during time $t$ according to decision $\mbf{x}_t$.  The problem of maximizing the utility that mobile battery storage provides may be useful in e.g., emergency relief situations where the main power grid has gone down.
A light extension of \SOAD may capture the case where the travel time from point to point significantly affects the feasible discharge time at the destination; such an extension would factor any lost time into the constraint function $c( \mbf{x}_t )$.

}

{\color{blue}
\subsection{Background \& assumptions}\label{sec:bkgd}

In this section, we provide background on the competitive analysis used throughout the paper and formalize our assumptions on the costs in \SOAD, motivated by the structure of applications. 

\noindent\textbf{Competitive analysis. \ }
We evaluate the performance of an online algorithm for this problem via the \textit{competitive ratio}~\cite{Manasse:88, Borodin:92}:
let $\OPT(\mathcal{I})$ denote the cost of an optimal offline solution for instance $\mathcal{I}$, and let $\ALG(\mathcal{I})$ denote the cost incurred by running an online algorithm $\ALG$ over the same instance. Then the competitive ratio of $\ALG$ is defined as $
\textnormal{CR}(\ALG) \coloneqq \sup_{\mathcal{I} \in \Omega} \nicefrac{ \ALG(\mathcal{I}) }{ \OPT(\mathcal{I}) } \eqqcolon \eta
$, where $\Omega$ is the set of all feasible inputs for the problem, and \ALG is said to be $\eta$-\textit{competitive}.
Note that $\textnormal{CR}(\ALG)$ is always at least $1$, and a \textit{smaller} competitive ratio implies that the online algorithm is guaranteed to be \textit{closer} to the offline optimal solution.  
If $\ALG$ is randomized, we replace the cost $\ALG(\mathcal{I})$ with the expected cost (over the randomness of the algorithm).

}
\begin{ass}
\label{ass:hitting-cost}
{\color{blue}
    Each service cost function $f_t(\cdot)$ satisfies bounds, i.e., ${{f}_t^{(u)} \in [{c}^{(u)} L, {c}^{(u)} U]}$ %
    for all $\ON^{(u)} : u \in X$ and for all $t \in [T]$, where $L$ and $U$ are known, positive constants. 
}
\end{ass}\vspace{-0.5em}
{\color{blue}
\noindent This assumption encodes the physical idea that there exist upper and lower bounds on the service cost faced by the player.  $L$ and $U$ are normalized by the throughput coefficient $c^{(a)}$ so that they can be independent of the amount of the deadline constraint satisfied by servicing the workload at a specific point $a \in X$.
}
\begin{ass}
\label{ass:switching-cost}
The temporal switching cost factor is bounded by $\beta^{(a)} \le \tau {c}^{(a)}$ for all $a \in X$. The normalized spatial distance between any two \ON states is upper bounded by $D$, i.e., $D =  \sup_{u, v \in X : u \not = v} \frac{d( u, \ v )}{\min \{ c^{(u)}, c^{(v)} \} }$. {\color{blue} Further, we assume that $D + 2\tau \leq U-L$.  }
\end{ass}\vspace{-0.5em}
{\color{blue}
In \SOAD, $\tau$ represents the worst-case overhead of stopping, starting, or changing the rate of service %
at a single point, while $D$ represents the worst-case overhead incurred by moving the allocation between the two most distant points.  In, e.g., the applications mentioned above, we typically expect $\tau$ to be much smaller than $D$.
Note that there are two \ON states with a normalized distance greater than $U-L$, one of these states should be pruned from the metric, because moving the allocation between them would negate any benefit to the service cost. 
}
Specifically, recall that $L$ and $U$ give bounds on the total service cost of the workload, and consider an example of two points $u, v$ that are normalized distance $D' > U-L$ apart, with ${c}^{(u)} = {c}^{(v)} = 1$, and $\tau = 0$.
Let the starting point $u$ and other point $v$ have service costs that are the worst-possible and best-possible (i.e., $U$ and $L$), respectively. Observe that if the player opts to move the allocation to $v$, they incur a movement cost of $D'$ and a total objective of $L + D'$.  In contrast, if they had stayed at point $u$, their total objective would be $U$, which is $< L + D'$ by assumption.

\subsection{Connections to existing models and challenges}\label{sec:connections}

As discussed in the related work, %
\SOAD exhibits similarities to two long-standing tracks of literature in online algorithms; however, \SOAD is distinct from and cannot be solved by existing models.

The first of these is the work on the classic \textit{metrical task systems} (\MTS) problem introduced by \citet{Borodin:92} and related forms, including smoothed online convex optimization (\SOCO)~\cite{chenSmoothedOnlineConvex2018a}. In these works, an online player makes decisions with the objective of minimizing the sum of the service cost and switching cost. However, standard algorithms for $\mathsf{MTS/SOCO}$ are not designed to handle the type of long-term constraints (e.g., such as deadlines) that \SOAD considers. 
{\color{blue}
Moreover, standard \MTS and \SOCO algorithms are designed to address either movement cost over points (i.e., $d(a_{t-1},a_t)$) or temporal switching cost (i.e., $\Vert \mbf{x}_t - \mbf{x}_{t-1}\Vert_{\ell_1(\boldsymbol{\beta})}$). \SOAD requires a spatiotemporal allocation that considers both types of switching costs simultaneously.  
}

On the other hand, the \textit{one-way trading} (\OWT) problem introduced by \citet{ElYaniv:01} and related online knapsack problems~\cite{Zhou:08, SunZeynali:20} consider online optimization with long-term constraints. To address these constraints, canonical algorithms use techniques such as threat-based or threshold-based designs to "hedge" between the extremes of quickly fulfilling the constraint and waiting for better opportunities that may not materialize. However, these works do not consider switching costs and rarely address multidimensional decision spaces.

The design of algorithms via competitive analysis for \MTS/\SOCO with long-term constraints has long been an open problem, and has seen only limited progress in a few recent works. 
{\color{blue}
These works have primarily leveraged techniques from online search, generalizing unidimensional problems such as $k$-search~\cite{Lechowicz:23} and \OWT~\cite{Lechowicz:24, Bostandoost:24} to additionally consider a \textit{temporal switching cost}. 
}
A recent study considered a generalization to the multidimensional case, introducing the problem of convex function chasing with a long-term constraint (\CFL)~\cite{Lechowicz:24CFL}.  The authors of this work propose a competitive algorithm for \CFL, although their results depend on a very specific metric structure ($\ell_1$ vector spaces or weighted star metrics), which cannot be used to model the general spatial and temporal switching costs in \SOAD.  
{\color{blue}
Furthermore, even in the multidimensional case, these existing works that consider switching costs and long-term constraints assume a \textit{single source} of switching costs, i.e., that the cost to switch into a state making progress towards the long-term constraint is the same as the cost to switch out of that state.  
This type of structure and analysis fails in the \SOAD setting due to the generality of the metric (i.e., moving to a new point while ``switching \ON'' to complete some of the workload and ``switching \OFF'' within the new point have different costs). 
}

{\color{blue}
Worst-case competitive guarantees provide robustness against non-stationarities in the underlying environment, which may be desired for applications such as carbon-awareness due to the demonstrated non-stationarity associated with such signals~\cite{Sukprasert:23}.
However, algorithms that are purely optimized for worst-case guarantees are often overly pessimistic. %
To address this, we study learning-augmented design in the \SOAD setting, which brings additional challenges.
}
For instance, existing learning-augmented results for \MTS/\SOCO and \OWT each leverage distinct algorithm design strategies based on different advice models that separately address features of their problem setting (i.e., switching costs, deadline constraints).  These prior results naturally prompt questions about how to incorporate ML advice in a performant way that can simultaneously handle the generality of the switching costs in \SOAD while ensuring that the deadline constraint is satisfied.

\subsection{Preliminaries of technical foundations}\label{sec:prelims}

In this section, we introduce and discuss techniques from different areas of the online algorithms literature that we use in subsequent sections to address the \SOAD challenges discussed above.

\smallskip

\noindent\textbf{Unifying arbitrary metrics. \ } 
The generality of the metric space in \SOAD is a key challenge that precludes the application of algorithm design techniques from prior work requiring specific metric structures.
{\color{blue}
In classic \MTS, the online player also makes decisions in an arbitrary metric space $(X, d)$, which poses similar challenges for algorithm design.
A key result used to address this is that of \citet{Fakcharoenphol:07}, who show that for any $n$-point metric space $(X,d)$,  there exists a probabilistic embedding into a \textit{hierarchically separated tree} (HST) $\calT = (V,E)$ with at most $O( \log n )$ distortion, i.e., $\mathbb{E}_{\calT}\left[ d^{(\calT)}(u,v) \right] \leq O(\log n ) d(u,v)$ for any $u, v \in X$. %
For \MTS, this result implies that any $\eta$-competitive algorithm for \MTS \textit{on trees} is immediately $O(\log n) \eta$-competitive in expectation for \MTS on general $n$-point metrics, exactly by leveraging this embedding.
}

To solve \MTS using such a tree, \citet{Bubeck:21MTSTrees} consider a randomized algorithm on the leaves of $\calT$ denoted by $\mathcal{L}$, where the nodes of $\mathcal{L}$ correspond to points in $X$. 
{\color{blue}
This randomized metric space is given by $(\Delta_{\mathcal{L}}, \ \mathbb{W}^1)$, where $\Delta_{\mathcal{L}}$ is the probability simplex over the leaves of $\calT$, and $\mathbb{W}^1$ denotes the \textit{Wasserstein-1 distance}.  For two probability distributions $\mbf{p}, \mbf{p}' \in \Delta_{\mathcal{L}}$, the Wasserstein-1 distance is defined as $\mathbb{W}^1 (\mbf{p}, \mbf{p}') \coloneqq \min_{\pi_{x, x'} \in \Pi(\mbf{p}, \mbf{p}')} \mathbb{E} \left[ d( x , x' ) \right]$, where $\Pi(\mbf{p}, \mbf{p}')$ is the set of transport distributions over $\mathcal{L}^2$ with marginals $\mbf{p}$ and $\mbf{p'}$.  
A randomized algorithm that produces marginal distributions $\mbf{p} \in \Delta_{\mathcal{L}}$ then \textit{couples} consecutive decisions according to the optimal transport plan $\pi_{x,x'}$ defined by Wasserstein-1. 
\citet{Bubeck:21MTSTrees} further show that $(\Delta_{\mathcal{L}}, \ \mathbb{W}^1)$ is bijectively isometric to %
a convex set $K$ with a weighted $\ell_1$ norm $\norm{\cdot}$ based on edge weights in the tree.
}

\textit{Metric tree embedding for \SOAD. \ } 
{\color{blue}
Prior approaches by \citet{Fakcharoenphol:07} and \citet{Bubeck:21MTSTrees} are able to manage the spatial movement cost from moving allocation between points.
} %
To further accommodate the temporal switching cost between $\ON$ and $\OFF$ states within a single point, we develop a probabilistic tree embedding $\calT = (V,E)$ and the corresponding vector space $(K, \norm{\cdot})$ in the following \sref{Definition}{dfn:tree} and \sref{Definition}{dfn:vs}, respectively.

\begin{dfn}[Probabilistic tree embedding $\calT = (V,E)$ for \SOAD] \label{dfn:tree}
{\color{blue}
Let $(X,d)$ denote the underlying metric space over $n$ points, and let $\calT'$ denote an HST constructed on the points of $X$ according to the method by~\citet{Fakcharoenphol:07}. 
}
Label the leaves of $\calT'$ according to the $n$ $\ON$ states, one for each point.  
{\color{blue}
Then the final tree $\calT$ is constructed by adding $n$ edges and $n$ nodes to just the leaves of $\calT'$ -- each new node represents the corresponding $\OFF$ state at that point, and the new edge is weighted according to the temporal switching cost at that point (i.e., $\beta^{(u)}$).  
The resultant ``state set'' $\mathcal{S}$ includes both the leaves of $\calT$ %
(\OFF states) and their immediate predecessors (\ON states). 
}

\end{dfn}
\begin{figure*}[h]
    \minipage{\textwidth}
    \centering\vspace{-1em}
    \includegraphics[width=0.8\linewidth]{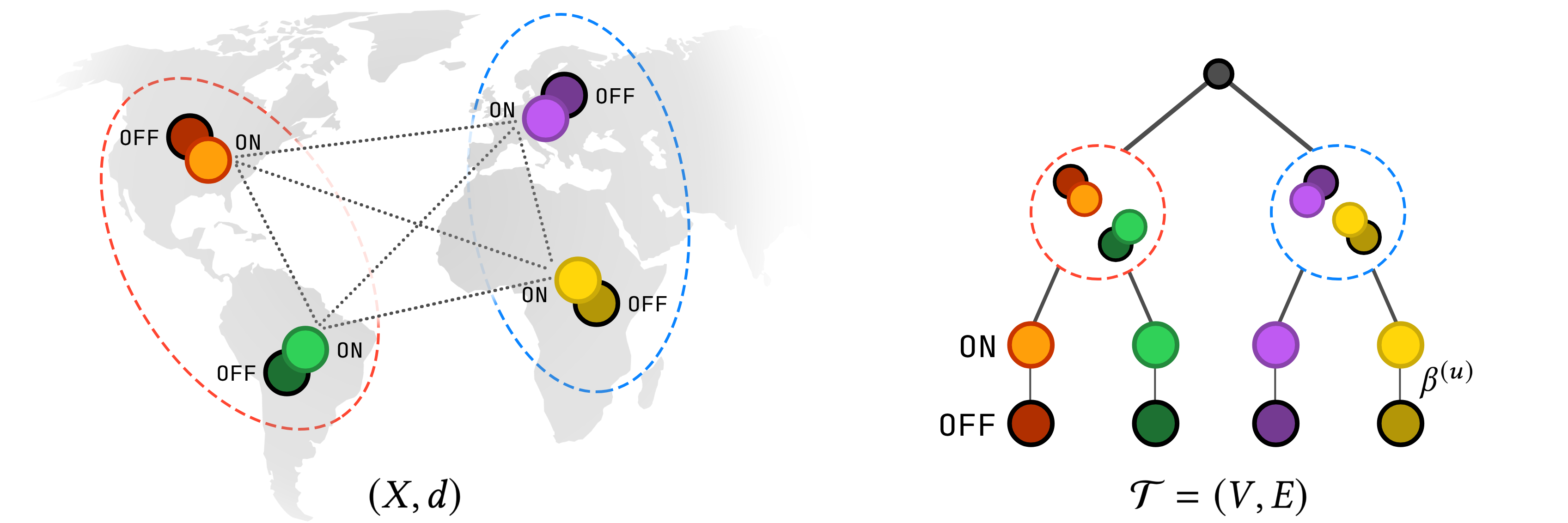}\vspace{-1em}
    \caption{An illustration of the probabilistic tree embedding (\sref{Def.}{dfn:tree}) for the motivating application.  {\color{blue} Points in the metric are represented as pairs of circles on the left.}  On the right, the first three levels of the tree approximate the metric space $(X,d)$~\cite{Fakcharoenphol:07}, and the last level captures the \ON \ / \OFF structure of \SOAD. }\vspace{-1em}
    \endminipage\hfill
\end{figure*}
\noindent {\color{blue}
Note that $\calT$ preserves distances between the points of $X$ with expected $O(\log n)$ distortion, while the switching cost between \ON and \OFF states at a single point is preserved exactly.
}
Our competitive algorithm (see \autoref{sec:pcm}) operates in a vector space constructed according to this HST embedding.

\begin{dfn}[Vector space $(K, \norm{\cdot})$]
\label{dfn:vs}
{\color{blue}
Given a hierarchically separated tree $\calT = (V,E)$ constructed according to \sref{Definition}{dfn:tree}, with root $r \in V$, state set $\mathcal{S} \subseteq V$, and leaf set $\mathcal{L} \subset \mathcal{S}$, let $P^{(u)}$ denote the parent of any node $u \in V \setminus r$.
}
Construct the following set:
{\small
\setlength{\abovedisplayskip}{2pt}
\setlength{\belowdisplayskip}{2pt}
{\color{blue}
\begin{equation*}
K \coloneqq \left\{ \mbf{k} \in \mathbb{R}^{\vert V \vert} : \mbf{k}^{(r)} = 1, \text{ and } \ \forall u \in V \setminus \mathcal{L}, \mbf{k}^{(u)} = \sum_{v : P^{(v)} = u} \mbf{k}^{(v)}, \text{ and } \ \forall u \in \mathcal{S}, \mbf{k}^{(u)} \in [0,1] \right\}.
\end{equation*}}}
\noindent Let $\mathbf{w}$ be a non-negative weight vector on vertices of $\calT$, where $\mbf{w}^{(r)} = 0$ and $\mbf{w}^{(u)} > 0$ for all $u \in V \setminus r$. 
{\color{blue}
Recall that the edges of $\mathcal{T}$ are weighted -- we let $\mbf{w}^{(u)}$ denote the weight of edge $\{ P^{(u)}, u \}$, and define a weighted $\ell_1$ norm as $\norm{\mbf{k}} \coloneqq \sum_{u \in V} \mbf{w}^{(u)} \vert \mbf{k}^{(u)} \vert$, for any $\mbf{k} \in K$.
}
{\color{blue}
Finally, we define a linear map from $\Delta_{\mathcal{S}}$ to $K$ (and its corresponding inverse), given by a matrix map $\Phi : \mathbb{R}^{2n} \to \mathbb{R}^{\vert V \vert}$.  
}
We let $A^{(u)}$ denote the set of node $u$'s ancestors in $\calT$, and (with a slight abuse of notation) let $\mathcal{S}(i) : \{ 0, \dots, 2n \} \to \mathcal{S}$ and $K(i) : \{ 0, \dots, \vert V \vert \} \to V$ denote indexing maps that recover the object in $\mathcal{S}$ or $V$, respectively.
Then $\Phi \in \mathbb{R}^{\vert V \vert \times 2n}$ and $\Phi^{-1} \in \mathbb{R}^{2n \times \vert V \vert}$ are defined as follows:
{\small\begin{equation*}
\begin{aligned}[c]
\Phi_{i, j} \coloneqq \begin{cases}
    1 & \text{if } \ K(i) = r\\
    1 & \text{if } \ K(i) = \mathcal{S}(j)\\ %
    1 & \text{if } \ K(i) \in \mathcal{S} \text{ and } P^{(\mathcal{S}(j))} = K(i) \\
    1 & \text{if } \ K(i) \in V \setminus \mathcal{S} \text{ and } K(i) \in A^{(\mathcal{S}(j))}\\
    0 & \text{otherwise}
\end{cases}
\end{aligned}
\hspace{1em}
\begin{aligned}[c]
\Phi^{-1}_{i, j} \coloneqq \begin{cases}
    1 & \text{if } \ \mathcal{S}(i) = K(j)\\  %
    -1 & \text{if } \ K(j) \in \mathcal{L} \text{ and } P^{(K(j))} = \mathcal{S}(i) \\
    0 & \text{otherwise}
\end{cases}
\end{aligned}
\end{equation*}}
\end{dfn}
{\color{blue}
In words, $\Phi$ maps a distribution over $\Delta_{\mathcal{S}}$ to the corresponding vector in $K$ by accumulating probability mass upwards from the leaves of $\calT$ towards the root.
$\Phi^{-1}$ reverses this by selecting the appropriate indices for $u \in \mathcal{S}$ from $K$, and recovers probabilities by subtracting the mass at the $\OFF$ state from the $\ON$ state (since the \OFF state is a leaf, the \ON state accumulates its probability in $K$).
}

\textit{Randomized algorithm for \SOAD. \ } 
{\color{blue}
We define some shorthand notation. For a decision $\mbf{k} \in K$, $\mbf{p} = \Phi^{-1} \mbf{k}$ gives a corresponding probability distribution on $\Delta_{\mathcal{S}}$.  Note that $\mathcal{X} \subset \Delta_{\mathcal{S}} \subset \mathbb{R}^{2n}$, and by linearity of expectation, the service and constraint functions $f_t( \cdot ), c( \cdot ) : \mathcal{X} \rightarrow \mathbb{R}$ remain well-defined (in expectation) on $\Delta_{\mathcal{S}}$.
Within $K$, we let $\overline{f}_t(\mbf{k}) = f_t(\Phi^{-1} \mbf{k})$ and $\overline{c}(\mbf{k}) = c(\Phi^{-1} \mbf{k})$.   For a given \textit{starting point} $s \in X$, we slightly abuse notation and let $\delta_s \in \Delta_{\mathcal{S}}$ denote the Dirac measure supported at $\OFF^{(s)}$.
}
{\color{blue} 
Recall that the \SOAD formulation specifies an allocation that is \textit{discrete} in terms of choosing a point in the metric, and \textit{fractional} in terms of the resource allocation at a given point. To capture this structure while using the embedding results discussed above, we consider a \textit{mixed setting} that is probabilistic in spatial assignment but deterministic in the \ON \ / \OFF allocation.
We state the equivalence of this setting and the fully probabilistic one below, deferring the proof to \autoref{apx:equivRand}. 
}

\begin{thm} \label{thm:equivRand}
{\color{blue}
    For a randomized \SOAD decision $\mbf{p}_t \in \Delta_{\mathcal{S}}$, the expected cost is equivalent if a point in $X$ is first chosen probabilistically and the \ON \ / \OFF probabilities at that point are interpreted as (deterministic) fractional allocations.
}
\end{thm}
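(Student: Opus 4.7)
The plan is to verify cost equivalence componentwise, using linearity of $f_t$ and $c$ for the service cost and deadline contribution, and the tree-Wasserstein isometry of \citet{Bubeck:21MTSTrees} for the switching cost. I would couple consecutive randomized decisions in both interpretations via the Wasserstein-1-optimal transport plan on the tree $\calT$, so that the expected switching cost in the fully probabilistic setting equals $\norm{\Phi \mbf{p}_t - \Phi \mbf{p}_{t-1}}$ and decomposes edge-by-edge: the ancestor edges of $\calT$ contribute the spatial distances, while the $n$ leaf edges with weights $\beta^{(u)}$ contribute $\sum_u \beta^{(u)}|p_t^{\OFF^{(u)}} - p_{t-1}^{\OFF^{(u)}}|$ (since the map $\Phi$ sends the $\OFF^{(u)}$ leaf to the single coordinate $p^{\OFF^{(u)}}$).

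The first step verifies the service and constraint terms. Under the fully probabilistic interpretation, sampling $s_t \sim \mbf{p}_t$ yields $\mathbb{E}[f_t(s_t)] = \sum_u p_t^{\ON^{(u)}} f_t^{(u)}$ since \OFF states contribute nothing. Under the mixed interpretation, sampling a point $u_t$ with probability $q_t^{(u)} \coloneqq p_t^{\ON^{(u)}} + p_t^{\OFF^{(u)}}$ and using deterministic \ON fraction $\alpha_t^{(u)} \coloneqq p_t^{\ON^{(u)}}/q_t^{(u)}$ yields expected service cost $\sum_u q_t^{(u)} \alpha_t^{(u)} f_t^{(u)} = \sum_u p_t^{\ON^{(u)}} f_t^{(u)}$. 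The identical calculation, with $c^{(u)}$ in place of $f_t^{(u)}$, handles the deadline constraint. Next, coupling the mixed-setting point samples $u_{t-1}, u_t$ via the Wasserstein-1-optimal plan between the marginals $\{q_{t-1}^{(u)}\}$ and $\{q_t^{(u)}\}$ on $X$, the expected movement cost $\mathbb{E}[d(u_{t-1}, u_t)]$ matches the ancestor-edge contribution of $\norm{\Phi \mbf{p}_t - \Phi \mbf{p}_{t-1}}$.

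The main obstacle is showing that the mixed-setting temporal cost $\mathbb{E}[\sum_u \beta^{(u)} |x_t^{\ON^{(u)}} - x_{t-1}^{\ON^{(u)}}|]$ equals the leaf-edge contribution $\sum_u \beta^{(u)}|p_t^{\OFF^{(u)}} - p_{t-1}^{\OFF^{(u)}}|$: when the sampled point changes, the mixed setting pays $\beta^{(u_{t-1})} \alpha_{t-1}^{(u_{t-1})} + \beta^{(u_t)} \alpha_t^{(u_t)}$ at the two endpoints, whereas when the point is unchanged it pays only $\beta^{(u)}|\alpha_t^{(u)} - \alpha_{t-1}^{(u)}|$. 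I would resolve this by interpreting the deterministic within-point fractional allocation as implementing the comonotone optimal transport between the conditional $\ON/\OFF$ distributions at each point, making mass flow across each leaf edge sign-consistent. Aggregating over the point coupling then reduces the expectation to exactly the leaf-edge sum, and combining with the ancestor-edge result yields the claimed equivalence.
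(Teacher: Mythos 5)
Your proposal is correct and follows essentially the same route as the paper's proof: linearity handles the service and constraint terms, the spatial marginals $\mbf{r}_t$ are coupled by the Wasserstein-1-optimal plan so the expected movement costs agree, and the temporal cost reduces in both interpretations to $\sum_{u}\beta^{(u)}\vert p_t^{\OFF^{(u)}} - p_{t-1}^{\OFF^{(u)}}\vert$ (the paper's version of your sign-consistency step writes the mixed-setting cost with location indicators $\mathbb{L}_t^{(u)}$ and passes the expectation inside the absolute value, which likewise relies on the coupled mass flows across each $\ON$/$\OFF$ edge being comonotone). One caveat: you should not route the bookkeeping through the tree norm $\norm{\Phi\mbf{p}_t - \Phi\mbf{p}_{t-1}}$, since its ancestor-edge contribution is the tree-Wasserstein distance, which matches the Wasserstein distance on $(X,d)$ only up to the $O(\log n)$ embedding distortion, whereas the theorem concerns the cost in the original metric; the paper instead works directly with $\mathbb{W}^1(\mbf{r}_t,\mbf{r}_{t-1})$ with respect to $d$, and your decomposition into a spatial part plus a per-point $\OFF$-mass part goes through verbatim once you do the same.
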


\smallskip

\noindent\textbf{Enforcing a deadline constraint using pseudo-cost. \ } 
Existing algorithms for \MTS-type problems are not designed to handle a deadline constraint while remaining competitive. For \SOAD, we draw from the \textit{pseudo-cost minimization}~\cite{SunZeynali:20} approach for online search problems, where the player is subject to a long-term \textit{buying/selling} constraint that poses similar algorithmic challenges. %

Under the pseudo-cost framework, we start by assuming that a \textit{mandatory allocation} condition exists to strictly enforce the deadline constraint.  Let $z^{(t)}$ denote the fraction of the deadline constraint satisfied (in expectation) up to time $t$ (we henceforth call this the \textbf{\textit{utilization}}). 
{\color{blue}
To avoid violating the constraint, a mandatory allocation begins at time $j$, as soon as $( T - (j+1) ) {c}^{(u)} < ( 1 - z^{(j)} ) \ \forall u \in X$, i.e., when the remaining time after the current slot would be insufficient to satisfy the constraint.
Note that in practice, $z^{(j)}$ would be replaced by the actual constraint satisfaction so far.
}
During the mandatory allocation, a cost-agnostic player takes control and makes maximal allocation decisions %
to ensure the workload is finished before the deadline.

{\color{blue}
Intuitively, the mandatory allocation complicates competitive analysis -- in the worst-case, an adversary can present the worst service cost ($U$) during the final steps.
The key idea behind pseudo-cost minimization is to rigorously characterize a \textit{trade-off} between completing the constraint ``too early'' and waiting too long (i.e., risking a mandatory allocation) using a \textit{pseudo-cost function}.
}
Such a function takes the lower and upper bounds on service cost (i.e., $L$ and $U$) as parameters, and assigns a \textit{pseudo-cost} to each increment of progress towards the constraint.  
{\color{blue}
In an algorithm, this function is used by solving a small minimization problem at each step, whose objective considers the true cost of a potential decision and an integral over the pseudo-cost function -- generating decisions using this technique creates a connection between the utilization and the best service cost encountered throughout the sequence, ensuring that the algorithm completes ``exactly enough'' of the constraint before mandatory allocation in order to achieve a certain competitive ratio against the best service cost, which is a lower bound on \OPT.}
We also note that in the learning-augmented setting, the pseudo-cost minimization problem can be combined with a \textit{consistency constraint}, as shown by~\cite{Lechowicz:24CFL}, to integrate certain forms of advice without losing the robust (i.e., competitive) qualities of the pseudo-cost.

\section{\PCM: A Competitive Online Algorithm}
\label{sec:pcm}

This section presents a randomized competitive algorithm for \SOAD that leverages the metric tree embeddings and pseudo-cost minimization design discussed above in \autoref{sec:prelims}.
{\color{blue}
We further show that our algorithm achieves a competitive ratio that is optimal for $\SOAD$ up to $\log$ factors.
}

\smallskip

\subsection{Algorithm description}

We present a randomized pseudo-cost minimization algorithm (\PCM) in \autoref{alg:pcm}. \PCM operates on the metric space $(K, \norm{\cdot})$ defined in \sref{Definition}{dfn:vs} and extends the original pseudo-cost minimization framework~\cite{SunZeynali:20} to address the setting where the decision space is given by an arbitrary convex set $K$ with distances given by $\norm{\cdot}$.

We define a pseudo-cost function $\psi(z):[0,1]\to [L, U]$, where $z$ is the utilization (i.e., the completed fraction of the deadline constraint in expectation).
Our construction of $\psi$ takes advantage of additional structure in the \SOAD setting -- 
{\color{blue}
this function depends on the parameters of the \SOAD problem, including $U, L, D$, and $\tau$ specified in \sref{Assumptions}{ass:hitting-cost} and~\ref{ass:switching-cost}. 
}
\vspace{-0.4em}
\begin{dfn}[Pseudo-cost function $\psi$ for \SOAD] \label{dfn:psi-pcm}
For a given parameter $\eta > 1$, the pseudo-cost function is defined as $\psi(z) = U - \tau + (\nicefrac{U}{\eta} - U + D + \tau) \exp(\nicefrac{z}{\eta}), z\in[0,1]$.
\end{dfn}
\vspace{-0.4em}
{\color{blue}
\noindent Given the pseudo-cost function from \sref{Def.}{dfn:psi-pcm}, \PCM solves a minimization problem \eqref{eq:pseudocostPCM} at each step $t$ to generate a decision $\mbf{k}_t \in K$; 
}
the objective of this problem is to minimize a combination of the per-step cost plus a pseudo-cost term that encourages (deadline) constraint satisfaction.
At a high level, the $\psi$ term enforces that $\mbf{k}_t$ satisfies ``exactly enough'' of the deadline constraint (in expectation) to make adequate progress and maintain an expected competitive ratio of $\eta$ against the current estimate of $\OPT$, without ``overbuying'' and preventing better costs from being considered in the future.
{\color{blue}
At a glance, it is not obvious that the pseudo-cost minimization problem is straightforward to actually solve in practice.  In the following, we show that \eqref{eq:pseudocostPCM} is a convex minimization problem.
}
\vspace{-0.4em}
{\color{blue}
\begin{theorem}\label{thm:PCMconvex}
    Under the assumptions of \SOAD, the pseudo-cost minimization \eqref{eq:pseudocostPCM} is a convex minimization problem.
\end{theorem}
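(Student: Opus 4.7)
The plan is to verify convexity of each component of \eqref{eq:pseudocostPCM} separately and then combine them. First, I would confirm that the feasible region $K$ is convex: by \sref{Definition}{dfn:vs}, $K$ is carved out by the affine equalities $\mbf{k}^{(r)} = 1$ and $\mbf{k}^{(u)} = \sum_{v : P^{(v)} = u} \mbf{k}^{(v)}$, together with the box constraints $\mbf{k}^{(u)} \in [0,1]$ for $u \in \mathcal{S}$. Each of these defines a convex set, and convexity is preserved under intersection.

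Next, I would handle the service and switching terms, which should be straightforward. The per-step service cost $\overline{f}_t(\mbf{k}_t) = f_t(\Phi^{-1}\mbf{k}_t)$ is linear in $\mbf{k}_t$: $f_t(\mbf{x}) = \sum_{u\in X} f_t^{(u)} x^{\ON^{(u)}}$ is linear in $\mbf{x}$, and $\Phi^{-1}$ is a linear map, so the composition is linear and hence convex. The switching cost $\norm{\mbf{k}_t - \mbf{k}_{t-1}}$ is a weighted $\ell_1$ norm of an affine function of $\mbf{k}_t$, and is therefore convex.

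The crux of the argument is the pseudo-cost integral term, which depends on the new (expected) utilization $z^{(t-1)} + \overline{c}(\mbf{k}_t)$. Since $\overline{c}(\mbf{k}_t) = c(\Phi^{-1}\mbf{k}_t)$ is linear in $\mbf{k}_t$, and composing a convex function with an affine map preserves convexity, it suffices to show that $G(y) \coloneqq -\int_{z^{(t-1)}}^{z^{(t-1)} + y} \psi(u)\,du$ is convex in $y \geq 0$. Differentiating, $G''(y) = -\psi'(z^{(t-1)} + y)$, so convexity reduces to verifying that $\psi$ is non-increasing on $[0,1]$. From \sref{Def.}{dfn:psi-pcm}, $\psi'(z) = \tfrac{1}{\eta}\bigl(\nicefrac{U}{\eta} - U + D + \tau\bigr)\exp(\nicefrac{z}{\eta})$, whose sign is governed entirely by the bracketed constant. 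The value of $\eta$ used when instantiating $\psi$ for \PCM is at least $\nicefrac{U}{U - D - \tau}$, which is finite and positive because \sref{Assumption}{ass:switching-cost} enforces $D+2\tau \leq U-L$ and so $U - D - \tau \geq L + \tau > 0$. This makes the bracketed constant non-positive, giving $\psi' \leq 0$ and hence convexity of $G$.

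The main obstacle is precisely this last step: one must reconcile the range of $\eta$ that delivers an $\eta$-competitive bound for \PCM with the range needed for the convexity argument above. Once both are consistent, the objective of \eqref{eq:pseudocostPCM} is a finite sum of convex functions of $\mbf{k}_t$ being minimized over the convex set $K$, which establishes the claim.
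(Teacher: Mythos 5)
Your decomposition is essentially the paper's: the service term is linear (a linear $f_t$ composed with the linear map $\Phi^{-1}$), the switching term is a weighted $\ell_1$ norm of an affine function of $\mbf{k}$, and the only delicate piece is the negated pseudo-cost integral, whose convexity reduces—via linearity of $\overline{c}$—to $\psi$ being non-increasing on $[0,1]$. (The paper phrases this as a Hessian computation wrapped in a spurious "contradiction," but the content is identical; your extra check that $K$ is a convex set is a detail the paper omits.)

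The gap is in your verification that $\psi' \leq 0$. You correctly reduce this to $\nicefrac{U}{\eta} - U + D + \tau \leq 0$, i.e.\ $\eta \geq \nicefrac{U}{U-D-\tau}$, but your justification—that \sref{Assumption}{ass:switching-cost} gives $U - D - \tau \geq L + \tau > 0$—only shows the threshold $\nicefrac{U}{U-D-\tau}$ is finite and positive; it does not show that $\eta$ clears it. In fact it need not under the stated assumptions: the defining equation $\ln\left(\frac{U-L-D-2\tau}{U-\nicefrac{U}{\eta}-D}\right) = \nicefrac{1}{\eta}$ forces $U - \nicefrac{U}{\eta} - D = (U-L-D-2\tau)\,e^{-1/\eta}$, so the sign condition you need is $(U-L-D-2\tau)\,e^{-1/\eta} \geq \tau$, which $D + 2\tau \leq U - L$ alone does not guarantee when $\tau$ is comparable to $U-L-D$. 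For instance, $U=10$, $L=1$, $D=0$, $\tau=4$ satisfies the assumption, yet the transcendental equation gives $\eta \approx 1.04 < \nicefrac{10}{6}$, so the exponential's coefficient is positive and $\psi$ is increasing—at which point the integral term is convex, its negation concave, and the convexity argument collapses. To be fair, the paper's own proof simply asserts "$\psi$ is monotonically decreasing" without verification, so you have not taken a different route—you have surfaced, but not closed, a hypothesis the paper leaves implicit. A clean fix is to state the sufficient condition explicitly (e.g., $U - L \geq D + (2+e)\tau$, which gives $(U-L-D-2\tau)e^{-1/\eta} \geq (U-L-D-2\tau)/e \geq \tau$) or otherwise restrict to the regime, suggested by the paper, in which $\tau$ is small relative to $U-L-D$.
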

}
\vspace{-0.4em}

\begin{algorithm}[t]
   \caption{Pseudo-cost minimization algorithm for \SOAD (\PCM)}
   \label{alg:pcm}
{\small
\begin{algorithmic}
   \State {\bfseries input:} constraint function $c(\cdot)$, convex set $K$ with distance metric $\lVert \cdot \rVert_{\ell_1 (\mbf{w})}$, pseudo-cost function $\psi(z)$, starting \OFF state $s \in \mathcal{S}$.
   \State {\bfseries initialize:} $z^{(0)} = 0; \ \mbf{k}_0 = \Phi \delta_s; \ \mbf{p}_0 = \delta_s$.
   \While {cost function $f_t (\cdot)$ is revealed and $z^{(t-1)} < 1$}
   \State Solve \textbf{pseudo-cost minimization} problem: {\small \begin{align}
       \mbf{k}_t &= \argmin_{\mbf{k} \in K : \overline{c}(\mbf{k}) \leq 1-z^{(t-1)}} \overline{f}_t(\mbf{k}) + \lVert \mbf{k} - \mbf{k}_{t-1} \rVert_{\ell_1 (\mbf{w})} - \int_{z^{(t-1)}}^{z^{(t-1)} + \overline{c}(\mbf{k})} \psi (u) du \label{eq:pseudocostPCM},\\
       \mbf{p}_t &= \Phi^{-1} \mbf{k}_t.
    \end{align}}
   \State Update utilization $z^{(t)} = z^{(t-1)} + c(\mbf{p}_t)$.
   \EndWhile
\end{algorithmic}}
\end{algorithm}
\setlength{\textfloatsep}{6pt}%
\setlength{\intextsep}{0pt}

{\color{blue} We defer the proof of \autoref{thm:PCMconvex} to \autoref{apx:pseudo-convex}.  At a high-level, the result implies that the solution to \eqref{eq:pseudocostPCM} can be found efficiently using convex programming techniques~\cite{BV:04, CVXPY}. }
Compared to prior works~\cite{Lechowicz:24CFL, SunZeynali:20}, our design of $\psi$ differentiates between the spatial movement cost and temporal switching cost (in particular, $D$ only appears within the exponential, while $\tau$ appears inside and outside of the exponential term).
This removes a source of pessimism %
-- when \PCM makes a decision to move to a distant point in the metric (i.e., paying a worst-case factor of $D$), it can safely assume that it will only have to pay a factor of $\tau$ to switch \OFF if the next cost function is bad.  
{\color{blue}
This allows \PCM to take advantage of spatially distributed \OFF states, where all of the existing works that use a pseudo-cost paradigm for temporal load-shifting cannot. 
}

{\color{blue}
We note that \PCM's decisions in $K$ are marginal probability distributions over $\Delta_{\mathcal{S}}$ -- %
we briefly detail how feasible deterministic decisions in $\mathcal{X}$ are extracted from these outputs.
We assume the player interprets distributions according to a mixed random / fractional setting (see \autoref{thm:equivRand}), allowing them to make fractional resource allocation decisions within a single point while the allocation is probabilistically assigned to a single point.
}
{\color{blue}
At each time step, \PCM generates $\mbf{p}_t \in \Delta_{\mathcal{S}}$.  We let $\mbf{r}_t \coloneqq \{ r_t^{(u)} \leftarrow p_t^{\ON^{(u)}} \kern-0.5em + p_t^{\OFF^{(u)}} \kern-0.5em : u \in X \}$ aggregate the $\texttt{ON / OFF}$ probabilities at each location of $X$.  Given this \textit{spatial} distribution over {$X$}, consecutive decisions should be jointly distributed according to the optimal {transport} plan between $\mbf{r}_{t-1}$ and $\mbf{r}_t$, given by $(\mbf{r}_{t}, \mbf{r}_{t-1}) \thicksim \pi_t \coloneqq \argmin_{\pi \in \Pi(\mbf{r}_t, \mbf{r}_{t-1})} \mathbb{E} \left[ d( u_t, u_{t-1} ) \right]$, 
where $( u_t, u_{t-1} ) \thicksim \pi_t$ and $\Pi(\mbf{r}_t, \mbf{r}_{t-1})$ is the set of distributions over $X^2$ with marginals $\mbf{r}_t$ and $\mbf{r}_{t-1}$.  If the decision-maker couples decisions according to $\pi_t$, then the \textit{expected} spatial movement cost of the deterministic decisions is equivalent to $\mathbb{W}^1 ( \mbf{r}_t, \mbf{r}_{t-1} ) $, the spatial Wasserstein-1 distance between $\mbf{r}_t$ and $\mbf{r}_{t-1}$.
Given a previous deterministic point assignment $u_{t-1}$,  the player can obtain the point assignment $u_t$ by sampling through the conditional distribution $\pi_t(u_t|u_{t-1})$.
The fractional \texttt{ON / OFF} allocation in $\mbf{x}_t$ %
at the chosen location $u_t \in X$ is then given by $\nicefrac{ p_t^{\ON^{(u)}}}{ r_t^{(u)} }$ and $\nicefrac{ p_t^{\OFF^{(u)}}}{r_t^{(u)}}$, respectively; by \autoref{thm:equivRand}, this gives that $\mathbb{E} \left[ g( \mbf{x}_t, \mbf{x}_{t-1} ) \right] = \mathbb{W}^1 (\mbf{p}_t , \mbf{p}_{t-1})$. 
}

\subsection{Main results}

In \autoref{thm:etaCompPCM}, we state a bound on the competitive ratio of \PCM.  
\begin{thm} \label{thm:etaCompPCM}
Under Assumptions~\ref{ass:hitting-cost} and~\ref{ass:switching-cost}, \emph{\PCM} is $O(\log n) \eta$-competitive for \SOAD, where $\eta$ is the solution to $\ln \left( \frac{U - L - D - 2\tau}{U - \nicefrac{U}{\eta} - D} \right) = \frac{1}{\eta}$ and given by:
{\setlength{\abovedisplayskip}{0pt}
\setlength{\belowdisplayskip}{3pt}
\begin{equation}
    \eta \coloneqq \left[ W \left( \frac{(D+L-U+2\tau) \exp \left(\frac{D-U}{U} \right) }{ U } \right) + \frac{U - D}{U} \right]^{-1}, \label{eq:eta}
\end{equation}}
where $W$ is the Lambert $W$ function~\cite{Corless:96LambertW}, and the $O(\log n)$ factor is due to the tree embedding~\cite{Fakcharoenphol:07}.
\end{thm}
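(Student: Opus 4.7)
The plan is to establish the bound in two stages: first, reduce the problem on the general metric $(X,d)$ to its probabilistic tree embedding, absorbing an $O(\log n)$ factor; second, show that \PCM is $\eta$-competitive when operating in the vector space $(K, \norm{\cdot})$ defined over that tree. For the reduction, I will invoke the FRT embedding together with \sref{Definition}{dfn:tree}: the service costs $\overline{f}_t$ depend only on the point assignment and are unchanged; the spatial tree distances inflate the original $d(\cdot,\cdot)$ by $O(\log n)$ in expectation; and the temporal switching cost is preserved exactly since the leaf-$\OFF$ edges are attached with weights $\beta^{(u)}$. Combined with the randomized coupling via Wasserstein-1 optimal transport described following \autoref{thm:equivRand}, an $\eta$-competitive guarantee on $\calT$ lifts to $O(\log n)\eta$ in expectation on $(X,d)$, so the remainder reduces to the tree-level analysis.

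For the tree-level bound, I decompose \PCM's cost as $\sum_t \overline{f}_t(\mbf{k}_t) + \sum_t \norm{\mbf{k}_t - \mbf{k}_{t-1}} + \mathrm{MA}$, where $\mathrm{MA}$ is the mandatory-allocation cost when the utilization $z^{(t)}$ has not reached $1$ by the latest safe time. The key lemma to prove is that each incremental step contributes at most $\int_{z^{(t-1)}}^{z^{(t)}} \psi(u)\,du$ to the combined service plus switching cost. This is obtained by invoking the KKT conditions of the convex program \eqref{eq:pseudocostPCM}: by \autoref{thm:PCMconvex} the program is convex, so at optimum the marginal cost of increasing utilization equals $\psi(z^{(t-1)} + \overline{c}(\mbf{k}_t))$, which upon telescoping yields the integral bound. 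For the offline optimum I will use two complementary lower bounds: (i) \OPT must service a full unit of work at cost at least $L$ per unit, and (ii) if \PCM triggers mandatory allocation with terminal utilization $z^\star < 1$, then no service cost function revealed so far was cheaper than $\psi(z^\star)$ (otherwise \PCM would have allocated more), which forces a matching lower bound on \OPT involving the unserviced mass and the best price seen.

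Combining these bounds, the ratio $\PCM/\OPT$ is maximized over the choice of $z^\star$ and adversarial $L$, and the worst case collapses to a boundary condition on $\psi$ at $z = 1$ obtained when \PCM just barely avoids mandatory allocation, yielding $\ln\left(\tfrac{U - L - D - 2\tau}{U - U/\eta - D}\right) = 1/\eta$, which solves explicitly via the Lambert $W$ function to give \eqref{eq:eta}. The main obstacle I anticipate is the KKT/telescoping step: unlike the unidimensional pseudo-cost analysis of prior work, the decision variable $\mbf{k}_t$ lives in the tree-structured set $K$, and the switching-cost term $\norm{\mbf{k}_t - \mbf{k}_{t-1}}$ couples successive decisions in a way that must be carefully untangled across internal nodes and leaves of $\calT$. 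In particular, the deliberate separation of the spatial movement cost $D$ and temporal switching cost $\tau$ inside $\psi$ is what should make this decoupling possible, but verifying that the KKT multipliers align with this hierarchical structure, and that the mandatory-allocation phase incurs only an additive overhead of $(1-z^\star)(U + D + \tau)$ that the boundary condition absorbs, will require the most delicate bookkeeping.
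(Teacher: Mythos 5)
Your overall architecture matches the paper's: an $O(\log n)$ reduction via the FRT/HST embedding, then a tree-level argument pairing an upper bound on \PCM's cost (pseudo-cost integral plus mandatory allocation) with a lower bound on \OPT derived from the first-order condition of the minimization. However, three of your steps would fail as written. First, your key lemma --- that \emph{every} step's combined service-plus-switching cost is at most $\int_{z^{(t-1)}}^{z^{(t)}}\psi(u)\,du$ --- is false for steps with $\overline{c}(\mbf{k}_t)=0$: there the integral vanishes but the algorithm may still pay a switching cost to move probability mass back to \OFF states. The paper handles this by showing the aggregate excess over all such steps is at most $\tau z^{(j)}$ (using \sref{Assumption}{ass:switching-cost}), so the correct upper bound is $\int_0^{z^{(j)}}\psi(u)\,du + (1-z^{(j)})U + \tau z^{(j)}$; your telescoping alone does not account for the $\tau z^{(j)}$ term, and your mandatory-allocation overhead of $(1-z^\star)(U+D+\tau)$ is not what the paper uses (it uses $(1-z^{(j)})U$) and would not reproduce the stated $\eta$.

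Second, your \OPT lower bound omits the $D$ offset. The stationarity condition of \eqref{eq:pseudocostPCM} includes the worst-case movement term $D\,\overline{c}(\mbf{k})$, so what the minimization certifies is that the best service cost seen so far is at least $\psi(z^{(j)}) - D$, not $\psi(z^{(j)})$. This matters because the closing identity is $\int_0^{z}\psi(u)\,du + \tau z + (1-z)U = \eta\left[\psi(z) - D\right]$, valid for \emph{all} $z\in[0,1]$ by the explicit exponential form of $\psi$ --- so the final step is not a maximization over $z^\star$ that ``collapses to a boundary condition''; the ratio is identically $\eta$ for every terminal utilization, and the transcendental equation $\ln\bigl(\tfrac{U-L-D-2\tau}{U - U/\eta - D}\bigr) = 1/\eta$ arises from requiring $\psi(1) = L + D$ (so the threshold never dips below the lowest admissible price before mandatory allocation). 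Without the $-D$ in the denominator and the exact identity above, your algebra will not produce \eqref{eq:eta}. Your anticipated ``KKT bookkeeping across internal nodes'' obstacle is largely a non-issue: the paper sidesteps it by reducing to a scalar variable $y=\overline{c}(\mbf{k})$ along the cheapest \ON direction and bounding $\norm{\mbf{k}-\mbf{k}_{t-1}}\le (D+\tau)\overline{c}(\mbf{k})$.
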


Compared to previous works such as $\MTS$ and $\OWT$, the competitive bound in \autoref{thm:etaCompPCM} compares favorably. 
{\color{blue}
In particular, the upper bound is \textit{better} than one might expect from e.g., combining the bounds of $\MTS$ and $\OWT$.
}
For the minimization variant of \OWT, the optimal competitive ratio due to \citet{Lorenz:08} is $\left[ W \left( \left( \nicefrac{1}{\theta} - 1 \right) e^{-1} \right) + 1 \right]^{-1}$, where $\theta$ is defined as $\nicefrac{M}{m}$, and $M \geq m$ are bounds on the prices (i.e., $(m, M) \approx (L, U)$).  For \MTS, the randomized state-of-the-art due to \citet{Bubeck:21MTSTrees} is $O((\log n)^2)$.  Asymptotically, compared to both of these bounds, $\eta$ ``loses'' a $\log$ factor depending on the number of points in the metric, and it is known that $W(x) \thicksim \ln (x)$ as $x \to \infty$ ~\cite{HoorfarHassani:08, Stewart:09}.  
{\color{blue}
Compared to \OWT, $\eta$ adds a dependency on $D$ and $\tau$, parameters describing the cost due to the metric and switching, but we note that \sref{Assumption}{ass:switching-cost} (i.e., bounds on $D$ and $\tau$ in terms of $U$ and $L$) prevents the competitive ratio from significantly increasing. 
}

{\color{blue}
Given the result in \autoref{thm:etaCompPCM}, a natural question is whether any online algorithm for \SOAD can achieve a better competitive bound.  
We answer this in the negative, showing that \PCM's competitive ratio is the best achievable up to $\log$ factors that are due to the metric embedding.  In particular, we show a class of difficult instances on which no algorithm can achieve a competitive ratio better than $\eta$ -- since the definition of the competitive ratio covers all valid inputs, this gives a corresponding lower bound on the competitive ratio of any algorithm for \SOAD. %
}

\begin{thm} \label{thm:lowerboundCSWM}
{\color{blue}
For any $U, L, \tau$, and $D \in [0, (U-L))$, there exists a set of \SOAD instances on a weighted star on which no algorithm \emph{$\ALG$} can achieve \emph{$\nicefrac{\ALG}{\OPT}$} better than $\eta$ (for $\eta$ defined in \eqref{eq:eta}).}
\end{thm}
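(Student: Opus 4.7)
The plan is to construct a parameterized family of adversarial input instances on a specific weighted star metric and to show that no online algorithm can achieve a competitive ratio strictly better than $\eta$ against this family. This mirrors the classical ``$y$-sequence'' adversary used in lower bounds for one-way trading~\cite{ElYaniv:01, Lorenz:08, SunZeynali:20}, suitably generalized to incorporate the metric movement cost $D$ and the temporal switching cost $\tau$ that distinguish \SOAD from its one-dimensional predecessors.

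First, I would fix the metric: take a weighted star $(X,d)$ with two (or more) leaves, with edge weights chosen so that the distance between any two leaves equals $D$ and the \ON / \OFF switching-cost factor at each leaf is $\tau$, and set all throughput constants $c^{(u)} = 1$ for simplicity. For each threshold $y \in [L, U - D - 2\tau]$, I would construct an instance $\mathcal{I}_y$ whose service costs start at $U$ and decrease continuously to $y$ at some time $T_y$, after which they jump to prohibitively large values until the deadline. To recover the $D$-dependence of $\eta$, the instance alternates \emph{which} leaf attains the currently lowest price, so that any algorithm completing a non-negligible fraction of the workload while the price descends must either pay movement cost $D$ repeatedly or pay an additional $\tau$ switching cost to pause at its current leaf.

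Next, I would bound $\OPT(\mathcal{I}_y)$: the offline optimum commits the full unit-size workload at the leaf achieving price $y$, incurring service cost at most $y$, an initial movement cost of at most $D$ to reach that leaf, and a final switching cost of at most $2\tau$ to enter and leave the \ON state, giving $\OPT(\mathcal{I}_y) \leq y + D + 2\tau$. For any online algorithm $\ALG$, let $g(y) \in [0,1]$ denote the expected fraction of the deadline constraint satisfied by the time the revealed price first reaches $y$. Standard Riemann--Stieltjes accounting of the service cost incurred during the descending phase, together with the cost $U(1 - g(y))$ forced during the mandatory-allocation phase and the unavoidable $D + 2\tau$ overhead for any committed work, yields a lower bound of the form $\ALG(\mathcal{I}_y) \geq U + \int_y^U (p - U)\,dg(p) - (D+\tau)g(y) + \text{lower-order terms}$.

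Imposing $\ALG(\mathcal{I}_y) \leq \eta\, \OPT(\mathcal{I}_y)$ simultaneously for all admissible $y$ converts this into a functional (in fact differential) inequality on $g$. The tightest admissible $g$ is exponential in $y$ and precisely mirrors the pseudo-cost function $\psi$ of Definition~\ref{dfn:psi-pcm} via $g'(y) = \psi^{-1}(y)$-style matching; integrating from $y = L$ to $y = U$ and enforcing $g(L) \geq 1$ (any deficit is charged at price $U$ during mandatory allocation) produces the transcendental equation $\ln\!\left(\tfrac{U - L - D - 2\tau}{U - U/\eta - D}\right) = \tfrac{1}{\eta}$, which is equivalent to the Lambert-$W$ expression in \eqref{eq:eta}. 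The main obstacle is getting the adversary to force exactly the asymmetric appearance of $D$ and $\tau$ seen in $\eta$: $\tau$ appears both inside and outside the exponential while $D$ appears only inside. This requires carefully choreographing the alternation of low-price leaves so that the algorithm pays a full $D + \tau$ whenever it commits additional work, but $\OPT$ still only pays $D + 2\tau$ once; a secondary technical subtlety is handling the boundary condition at $y = L$ so that the deficit-vs-mandatory tradeoff is tight rather than slack.
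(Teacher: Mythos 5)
Your overall skeleton --- a weighted star with pairwise distance $D$ and switching factor $\tau$, a family of descending-price instances indexed by $y$, a constraint-satisfaction function tracked via Riemann--Stieltjes accounting, a Gr\"{o}nwall-type integration, and a binding transcendental equation --- is the same template the paper uses. But there is a genuine gap at exactly the point you flag as your ``main obstacle,'' and the idea that resolves it is missing from your proposal. A \emph{single} family $\mathcal{I}_y$ with one satisfaction function $g(y)$ cannot force the constant $\eta$. If the good prices only ever appear at leaves far from the player (your alternating construction), then $\OPT$ must itself pay the movement overhead, so you can only charge $\ALG$ against $\OPT(\mathcal{I}_y) \approx y + D + 2\tau$; imposing $g(L) \le 1$ then yields a strictly weaker necessary condition than the one defining $\eta$ (the resulting inequality carries an extra slack term of the form $\nicefrac{\eta D + 2\eta\tau}{\nicefrac{U}{\eta} - U + D}$ that does not cancel). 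If instead the good prices appear at the starting leaf, the $D$-dependence disappears entirely. Neither family alone suffices.

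The paper's proof plays \emph{two} adversary families simultaneously against $\ALG$: $\mathcal{G}_y$, which presents good prices only at points distant from the (adaptively observed) player and then ends, with $\OPT(\mathcal{G}_y) \to \min\{y+D+\tau, U\}$; and $\mathcal{A}_y$, which is indistinguishable from $\mathcal{G}_y$ throughout Stage~1 but then appends a Stage~2 in which the good prices return to the starting point, so that $\OPT(\mathcal{A}_y) \to y$. This forces the algorithm to split its unit budget between a Stage-1 satisfaction function $s(y)$ (which must be large enough to be competitive against $\mathcal{G}_y$, paying $D$ per unit bought) and a Stage-2 function $t(y)$ (which must be large enough to be competitive against $\mathcal{A}_y$, paying $D$ again to move back). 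The argument closes via the budget constraint $s(L) \le 1 - t(L)$: the Gr\"{o}nwall lower bound on $s(L)$ and the Gr\"{o}nwall lower bound on $t(L)$ are combined, their $D$- and $\tau$-dependent slack terms cancel, and the binding case gives exactly $\ln\bigl(\frac{U-L-D-2\tau}{U - \nicefrac{U}{\eta} - D}\bigr) = \nicefrac{1}{\eta}$. This two-stage, two-function structure is precisely what produces the asymmetric placement of $D$ and $\tau$ in $\eta$; without it, your differential inequality will integrate to the wrong equation, and the claimed matching with the upper bound of \autoref{thm:etaCompPCM} will fail.
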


\subsection{Proof overviews}

{\color{blue}
We give proof sketches of \sref{Theorems}{thm:etaCompPCM} and \ref{thm:lowerboundCSWM}, relegating the full proofs of both to \autoref{apx:pcm}.
}

\smallskip
\noindent \textbf{Proof Sketch of \autoref{thm:etaCompPCM}.}
To show this result, we give two lemmas to characterize the cost of $\OPT$ and the expected cost of $\PCM$, respectively.  First, note that the solution given by \PCM is feasible, by definition of the mandatory allocation (i.e., $\sum_{t=1}^T c(\mbf{p}_t) = 1$).
On an arbitrary \SOAD instance $\mathcal{I} \in \Omega$, we denote the final utilization (before the mandatory allocation) by $z^{(j)}$.  
\vspace{-0.4em}
\begin{lem}
\label{lem:pcm-opt-lb}
The offline optimum is lower bounded by $\emph{\OPT}(\mathcal{I}) \ge \frac{\max \left\{ \psi(z^{(j)}) - D, L \right\}}{O(\log n)}$.
\end{lem}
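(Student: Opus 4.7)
The plan is to establish the two terms inside the maximum separately, then combine them while accounting for the tree embedding distortion. First, the trivial bound $\OPT(\mathcal{I}) \geq L$ follows from feasibility: any solution must satisfy $\sum_{t=1}^{T} c(\mbf{x}_t) \geq 1$, and each service cost satisfies $f_t^{(u)} \geq c^{(u)} L$ by \sref{Assumption}{ass:hitting-cost}, so the total service cost of \OPT is at least $L \sum_t c(\mbf{x}_t) \geq L$. This bound is insensitive to the tree embedding, since it references only service costs (which are preserved exactly).

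Second, I aim to show $\OPT(\mathcal{I}) \geq \tfrac{\psi(z^{(j)}) - D}{O(\log n)}$ using the structure of \PCM's pseudo-cost minimization. The key idea is that \PCM voluntarily stopping at utilization $z^{(j)}$ (before mandatory allocation begins) reveals information about the best service cost \PCM observed. At each step $t \leq j$, the minimizer $\mbf{k}_t$ of \eqref{eq:pseudocostPCM} satisfies a first-order optimality condition: no feasible direction in $K$ decreases the objective. By considering the direction that increases $\overline{c}(\mbf{k}_t)$ toward the most attractive point in the metric at that moment, and noting that the normalized movement cost to reach any point is at most $D$ by \sref{Assumption}{ass:switching-cost}, this condition implies $\tfrac{f_t^{(u)}}{c^{(u)}} \geq \psi(z^{(j)}) - D$ for the relevant $t$ and $u$; monotonicity of $\psi$ then extends this to the minimum over $t \leq j$. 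Since \OPT must satisfy the same unit-size constraint, and its per-unit service cost cannot fall below this minimum, the tree-embedded optimum satisfies $\mathrm{OPT}_{\calT} \geq \psi(z^{(j)}) - D$.

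Finally, I would transfer this lower bound from the tree-embedded instance back to the original metric via the FRT distortion guarantee (\citet{Fakcharoenphol:07}): since tree distances are dominant over original distances but only blow up by $O(\log n)$ in expectation, $\OPT(\mathcal{I})$ in the original metric and $\mathrm{OPT}_{\calT}$ differ by a factor of at most $O(\log n)$ when related through the embedding, which contributes the $O(\log n)$ denominator. Taking the maximum of this bound with $\OPT(\mathcal{I}) \geq L$ yields the lemma.

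The main obstacle is the marginal argument in the second step. The pseudo-cost minimization is over a non-trivial convex set $K$ equipped with a weighted $\ell_1$ norm, so extracting a clean inequality on $\min_{t, u} \tfrac{f_t^{(u)}}{c^{(u)}}$ from first-order optimality requires carefully tracking how an infinitesimal perturbation toward a cheap point affects the service term $\overline{f}_t(\mbf{k})$, the weighted movement term $\|\mbf{k} - \mbf{k}_{t-1}\|_{\ell_1(\mbf{w})}$, and the pseudo-cost integral simultaneously. Correctly accounting for the switching cost $\tau$ (already absorbed into $\psi$), the worst-case spatial movement $D$, and edge cases such as the starting \OFF state or the interaction between \ON and \OFF fractions within a single point will be the most delicate part of the proof.
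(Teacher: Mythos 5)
Your proposal follows essentially the same route as the paper: the trivial bound $\OPT \ge L$, a first-order/marginal argument showing the pseudo-cost minimization stops only when the best normalized service cost seen is at least $\psi(z^{(j)})-D$ (with $\tau$ absorbed into the constant term of $\psi$), and a final transfer through the $O(\log n)$ embedding distortion. The "delicate part" you flag is exactly what the paper handles by relaxing the minimization over $K$ to a scalar surrogate in $y=\overline{c}(\mbf{k})$ (using $f_m(\mbf{k})_i \le \nabla_{\min}\,\overline{c}(\mbf{k})$ and $\norm{\mbf{k}-\mbf{k}_{m-1}} \le (D+\tau)\,\overline{c}(\mbf{k})$) and setting its derivative to zero, so your outline is consistent with the paper's proof.
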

\vspace{-0.4em}
We can show by contradiction that for any instance, the definition of the pseudo-cost minimization enforces that $\psi(z^{(j)}) - D$ is a lower bound on the \textit{best service cost} seen in the sequence (see \eqref{align:mingrad}). 
Note that the best choice for $\OPT$ is to service the entire workload at the minimum cost (if it is feasible).
{\color{blue}
This yields a corresponding lower bound on $\OPT$ -- formally, $\OPT(\mathcal{I}) \geq \nicefrac{\max \{ \psi(z^{(j)}) - D, \  L \}}{O(\log n)}$, where the $\log$ factor appears due to the distortion in the metric tree embedding.
}
\vspace{-0.4em}
\begin{lem}
\label{lem:pcm-alg-ub}
$\emph{\PCM}$'s expected cost is bounded by $\ex[ \emph{\PCM}(\mathcal{I}) ] \le \int_{0}^{z^{(j)}} \kern-1em \psi(u) du + (1- z^{(j)}) U + \tau z^{(j)}$.
\end{lem}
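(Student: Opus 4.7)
The plan is to decompose PCM's expected cost across two phases: the non-mandatory phase (times $t = 1, \dots, j$), during which PCM generates $\mbf{k}_t$ by minimizing \eqref{eq:pseudocostPCM} with a non-binding constraint, and the mandatory allocation phase (times $t = j+1, \dots, T$, plus the terminal switch at $T+1$), during which a cost-agnostic strategy is forced to complete the remaining $1 - z^{(j)}$ fraction of the workload. The non-mandatory contribution will be controlled by exploiting the optimality of $\mbf{k}_t$ against a carefully chosen feasible comparator, while the mandatory contribution will be handled through a worst-case service cost bound coupled with an allocation strategy that avoids spatial movement.

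For the non-mandatory bound, the key step is to instantiate, at each $t \in \{1,\dots,j\}$, a comparator $\tilde{\mbf{k}}_{t-1} \in K$ obtained from $\mbf{k}_{t-1}$ by preserving its spatial marginal over $X$ but flipping all $\ON$ mass at each point into the corresponding $\OFF$ leaf in $\calT$. This comparator is feasible for \eqref{eq:pseudocostPCM} since $\overline{c}(\tilde{\mbf{k}}_{t-1}) = 0 \leq 1 - z^{(t-1)}$; it carries zero service cost $\overline{f}_t(\tilde{\mbf{k}}_{t-1}) = 0$; and by \sref{Assumption}{ass:switching-cost} its distance to $\mbf{k}_{t-1}$ satisfies $\norm{\tilde{\mbf{k}}_{t-1} - \mbf{k}_{t-1}} \leq \tau\, c(\mbf{p}_{t-1})$, because only the $\ON$-to-$\OFF$ edges of $\calT$ (of weight $\beta^{(u)} \leq \tau c^{(u)}$) are traversed. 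Plugging $\tilde{\mbf{k}}_{t-1}$ into \eqref{eq:pseudocostPCM} and using the optimality of $\mbf{k}_t$ then yields
\begin{equation*}
\overline{f}_t(\mbf{k}_t) + \norm{\mbf{k}_t - \mbf{k}_{t-1}} \leq \int_{z^{(t-1)}}^{z^{(t)}} \psi(u)\, du + \tau\, c(\mbf{p}_{t-1}),
\end{equation*}
and summing over $t$, telescoping the integrals, and using $\sum_{t=1}^{j} c(\mbf{p}_{t-1}) \leq z^{(j)}$ gives $\sum_{t=1}^{j} \bigl[ \overline{f}_t(\mbf{k}_t) + \norm{\mbf{k}_t - \mbf{k}_{t-1}} \bigr] \leq \int_0^{z^{(j)}} \psi(u)\, du + \tau z^{(j)}$.

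For the mandatory phase, I would bound the aggregate service cost by $U(1-z^{(j)})$ using \sref{Assumption}{ass:hitting-cost} together with the fact that exactly $1 - z^{(j)}$ units of work remain. The associated switching cost can be controlled by fixing the mandatory allocation at the point already occupied by $\mbf{k}_j$, eliminating any spatial movement cost; the residual $\ON$/$\OFF$ transitions are then absorbed by the slack that the pseudo-cost $\psi$ builds in via its $\tau$ and $D$ offsets (cf.~\sref{Def.}{dfn:psi-pcm}). Combining the two phases yields the stated bound. The main obstacle is this last step: carefully designing the mandatory allocation strategy and verifying that the $\ON$/$\OFF$ switching costs at the boundary of the two phases (and throughout mandatory allocation) do not introduce extra $D$ or $\tau$ terms beyond those already accounted for by $\tau z^{(j)}$ and the slack within $\int_0^{z^{(j)}} \psi(u)\, du$.
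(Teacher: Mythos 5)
Your proposal is correct and takes essentially the same approach as the paper: the paper likewise bounds each non-mandatory step by comparing the minimizer of \eqref{eq:pseudocostPCM} against the ``flip all $\ON$ mass at the current points to $\OFF$'' alternative, yielding $f_t(\mbf{k}_t) + \norm{\mbf{k}_t - \mbf{k}_{t-1}} - \int_{z^{(t-1)}}^{z^{(t)}}\psi(u)\,du \le \tau\,\overline{c}(\mbf{k}_{t-1})$ for all $t$, then sums and adds $(1-z^{(j)})U$ for the mandatory allocation. Your explicit comparator $\tilde{\mbf{k}}_{t-1}$ is exactly the paper's reasoning made precise, and the paper handles your flagged ``obstacle'' the same way you suggest (no spatial movement during mandatory allocation, with its residual $\ON$/$\OFF$ switching treated as negligible).
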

\vspace{-0.4em}
The definition of the pseudo-cost provides an automatic bound on the expected cost incurred during any time step where progress is made towards the deadline constraint (i.e., whenever the service cost is non-zero).  
{\color{blue}
We show that $\tau z^{(j)}$ is an upper bound on the \textit{excess cost} that can be incurred by \PCM in the other time steps (i.e., due to temporal switching costs, see \eqref{align:excesscost}).}
Summing over all time steps, this gives that the expected cost of \PCM is upper bounded by $\int_{0}^{z^{(j)}} \kern-1em \psi(u) du + (1- z^{(j)}) U + \tau z^{(j)}$, where $(1- z^{(j)})U$ is due to the mandatory allocation.

Combining the two lemmas and using the definition of the pseudo-cost function to observe that $\int_{0}^{z^{(j)}} \kern-1em \psi(u) du + (1- z^{(j)}) U + \tau z^{(j)} \le \eta \left[ \psi(z^{(j)}) - D \right]$ (see \eqref{align:bestservicecost}) completes the proof. \hfill \qed

\smallskip
\noindent \textbf{Proof Sketch of \autoref{thm:lowerboundCSWM}.}
In \sref{Definition}{dfn:yadversary-caslb}, we define a class of \textit{$y$-adversaries} denoted by $\mathcal{G}_y$ and $\mathcal{A}_y$ for $y \in [L,U]$, along with a corresponding weighted star metric $X$ that contains $n$ points, each with $2$ states (\ON and \OFF), where the distance between any two points in the metric is exactly $D$.  These adversaries present cost functions at the $\ON$ states of $\mathcal{X}$ in an adversarial order that forces an online algorithm to incur a large switching cost.  
{\color{blue}
The $\mathcal{G}_y$ adversary presents a cost function at each step that is ``bad'' (i.e., $U$) in all $\ON$ states except for one which is \textit{not} at the starting point or the current state of online algorithm \ALG.
}
The $\mathcal{A}_y$ adversary starts by exactly mimicking $\mathcal{G}_y$ and presenting ``good'' cost functions at distant points, before eventually presenting ``good'' cost functions at the starting point.  Both adversaries present ``good'' cost functions in an adversarial non-increasing order, such that the optimal solutions approach $y$ -- formally, $\OPT(\mathcal{G}_y) \to \min\{ y + D + \tau, U \}$, and $\OPT(\mathcal{A}_y) \to y$.  
{\color{blue}
By competing against both adversaries simultaneously, this construction captures a trade-off between being too eager/reluctant to move away from the starting point.
}

{\color{blue}
Under this special metric and class of adversaries, the cost of any (potentially randomized) online algorithm \ALG can be fully described by two arbitrary \textit{constraint satisfaction functions} $s(y), t(y) : [L,U] \to [0,1]$ (see \eqref{eq:lb-special-cost}), where each function corresponds one of two \textit{stages} of the adversary (i.e., ``good'' cost functions arriving at spatially distant points, or at the starting point).
}
For \ALG to be $\eta^\star$-competitive (where $\eta^\star$ is unknown), we give corresponding conditions on $s(y)$ and $t(y)$ expressed as differential inequalities (see \eqref{align:lb-conditionineq}).  
{\color{blue}
By applying 
Gr\"{o}nwall's Inequality~\cite[Theorem 1, p. 356]{Mitrinovic:91}, this gives a \textit{necessary condition} such that $\eta^\star$ must satisfy: $\eta^\star \ln \left( \frac{U - L - D - 2\tau}{ U - \nicefrac{U}{\eta^\star} - D } \right) - \frac{\eta^\star D + \eta^\star 2 \tau}{\nicefrac{U}{\eta^\star} - U + D} \leq s(L) \leq 1 - t(L) \leq 1 - \frac{\eta^\star D + \eta^\star 2 \tau}{\nicefrac{U}{\eta^\star} - U + D}$.
}
The optimal $\eta^\star$ is obtained by solving for the transcendental equation that arises when the inequalities are binding, yielding the result. \hfill \qed

\section{\CarbonClipper: A Learning-Augmented Algorithm}
\label{sec:clip}
{\color{blue}
In this section, we consider how a learning-augmented algorithm for \SOAD can leverage \textit{untrusted advice} to improve on the average-case performance of \PCM while retaining worst-case guarantees. }
For learning-augmented algorithms, competitive ratio is interpreted via the notions of  \textit{consistency} and \textit{robustness}~\cite{Lykouris:18, Purohit:18}.  Letting $\ALG$ denote a learning-augmented online algorithm provided with advice denoted by $\ADV$, $\ALG$ is said to be $\alpha$-\textbf{consistent} if it is $\alpha$-competitive with respect to $\ADV$, %
and $\gamma$-\textbf{robust} if it is $\gamma$-competitive with respect to $\OPT$ when given any advice (i.e., regardless of $\ADV$'s performance).
We present \CarbonClipper (see \autoref{alg:carbonclipper}), which uses an adaptive optimization-based approach combined with the robust design of \PCM to achieve an optimal consistency-robustness trade-off.  We start by formally defining the advice model we use below.
\begin{dfn}[Black-box advice model for \SOAD] \label{dfn:adv}
{\color{blue}
For a given \SOAD instance $\mathcal{I} \in \Omega$, we let $\emph{\ADV}(\mathcal{I})$ denote \textbf{untrusted black-box decision advice}, i.e., %
$\emph{\ADV}(\mathcal{I}) \coloneqq  \{ \mbf{a}_t \in \Delta_{\mathcal{S}} : t \in [T] \}.$
}
If \emph{\ADV} is correct, a player that plays $\mbf{a}_t$ at each step attains the optimal solution ($\emph{\ADV}(\mathcal{I}) = \emph{\OPT}(\mathcal{I})$).
\end{dfn}
{\color{blue}
Although $\mbf{a}_t$ is defined on the probability simplex $\Delta_{\mathcal{S}}$, a deterministic \ADV at time $t$ is given by combining the Dirac measure supported at a point and a specific \ON \ / \OFF allocation.
We henceforth assume that $\ADV$ is \textit{feasible}, satisfying the %
constraint ($\sum\nolimits_{t=1}^T c(\mbf{a}_t) \geq 1$).  }
{\color{blue}
While it is not obvious a machine learning model could directly provide such feasible predictions, in practice, we leverage the black-box nature of the definition to combine e.g., machine-learned predictions of relevant costs with a post-processing pipeline that solves for a \textit{predicted optimal solution} (see \autoref{sec:expsetup}).
}

\subsection{\CarbonClipper: an optimal learning-augmented algorithm}
{\color{blue}
We present \CarbonClipper (\textbf{s}patio\textbf{t}emporal \textbf{c}onsistency-\textbf{li}mited \textbf{p}seudo-cost minimization,  \autoref{alg:carbonclipper}), which %
exactly matches a lower bound on the optimal robustness-consistency trade-off (\autoref{thm:optimalconstrobCSWM}) for \SOAD. 
}
\CarbonClipper takes a hyperparameter $\varepsilon \in (0, \eta - 1]$, which parameterizes a trade-off between following the untrusted advice ($\varepsilon \to 0$) and prioritizing robustness ($\varepsilon \to \eta - 1$).
We start by defining a \textit{target robustness factor} $\gamma^{(\varepsilon)}$, which is the unique solution to the following equation:
{\setlength{\abovedisplayskip}{3pt}
\setlength{\belowdisplayskip}{3pt}
\begin{equation}
    \gamma^{(\varepsilon)} = \varepsilon + \frac{U}{L} - \frac{\gamma^{(\varepsilon)} (U-L+D)}{L} \ln \left( \frac{U - L - D - 2\tau}{U - \nicefrac{U}{\gamma^{(\varepsilon)}} - D - 2\tau} \right). \label{eq:gamma}
\end{equation}}
We note that $\gamma^{(\varepsilon \to 0)} \to \nicefrac{U}{L}$, which is a trivial competitive ratio for any mandatory allocation scheme (i.e., if the entire constraint is satisfied at the deadline for the worst price $U$). 
{\color{teal}
The precise value of $\gamma^{(\varepsilon)}$ originates from a robustness-consistency lower bound (\autoref{thm:optimalconstrobCSWM}), and \CarbonClipper uses it to define a pseudo-cost function $\psi^{(\varepsilon)}$ that enforces $\gamma^{(\varepsilon)}$-robustness in its decisions.
}
\begin{dfn}[Pseudo-cost function $\psi^{(\varepsilon)}$ for \SOAD]
For $\rho \in [0,1]$ and $\gamma^{(\varepsilon)}$ given by \eqref{eq:gamma}, let $\psi^{(\varepsilon)}(\rho)$ be defined as: $\psi^{(\varepsilon)}(\rho) = U +D - \tau + (\nicefrac{U+D}{\gamma^{(\varepsilon)}} - U + D + \tau) \exp(\nicefrac{\rho}{\gamma^{(\varepsilon)}})$.
\end{dfn}
Similarly to \PCM (see \autoref{sec:pcm}), $\psi^{(\varepsilon)}$ is used in a minimization problem solved at each time step to obtain a decision.
However, since \CarbonClipper %
must also consider the actions of \ADV, it follows the consistency-limited pseudo-cost minimization paradigm, which places a \textit{consistency constraint} on the aforementioned minimization.  
{\color{blue}
This constraint enforces that \CarbonClipper always satisfies $(1+\varepsilon)$-consistency, which is salient when $\ADV$ is close to optimal.  Within this feasible set, the pseudo-cost minimization drives \CarbonClipper towards decisions that are ``as robust as possible.''
}

\noindent\textbf{Additional challenges in algorithm design. \ }
{\color{teal}
In contrast to prior applications of the \CLIPacro technique~\cite{Lechowicz:24CFL}, the \SOAD setting introduces a disconnect between the advice and the robust algorithm (e.g., \PCM); specifically, \ADV furnishes decisions that are supported on the (randomized) metric $\Delta_{\mathcal{S}}$, while \PCM makes decisions on the tree metric given by $(K, \norm{\cdot})$.
Since the \CLIPacro technique effectively ``combines'' \ADV with a robust algorithm, this poses a challenge in the \SOAD setting, introducing a $O(\log n)$ dependency in the consistency bound.\footnote{ Directly applying the \CLIPacro technique to the $(K, \norm{\cdot})$ decision space considered in \PCM yields an unremarkable consistency upper bound of $O(\log n) (1+\varepsilon)$, due to the distortion in the tree metric.}
}
With \CarbonClipper (see \autoref{alg:carbonclipper}), we carefully decouple the ``advice side'' and the ``robust side'' of the \CLIPacro technique to achieve a $(1 + \varepsilon)$-consistency bound.
{\color{teal}
While a $O(\log n)$ factor is likely unavoidable on arbitrary metrics in the adversarial setting of robustness (e.g., as is the case for metrical task systems~\cite{Bubeck:21MTSTrees, Christianson:23MTS}), the non-adversarial setting of consistency (i.e., when advice is correct) implies that such a factor should be avoidable.  Furthermore, removing a factor of $O(\log n)$ allows \CarbonClipper to achieve consistency arbitrarily close to $1$, which is often desirable in practice when the advice is often of high quality.
}

{\color{teal}
To accomplish this decoupling, \CarbonClipper uses the pseudo-cost minimization defined in \eqref{eq:pseudocostCLIP} to generate intermediate ``robust decisions'' ($\mbf{k}_t \in K$) on the tree embedding (see \sref{Def.}{dfn:tree}).
}
These decisions are converted into marginal probability 
distributions on the underlying simplex (i.e., $\mbf{p}_t \in \Delta_{\mathcal{S}}$) \textit{before} evaluating the consistency constraint.
Since \ADV also specifies decisions on $\Delta_{\mathcal{S}}$, this decoupling allows the constraint to directly compare the running cost of \CarbonClipper and \ADV, without losing a $\log(n)$ factor due to the tree embedding.
To hedge against worst-case scenarios that might cause \CarbonClipper to violate the desired $(1 + \varepsilon)$-consistency, the consistency constraint in \eqref{eq:const-constraint} extrapolates the cost of such scenarios on the randomized decision space $\Delta_{\mathcal{S}}$.

\noindent \textbf{Notation. \ } We introduce some shorthand notation to simplify the algorithm's pseudocode as follows: {\color{blue} we let $\CLIP_t$ denote the expected cost of \CarbonClipper's decisions up to time $t$, i.e., $\CLIP_t \coloneqq \sum_{j=1}^{t} f_{j}(\mbf{p}_j) + \mathbb{W}^1( \mbf{p}_{j}, \mbf{p}_{j-1} )$, and similarly let $\ADV_t$ denote the (expected) cost of the advice up to time $t$: $\ADV_t \coloneqq \sum_{j=1}^{t} f_{j}(\mbf{a}_j) + \mathbb{W}^1( \mbf{a}_{j}, \mbf{a}_{j-1} )$.}  As $z^{(t)}$ denotes the utilization of \CarbonClipper, we let $A^{(t)}$ denote the utilization of $\ADV$ at time $t$ (i.e., the expected fraction of the deadline constraint satisfied by \ADV so far).
{\color{blue}
In addition, \CarbonClipper also keeps track of a \textit{robust pseudo-utilization} $\rho^{(t)} \in [0,1]$; this term describes the portion of its decisions thus far that are attributable to the robust pseudo-cost minimization, and we have $\rho^{(t)} \leq z^{(t)}$ for all $t \in [T]$.   This quantity is updated according to the $\overline{\mbf{k}}_t$ that solves \textit{an unconstrained minimization} in \eqref{eq:robustUnconstrained}, ensuring that when \ADV has incurred a ``bad'' service cost that would otherwise not be considered by the robust algorithm, the pseudo-cost $\psi^{(\varepsilon)}$ maintains some headroom to accept better service costs that might arrive in the future.}

\noindent \textbf{Consistency constraint intuition. \ }  Within the constraint \eqref{eq:const-constraint}, \CarbonClipper encodes several ``worst-case'' scenarios that threaten the desired consistency bound.  The first three terms on the left-hand side and the $\ADV_t$ term on the right-hand side consider the actual cost of \CarbonClipper and \ADV so far, along with the current decision under consideration, %
where the expected switching cost is captured by the optimal transport plan with respect to the previous decision.  

The $\mathbb{W}^1(\mbf{p}, \mbf{a}_t)$ term on the left-hand side charges \CarbonClipper in advance for the expected movement cost between it and the advice -- the reasoning for this term is to hedge against the case where the constraint becomes binding in future steps, thus requiring \CarbonClipper to \textit{move} and follow \ADV.  If the constraint did not charge for this potential movement cost in advance, a binding constraint in future time steps might result in either an infeasible problem or a violation of %
$(1 + \varepsilon)$-consistency.
The $\tau c(\mbf{a}_t)$ term on both sides charges both \ADV and \CarbonClipper in advance for the temporal switching cost they must incur before the deadline -- \CarbonClipper is charged according to $\mbf{a}_t$ (as opposed to $\mbf{p}_t$) to continue hedging against the case where it must move to follow the advice in future time steps, finally paying $\tau c(\mbf{a}_t)$ to switch \OFF at the deadline.

On the right-hand side, the  $( 1 - A^{(t)} ) L $ term assumes that $\ADV$ can satisfy the remaining deadline constraint at the best marginal service cost $L$.  
{\color{blue}
In contrast, the final terms on the left-hand side $(1 - z^{(t-1)} - c(\mbf{p}))L + \max( (A^{(t)} - z^{(t-1)} - c(\mbf{p})), \ 0)(U-L)$ balance between two scenarios -- namely, they assume that \CarbonClipper can satisfy a fraction of the remaining constraint (up to $( 1 - A^{(t)} )$) at the best cost by following \ADV, but any excess beyond this (given by $(A^{(t)} - z^{(t)})$), must be fulfilled at the worst service cost $U$, possibly during a mandatory allocation.  

At a high level, \CarbonClipper's constraint on %
$\Delta_{\mathcal{S}}$ combined with the pseudo-cost minimization on $(K, \norm{\cdot})$ generates decisions that are \textit{maximally robust} while preserving consistency.}

\begin{algorithm*}[t]
   \caption{\CarbonClipper (spatiotemporal consistency-limited pseudo-cost minimization) for \SOAD}
   \label{alg:carbonclipper}
{\small
\begin{algorithmic}
   \State {\bfseries input:} Consistency parameter $\varepsilon$, constraint function $c(\cdot)$, pseudo-cost $\psi^{(\varepsilon)}(\cdot)$, starting \OFF state $s \in \mathcal{S}$.
   \State {\bfseries initialize:} $z^{(0)} = 0; \ \rho^{(0)} = 0; \ A^{(0)} = 0; \ \CLIP_0 = 0; \ \ADV_0 = 0; \ \mbf{k}_0 = \Phi \delta_s; \ \mbf{p}_0 = \mbf{a}_0  = \delta_s$.
   \While {cost function $f_t (\cdot)$ is revealed, untrusted advice $\mbf{a}_t$ is revealed, and $z^{(t-1)} < 1$}
   \State Update advice cost $\ADV_t \gets \ADV_{t-1} + f_t(\mbf{a}_t) + \mathbb{W}^1 (\mbf{a}_t, \mbf{a}_{t-1})$ and advice utilization $A^{(t)} \gets A^{(t-1)} + c(\mbf{a}_t)$.
   \State Solve \textbf{constrained} pseudo-cost minimization problem: {\scriptsize \begin{align}
        & \mbf{k}_t = \argmin_{\mbf{k} \in K : \overline{c}(\mbf{k}) \leq 1-z^{(t-1)}} \overline{f}_t(\mbf{k}) + \lVert \mbf{k} - \mbf{k}_{t-1} \rVert_{\ell_1 (\mbf{w})} - \int_{\rho^{(t-1)}}^{\rho^{(t-1)} + \overline{c}(\mbf{k})} \psi^{(\varepsilon)} (u) \ du, \label{eq:pseudocostCLIP} \\
       \begin{split}
       &\text{\small such that} \hspace{0.5em} \mathbf{p} \gets \Phi^{-1} \mbf{k} \hspace{0.5em} \text{\small  and},\\
        &\CLIP_{t-1} + f_t(\mbf{p}) + \mathbb{W}^1 ( \mbf{p}, \mbf{p}_{t-1} ) + \mathbb{W}^1 ( \mbf{p}, \mbf{a}_t ) + \tau c(\mbf{a}_t) + (1 - z^{(t-1)} \kern-0.3em - c(\mbf{p}))L + \max \{ A^{(t)} \kern-0.3em - z^{(t-1)} \kern-0.3em - c(\mbf{p}),  0 \}(U-L)\\ 
       & \hspace{32em} \leq (1+\varepsilon) [\ADV_{t} + \tau c(\mbf{a}_t) + (1 - A^{(t)} ) L].
       \end{split} \label{eq:const-constraint}
    \end{align}}
   \State Update running cost $\CLIP_t \gets \CLIP_{t-1} + f_t(\mbf{p}_t) + \mathbb{W}^1 ( \mbf{p}_t , \mbf{p}_{t-1} )$ and utilization $z^{(t)} \gets z^{(t-1)} + c(\mbf{p}_t)$.
   \State Solve \textbf{unconstrained} pseudo-cost minimization problem: {\scriptsize \begin{align}
       \Tilde{\mbf{k}}_t &= \argmin_{\mbf{k} \in K : \overline{c}(\mbf{k}) \leq 1-z^{(t-1)}} \overline{f}_t(\mbf{k}) + \lVert \mbf{k} - \mbf{k}_{t-1} \rVert_{\ell_1 (\mbf{w})} - \int_{\rho^{(t-1)}}^{\rho^{(t-1)} + \overline{c}(\mbf{k})} \psi^{(\varepsilon)} (u) \ du \label{eq:robustUnconstrained}
    \end{align}}
    \State Update \textit{robust pseudo-utilization} $\rho^{(t)} \gets \rho^{(t-1)} + \min \{  \overline{c}(\Tilde{\mbf{k}}_t), c(\mbf{p}_t) \}$.
   \EndWhile
\end{algorithmic}}
\end{algorithm*}

\subsection{Main results}

{\color{blue}
In \autoref{thm:constrobCLIP}, we give upper bounds on the robustness and consistency of \CarbonClipper.  
}

\begin{thm} \label{thm:constrobCLIP}
For any $\varepsilon \in (0, \eta - 1]$, \emph{\CarbonClipper} is $(1+\varepsilon)$-consistent and $O(\log n) \gamma^{(\varepsilon)}$-robust for \SOAD, where $\gamma^{(\varepsilon)}$ is the solution to \eqref{eq:gamma}.
\end{thm}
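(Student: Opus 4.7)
The plan is to prove the consistency and robustness bounds separately, exploiting the deliberate decoupling in \CarbonClipper between the advice side (on $\Delta_{\mathcal{S}}$) and the robust side (on $(K, \lVert\cdot\rVert_{\ell_1(\mathbf{w})})$). For consistency, I would first show that the constrained minimization \eqref{eq:pseudocostCLIP}--\eqref{eq:const-constraint} is always feasible, via the candidate $\mathbf{p} = \mathbf{a}_t$: under this choice, the $\mathbb{W}^1(\mathbf{p}, \mathbf{a}_t)$ term vanishes, and an inductive hypothesis on the validity of the constraint at step $t-1$ combined with the triangle inequality for $\mathbb{W}^1$ on $\mathbf{p}_{t-1}, \mathbf{a}_{t-1}, \mathbf{a}_t$ is enough to absorb the new service-cost and switching-cost contributions into the right-hand side. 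Given feasibility, I would then observe that whenever the algorithm terminates, the extrapolation terms $(1 - z^{(t-1)} - c(\mathbf{p}))L$ and $\max\{A^{(t)} - z^{(t-1)} - c(\mathbf{p}), 0\}(U - L)$ in \eqref{eq:const-constraint} conservatively account for the worst-case mandatory allocation that \CarbonClipper could be forced into, while the symmetric $\tau c(\mathbf{a}_t)$ and $(1 - A^{(t)})L$ terms on the right-hand side bound the corresponding overhead for $\ADV$. Taking the sum over all steps, the transient worst-case terms telescope and collapse exactly to the actual cost at termination, yielding $\CLIP \leq (1+\varepsilon) \ADV$.

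For the robustness bound, I would mirror the two-lemma structure used in the proof of \autoref{thm:etaCompPCM}, substituting $\psi^{(\varepsilon)}$ and $\gamma^{(\varepsilon)}$ for $\psi$ and $\eta$, and using the \emph{robust pseudo-utilization} $\rho^{(t)}$ in place of $z^{(t)}$. Let $j$ denote the last non-mandatory step. First, I would establish a lower bound $\OPT(\mathcal{I}) \geq \max\{\psi^{(\varepsilon)}(\rho^{(j)}) - D, L\} / O(\log n)$, by a contradiction argument: the unconstrained pseudo-cost minimization \eqref{eq:robustUnconstrained} would have accepted more mass if the best service cost in the sequence fell below $\psi^{(\varepsilon)}(\rho^{(j)}) - D$, and the $O(\log n)$ factor is inherited from the distortion of the Fakcharoenphol--Rao--Talwar embedding. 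Second, I would upper bound $\mathbb{E}[\CarbonClipper(\mathcal{I})] \leq \int_{0}^{\rho^{(j)}} \psi^{(\varepsilon)}(u)\,du + (1 - \rho^{(j)})U + \tau \rho^{(j)}$, where the integral controls the expected service-plus-movement cost during non-mandatory steps (via the definition of the pseudo-cost minimization), the $(1 - \rho^{(j)})U$ term accounts for mandatory allocation, and $\tau \rho^{(j)}$ bounds residual temporal switching overhead. Dividing and maximizing over $\rho^{(j)} \in [0,1]$ gives the transcendental relation in \eqref{eq:gamma} that defines $\gamma^{(\varepsilon)}$, so the resulting ratio is at most $O(\log n) \gamma^{(\varepsilon)}$.

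The main obstacle is to reconcile the decoupled tracking of $z^{(t)}$ and $\rho^{(t)}$ in the robustness analysis. Because the consistency constraint \eqref{eq:const-constraint} may force \CarbonClipper to service the workload \emph{more aggressively} than the unconstrained robust step $\tilde{\mathbf{k}}_t$ would prescribe, the actual utilization $z^{(t)}$ can outpace $\rho^{(t)}$. I would need to verify that the update rule $\rho^{(t)} \gets \rho^{(t-1)} + \min\{\overline{c}(\tilde{\mathbf{k}}_t), c(\mathbf{p}_t)\}$ preserves $\rho^{(t)} \leq z^{(t)}$ throughout, and, crucially, that any ``extra'' progress $c(\mathbf{p}_t) - \overline{c}(\tilde{\mathbf{k}}_t) > 0$ only improves the robust bound: such additional progress is covered by the consistency constraint, so its effective cost is at most that of $\ADV$, which is itself bounded by $\OPT$. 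A second technical subtlety is that $D$ enters $\psi^{(\varepsilon)}$ both inside and outside the exponential, and the bookkeeping for the expected spatial movement cost must invoke the optimal transport coupling from \autoref{thm:equivRand} to correctly attribute it to $\mathbb{W}^1$ in the pseudo-cost integral. Once these accounting steps are in place, combining the two lemmas through the defining equation \eqref{eq:gamma} of $\gamma^{(\varepsilon)}$ completes the robustness argument.
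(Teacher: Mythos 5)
Your consistency argument is essentially the paper's: feasibility via $\mathbf{p} = \mathbf{a}_t$, then resolving the mandatory allocation using the hedging terms in \eqref{eq:const-constraint}. The paper does this by a two-case analysis at the last non-mandatory step $j$ (whether \CarbonClipper has under- or over-provisioned relative to \ADV, which determines whether the $\max\{\cdot\}$ term is active), rather than by telescoping, but the substance is the same and your sketch would go through once you make that case split explicit.

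The robustness argument, however, has a genuine gap. You propose to rerun the \PCM two-lemma analysis with $\psi^{(\varepsilon)}$ and $\rho^{(j)}$, maximize the ratio over $\rho^{(j)} \in [0,1]$, and claim this "gives the transcendental relation in \eqref{eq:gamma}." It does not. Equation \eqref{eq:gamma} explicitly involves $\varepsilon$, and $\gamma^{(\varepsilon)}$ is determined not by the pseudo-cost running its full course to $\rho = 1$ but by the \emph{consistency constraint capping the robust utilization}: the paper's key step (Proposition \ref{pro:zpcm}) shows that when \ADV is ``inactive,'' the constraint \eqref{eq:const-constraint} becomes binding once the robust pseudo-cost has driven the utilization to $z_{\PCM} = \gamma^{(\varepsilon)} \ln\bigl[\tfrac{U-L-D-2\tau}{U - U/\gamma^{(\varepsilon)} - D - 2\tau}\bigr] < 1$, because \CarbonClipper must reserve enough headroom to follow an advice that might still complete the constraint at cost $L$. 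The remaining $(1 - z_{\PCM})$ is then charged at $U$ in the mandatory allocation, and it is \emph{this} combination that produces \eqref{eq:gamma}. Your unconstrained maximization over $\rho^{(j)}$ omits the mechanism that couples robustness to $\varepsilon$ entirely.

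A second, related gap is your treatment of ``overactive'' advice. You argue that any extra progress forced by the consistency constraint "is covered by the consistency constraint, so its effective cost is at most that of \ADV, which is itself bounded by \OPT." In the robustness regime \ADV is \emph{not} bounded by \OPT — that is precisely the adversarial scenario robustness must handle. The paper needs Proposition \ref{pro:zadv} to lower-bound the utilization $z_{\ADV}$ that \CarbonClipper is forced to spend following bad advice at marginal cost $\mathbb{V} \gg \OPT$, and then a sub-case analysis (comparing $1 - z_{\ADV}$ with $z_{\PCM}$ and bounding $z_{\ADV}\mathbb{V}$ by $(1-z_{\PCM})(U+D)/(1+\varepsilon/z_{\PCM})$) to show the total still fits within $O(\log n)\gamma^{(\varepsilon)}$. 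Without this case, your argument cannot rule out that the consistency constraint drags the algorithm into cost that exceeds the claimed robustness bound.
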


{\color{blue}
\noindent Furthermore, %
we give a lower bound on the best achievable robustness ratio for any %
$(1+\varepsilon)$-consistent algorithm, using a construction of a challenging metric space and service cost sequence -- since robustness and consistency are defined over all valid inputs (i.e., based on competitive ratio), this result characterizes the \textit{optimal robustness-consistency trade-off}, and implies that \CarbonClipper matches the optimal up to $\log$ factors that are due to the metric embedding.
}

\begin{thm} \label{thm:optimalconstrobCSWM}
Given untrusted advice \emph{\ADV} and $\varepsilon \in (0, \eta - 1]$, any $(1 + \varepsilon)$-consistent learning-augmented algorithm for \SOAD is at least $\gamma^{(\varepsilon)}$-robust, where $\gamma^{(\varepsilon)}$ is defined in \eqref{eq:gamma}.
\end{thm}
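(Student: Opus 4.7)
\smallskip
\noindent \textbf{Proof Proposal of \autoref{thm:optimalconstrobCSWM}.}
My plan is to generalize the adversary construction used for \autoref{thm:lowerboundCSWM} by additionally specifying the black-box advice \ADV, then follow the same differential-inequality recipe with an extra constraint encoding $(1+\varepsilon)$-consistency. Specifically, I would reuse the weighted-star metric on $n$ points (each with an \ON \ / \OFF pair, pairwise distance $D$) and the $y$-adversary classes $\mathcal{G}_y$ and $\mathcal{A}_y$ from \sref{Definition}{dfn:yadversary-caslb}. Recall that $\OPT(\mathcal{G}_y) \to \min\{y+D+\tau, U\}$ (good costs appear at points other than the starting point and the current algorithm state) while $\OPT(\mathcal{A}_y) \to y$ (good costs eventually return to the starting point). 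I would then define the advice \ADV to be the optimal offline decision sequence for $\mathcal{A}_y$, so that on any instance drawn from $\mathcal{A}_y$ the consistency bound forces $\ALG \leq (1+\varepsilon) y$, while on any instance drawn from $\mathcal{G}_y$ the advice is misleading and robustness must be certified against the different optimum $\min\{y+D+\tau, U\}$.

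Next, as in the proof sketch of \autoref{thm:lowerboundCSWM}, I would fully parameterize any (possibly randomized) online algorithm's behavior on this family by two constraint-satisfaction functions $s(y), t(y) : [L,U] \to [0,1]$, where $s(y)$ captures the cumulative constraint the algorithm has satisfied via spatial migration (relevant for $\mathcal{G}_y$) and $t(y)$ captures progress made at the starting point (relevant for $\mathcal{A}_y$) by the time the adversary has driven the threshold price down to $y$. Writing $\ALG(\mathcal{G}_y)$ and $\ALG(\mathcal{A}_y)$ as in \eqref{eq:lb-special-cost} in terms of $s, t$ and their derivatives, I obtain two constraints: the \emph{robustness} inequality $\ALG(\mathcal{G}_y) \leq \gamma^{(\varepsilon)} \min\{y+D+\tau, U\}$ (giving the same differential inequality as in the non-augmented lower bound), and the new \emph{consistency} inequality $\ALG(\mathcal{A}_y) \leq (1+\varepsilon) y$, which produces a boundary condition of the form $s(L) + t(L) \le 1$ combined with an upper bound on $t(L)$ derived from consistency.

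I would then apply Gr\"{o}nwall's Inequality~\cite[Theorem 1, p. 356]{Mitrinovic:91} to the robustness differential inequality governing $s(y)$, mirroring the manipulation done for \eqref{align:lb-conditionineq} but with the consistency-induced boundary condition replacing the one used before. Combining the resulting lower bound on $s(L)$ with the upper bound $s(L) \leq 1 - t(L)$ and the consistency-driven estimate of $t(L)$, I would arrive at a transcendental necessary condition on $\gamma^{(\varepsilon)}$ of the form
\[
\gamma^{(\varepsilon)} \ln\!\left( \frac{U - L - D - 2\tau}{U - \nicefrac{U}{\gamma^{(\varepsilon)}} - D - 2\tau} \right) \ge \frac{L}{U-L+D}\left( \frac{U}{L} + \varepsilon - \gamma^{(\varepsilon)} \right),
\]
which rearranges to exactly \eqref{eq:gamma} when the inequality binds, giving the optimal $\gamma^{(\varepsilon)}$. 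The limits $\varepsilon \to 0$ yielding $\gamma^{(\varepsilon)} \to U/L$ and $\varepsilon \to \eta-1$ yielding $\gamma^{(\varepsilon)} \to \eta$ serve as sanity checks against the fully-trusted and fully-robust extremes.

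The main obstacle will be the careful bookkeeping of the consistency boundary condition. Unlike the purely robust lower bound, the advice introduces a hedging \emph{cost} against the possibility of future mandatory movement/switching (the $\mathbb{W}^1(\mbf{p}, \mbf{a}_t)$ and $\tau\, c(\mbf{a}_t)$ terms that appear in \CarbonClipper's consistency constraint), and these must be reflected in the lower-bound construction so that the extracted constraint on $t(L)$ is tight rather than loose. Getting this charging argument right -- in particular, ensuring that the adversary's choice of \ADV forces the algorithm to incur the same additive terms that appear on the left-hand side of \eqref{eq:const-constraint} -- is what yields the precise constants $D + 2\tau$ inside the logarithm of \eqref{eq:gamma} and thus the exact match (up to $\log n$) with the upper bound of \autoref{thm:constrobCLIP}. \hfill \qed
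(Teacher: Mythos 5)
Your overall architecture --- the weighted-star $y$-adversaries, the constraint-satisfaction functions $s(y),t(y)$, Gr\"{o}nwall's inequality, and combining a robustness-forced lower bound on progress with a consistency-forced spending cap --- is the paper's route, and your final transcendental inequality does rearrange to \eqref{eq:gamma}. (The paper also swaps $\mathcal{A}_y$ for a lightly modified $\mathcal{A}'_y$ whose Stage 2 jumps directly to price $y$ at the starting point, which keeps the consistency bookkeeping clean; using the original alternating Stage 2 is a complication but not the core issue.)

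The genuine gap is in \emph{where you certify robustness}. You fix a single advice sequence (the $\mathcal{A}_y$-optimum) and impose robustness only on $\mathcal{G}_y$, where that advice happens to be misleading. Gr\"{o}nwall applied to that constraint forces only $s(y) \geq \gamma \ln\bigl(\tfrac{U-y-D-2\tau}{U-\nicefrac{U}{\gamma}-D-2\tau}\bigr) - \tfrac{\gamma(D+2\tau)}{\nicefrac{U}{\gamma}-U+D}$, i.e., a bound on Stage-1 progress alone, carrying an additive slack term. On the two-stage instance you impose only consistency, so nothing forces robust progress during Stage 2: the algorithm may set $t \equiv 0$ through the descent and simply follow the advice once it activates. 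But robustness is defined over \emph{all} advice, and the tight bound requires exploiting this: the paper pairs the \emph{same} two-stage instance with two advice sequences that are indistinguishable until the final steps (both sit at the starting \OFF state) --- an inactive one, under which $\gamma$-robustness forces $s(y)+t(y) \geq \gamma\ln(\cdots)$ with the two slack terms from the \autoref{thm:lowerboundCSWM} analysis cancelling exactly, and the optimal one, under which $(1+\varepsilon)$-consistency caps the spending. Because the two advices coincide until the end, both constraints bind the same functions $s,t$. Dropping the ``bad advice on the two-stage instance'' case weakens the forced robust progress by $\tfrac{\gamma(D+2\tau)}{\nicefrac{U}{\gamma}-U+D}$, so the necessary condition you would derive does not bind at \eqref{eq:gamma} unless $D=\tau=0$; the indistinguishable bad-advice construction is the missing ingredient.
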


\noindent Learning-augmentation and robustness-consistency trade-offs have been considered in both $\MTS$ and $\OWT$ -- we briefly review how \autoref{thm:optimalconstrobCSWM} compares. 
{\color{blue}
For $\MTS$, \citet{Christianson:23MTS} show that for $\varepsilon \in (0, 1]$, any $(1+\varepsilon)$-consistent algorithm must be $2^{\Omega(\nicefrac{1}{\varepsilon})}$-robust.
While optimal trade-offs for the minimization variant of \OWT have not been studied, \citet{SunLee:21} show that %
any $\gamma$-robust algorithm must be $\theta / \left[ \nicefrac{\theta}{\gamma} + (\theta - 1) (1 - \nicefrac{1}{\gamma} \ln ( \nicefrac{\theta - 1}{\gamma-1} ) ) \right]$-consistent in the maximization case, where $\theta = \nicefrac{U}{L}$ is the price bound ratio.  
}
While these bounds are not directly comparable, it is notable that the extra structure of \SOAD allows it to avoid the unbounded exponential robustness of \MTS.%

\subsection{Proof overviews}
{\color{blue}
We now give proof sketches of \sref{Theorems}{thm:constrobCLIP} and \ref{thm:optimalconstrobCSWM}, relegating the full proofs to \autoref{apx:clip}.
}

\smallskip
\noindent \textbf{Proof Sketch of \autoref{thm:constrobCLIP}.}
We separately consider consistency and robustness in turn.  
\vspace{-0.4em}
\begin{lem} \label{lem:clipconst}
{\color{blue}
    \emph{\CarbonClipper} is \emph{$(1+\varepsilon)$}-consistent when the advice is correct, i.e., \emph{$\ADV(\mathcal{I}) = \OPT(\mathcal{I})$}.
}
\end{lem}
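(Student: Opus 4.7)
The plan is to derive $(1+\varepsilon)$-consistency directly from the consistency constraint \eqref{eq:const-constraint} that \CarbonClipper imposes at every step, in two movements: (i) an inductive argument that \eqref{eq:const-constraint} is always feasible, so that \CarbonClipper's actual decision $\mbf{p}_t$ satisfies it at every $t$; and (ii) an application of \eqref{eq:const-constraint} at the final step to convert the running-cost inequality into a bound on total cost.

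The feasibility step is the crux of the argument. The base case is immediate since $\mbf{p}_0 = \mbf{a}_0 = \delta_s$. For the inductive step, I would substitute the candidate $\mbf{k} = \Phi \mbf{a}_t$ (which produces $\mbf{p} = \Phi^{-1}\mbf{k} = \mbf{a}_t$, and which lies in the feasible set of \eqref{eq:pseudocostCLIP} since $\ADV$ is feasible for \SOAD) into \eqref{eq:const-constraint} and verify the inequality. Three ingredients drive the algebra: first, the triangle inequality $\mathbb{W}^1(\mbf{a}_t, \mbf{p}_{t-1}) \leq \mathbb{W}^1(\mbf{p}_{t-1}, \mbf{a}_{t-1}) + \mathbb{W}^1(\mbf{a}_{t-1}, \mbf{a}_t)$, which relates the movement cost at time $t$ under $\mbf{p} = \mbf{a}_t$ to the previous step's pre-charged movement cost $\mathbb{W}^1(\mbf{p}_{t-1}, \mbf{a}_{t-1})$ from the inductive hypothesis; second, the identity $A^{(t)} - z^{(t-1)} - c(\mbf{a}_t) = A^{(t-1)} - z^{(t-1)}$ (immediate from $A^{(t)} = A^{(t-1)} + c(\mbf{a}_t)$), which causes the two $\max\{\cdot,0\}(U-L)$ terms at times $t$ and $t-1$ to cancel exactly; and third, \sref{Assumption}{ass:hitting-cost}, which guarantees $f_t(\mbf{a}_t) \geq L\,c(\mbf{a}_t)$ so that the residual $\varepsilon$-multiple of per-step $\ADV$ cost terms is non-negative. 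Composing these three reductions with the inductive hypothesis reproduces exactly the right-hand side of \eqref{eq:const-constraint} at time $t$, establishing feasibility and thus the existence of a valid $\mbf{p}_t$.

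Once feasibility is established, \CarbonClipper's decision $\mbf{p}_t$ satisfies \eqref{eq:const-constraint} at every $t$, and I would iterate this inequality through the final step. When the advice is correct we have $\ADV(\mathcal{I}) = \OPT(\mathcal{I})$ and $A^{(T)} = z^{(T)} = 1$ at termination, so the $(1 - z^{(T-1)} - c(\mbf{p}_T))L$ and $\max$ hedging terms vanish (or telescope across the final $\max$-cancellation), leaving
\[
\CLIP_T + \mathbb{W}^1(\mbf{p}_T, \mbf{a}_T) + \tau c(\mbf{a}_T) \;\leq\; (1+\varepsilon)\bigl[\ADV_T + \tau c(\mbf{a}_T)\bigr].
\]
Adding the final switching cost to an \OFF state on both sides (at most $\tau c(\mbf{p}_T)$ for \CarbonClipper, which is absorbed by the $\tau c(\mbf{a}_T)$ and $\mathbb{W}^1(\mbf{p}_T,\mbf{a}_T)$ terms already budgeted on the left, and exactly $\tau c(\mbf{a}_T)$ for $\ADV$ on the right) then yields the total-cost bound $\CarbonClipper(\mathcal{I}) \leq (1+\varepsilon)\OPT(\mathcal{I})$.

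The hardest part is the feasibility induction. The proof leans heavily on the foresighted design of the hedging terms $\mathbb{W}^1(\mbf{p}, \mbf{a}_t)$ and $\tau c(\mbf{a}_t)$ in \eqref{eq:const-constraint}: the former exists precisely so that the triangle-inequality bound above can recycle the inductive hypothesis, and the latter accounts in advance for the eventual switching-off cost that both $\ADV$ and \CarbonClipper must pay at the deadline. A secondary subtlety worth flagging is that the consistency constraint is evaluated on $\Delta_{\mathcal{S}}$ after applying $\Phi^{-1}$, not on the tree space $K$; this decoupling is what prevents an $O(\log n)$ distortion from leaking into the consistency bound, and it distinguishes \CarbonClipper from a naive application of the \CLIPacro technique.
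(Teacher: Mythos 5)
Your feasibility induction is a reasonable (and in fact more detailed) version of what the paper dispatches in one sentence — the paper simply asserts that $\mbf{k}_t = \Phi\mbf{a}_t$ is always feasible for \eqref{eq:const-constraint} — so that part of your argument is fine in spirit, modulo the small check that $\mathbb{W}^1(\mbf{a}_t,\mbf{a}_{t-1}) + \tau c(\mbf{a}_t) - \tau c(\mbf{a}_{t-1}) \geq 0$ when $\ADV$ scales down its \ON allocation. The genuine gap is in your endgame. The constraint \eqref{eq:const-constraint} is only enforced while the main loop runs, i.e., for $t \le j$ where $j$ marks the onset of the \emph{mandatory allocation}; for $t \in [j, T]$ a cost-agnostic player completes the remaining $(1 - z^{(j-1)})$ of the deadline constraint and no constraint is imposed. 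So your claim that $\mbf{p}_t$ satisfies \eqref{eq:const-constraint} "at every $t$" through $T$, with $z^{(T)} = 1$ making the hedging terms vanish, is not available: in general \CarbonClipper reaches time $j$ with $z^{(j-1)} < 1$, and the cost it pays to finish during $[j,T]$ is exactly what must be reconciled against the terms $(1 - z^{(j-1)})L + \max\{A^{(j-1)} - z^{(j-1)}, 0\}(U-L)$ and $\mathbb{W}^1(\mbf{p}_{j-1}, \mbf{a}_{j-1})$ that were pre-charged on the left-hand side at step $j-1$.

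This reconciliation is the entire substance of the paper's proof, and it requires a case analysis you do not perform. In \textbf{Case 1} (\CarbonClipper has completed less of the constraint than $\ADV$, i.e., $z^{(j-1)} < A^{(j-1)}$), the argument is that the pre-paid $\mathbb{W}^1(\mbf{p}_{j-1},\mbf{a}_{j-1})$ lets \CarbonClipper move back and exploit the \emph{same} cost functions as $\ADV$ for the first $(1 - A^{(j-1)})$ units — whose worst case for the ratio is $L$ per unit — while the excess $(A^{(j-1)} - z^{(j-1)})$ is charged at $U$ per unit, matching the $\max\{\cdot\}(U-L)$ budget. In \textbf{Case 2} (\CarbonClipper has overprovisioned), one sets $y = \ex[\sum_{t\ge j} f_t(\mbf{p}_t)] - L\sum_{t \ge j} c(\mbf{p}_t)$ and $y' = \sum_{t\ge j} f_t(\mbf{a}_t) - L\sum_{t\ge j}c(\mbf{a}_t)$, observes $0 \le y \le y'$, and uses the mediant-type inequality to show adding $y$ and $y'$ to numerator and denominator cannot increase the ratio beyond $(1+\varepsilon)$. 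Without an argument of this form, "the hedging terms vanish or telescope" is precisely the step that is doing no work; the hedging terms do not vanish — they must be shown to dominate the realized mandatory-allocation cost.
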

\vspace{-0.4em}
\noindent For consistency, recall that the constraint enforces that the expected cost of \CarbonClipper \textit{thus far} at time $j$ (i.e., before mandatory allocation) satisfies \eqref{eq:const-constraint}.
Since this constraint holds for all steps before the mandatory allocation, we must resolve the cost \textit{during the mandatory allocation}.  We characterize two worst-case scenarios based on whether \CarbonClipper has completed \textit{less} (\textbf{Case 1}, see \eqref{eq:case1consistency}) or \textit{more} (\textbf{Case 2}, see \eqref{eq:case2consistency}) of the deadline constraint compared to \ADV.
In either of these cases, \eqref{eq:case1consistency} and \eqref{eq:case2consistency} show that replacing the ``hedging terms'' that follow $\CLIP_{j-1}$ and $\ADV_{j-1}$ in the constraint with worst-case service and movement costs yields a consistency ratio that is $\le (1 + \varepsilon)$.
\vspace{-0.4em}
\begin{lem} \label{lem:cliprob}
    {\color{teal} \emph{\CarbonClipper} is $O(\log n) \gamma^{(\varepsilon)}$-robust, where $\gamma^{(\varepsilon)}$ is defined in \eqref{eq:gamma}.}
\end{lem}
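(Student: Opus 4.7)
The plan is to mirror the two-lemma structure used to prove \autoref{thm:etaCompPCM}, but applied to the robust pseudo-utilization $\rho^{(t)}$ rather than the total utilization $z^{(t)}$. By construction of \CarbonClipper, $\rho^{(t)} \leq z^{(t)}$ for all $t$, so tracking progress via $\rho$ never overstates what the algorithm has actually accomplished, and this bookkeeping should feed cleanly into a \PCM-style pseudo-cost argument against \OPT on \SOAD.

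First, I would establish an \OPT lower bound analogous to \sref{Lemma}{lem:pcm-opt-lb}. The unconstrained minimization \eqref{eq:robustUnconstrained} governing the $\rho$-update is structurally identical to the \PCM pseudo-cost minimization \eqref{eq:pseudocostPCM} with $\psi$ replaced by $\psi^{(\varepsilon)}$, so the same contradiction argument should transplant: its KKT conditions force that some service cost in the input is bounded above by $\psi^{(\varepsilon)}(\rho^{(j)}) - D - \tau$, where $j$ is the last step before mandatory allocation. Combining with the trivial bound $L$ and the $O(\log n)$ distortion from the tree embedding of \sref{Definition}{dfn:tree} yields $\OPT(\mathcal{I}) \geq \max\{\psi^{(\varepsilon)}(\rho^{(j)}) - D - \tau,\ L\}/O(\log n)$.

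Next, I would upper-bound \CarbonClipper's expected cost by decomposing each step into a robust portion of size $\rho^{(t)} - \rho^{(t-1)}$ and a consistency-driven portion of size $c(\mbf{p}_t) - (\rho^{(t)} - \rho^{(t-1)}) \geq 0$. On the robust portion, the optimality of $\tilde{\mbf{k}}_t$ in \eqref{eq:robustUnconstrained} bounds the expected service-plus-switching cost by the $\psi^{(\varepsilon)}$ integrand (summing to $\int_0^{\rho^{(j)}} \psi^{(\varepsilon)}(u)\,du$), with an additional $\tau \rho^{(j)}$ temporal switching overhead as in \sref{Lemma}{lem:pcm-alg-ub}. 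On the consistency-driven portion, any excess progress costs at most $U$ per unit but reduces the mandatory allocation by the same amount, so these two contributions telescope into $(1-\rho^{(j)})U$. The ratio of these two bounds is at most $\gamma^{(\varepsilon)}$, since \eqref{eq:gamma} is precisely the transcendental equation obtained by balancing the two branches of the $\max$ in the \OPT lower bound and solving for worst-case $\rho^{(j)}$; multiplying by the embedding distortion delivers $O(\log n) \gamma^{(\varepsilon)}$-robustness.

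The main obstacle I anticipate is the telescoping argument for the consistency-driven portion: rigorously showing that every unit of such extra progress can be charged one-to-one against a saved unit of mandatory allocation, without double-counting the switching costs incurred by the gap between $\tilde{\mbf{k}}_t$ and $\mbf{k}_t$ when the constraint \eqref{eq:const-constraint} is binding. I expect to resolve this via the triangle inequality on $\norm{\cdot}$ together with the $\beta^{(u)} \leq \tau c^{(u)}$ bound on temporal switching in \sref{Assumption}{ass:switching-cost}, which together ensure that any forced deviation of $\mbf{k}_t$ from the robust target $\tilde{\mbf{k}}_t$ can be absorbed into the same $U$-per-unit accounting that covers the mandatory allocation it preempts.
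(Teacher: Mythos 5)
Your high-level plan (an \OPT lower bound via the pseudo-cost plus a cost upper bound via $\rho^{(j)}$) captures only one of the two regimes the proof must handle, and the step you flag as the ``main obstacle'' is in fact where the argument breaks. The paper's proof is a case analysis on the \emph{advice}: when \ADV is ``inactive,'' the consistency constraint \eqref{eq:const-constraint} caps how much the robust minimization is \emph{allowed} to accept at an explicit value $z_{\PCM} = \gamma^{(\varepsilon)}\ln\bigl(\tfrac{U-L-D-2\tau}{U-\nicefrac{U}{\gamma^{(\varepsilon)}}-D-2\tau}\bigr)$ (\sref{Prop.}{pro:zpcm}); when \ADV is ``overactive'' at per-unit cost $\mathbb{V}$, the constraint \emph{forces} \CarbonClipper to follow \ADV for at least $z_{\ADV} \geq 1 - \tfrac{\mathbb{V}\varepsilon}{U+D-\mathbb{V}}$ of the budget (\sref{Prop.}{pro:zadv}). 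Neither quantity appears in your argument, and both are essential: the value of $\gamma^{(\varepsilon)}$ in \eqref{eq:gamma} comes precisely from $\psi^{(\varepsilon)}(z_{\PCM}) - D = L$, not from ``balancing the two branches of the max over worst-case $\rho^{(j)}$.''

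Concretely, your telescoping step fails in the overactive regime. First, the per-unit cost of consistency-driven progress is not $U$: following \ADV can cost up to $\tfrac{U+D}{1+\varepsilon}$ per unit \emph{including spatial movement} $D$ (this is why Sub-case 2.1 of the paper shows the constraint is non-binding only once $\mathbb{V} \geq \tfrac{U+D}{1+\varepsilon}$), and the triangle inequality plus $\beta^{(u)} \leq \tau c^{(u)}$ only controls the temporal part, not the $D$ term. Second, and more fundamentally, the forced following consumes budget: the constraint $\overline{c}(\mbf{k}) \leq 1 - z^{(t-1)}$ means the robust pseudo-utilization can be stuck at $\rho^{(j)} = 1 - z_{\ADV} < z_{\PCM}$ even though service costs as low as $L$ arrive later, so $\OPT$ can be $\nicefrac{L}{O(\log n)}$ while $\psi^{(\varepsilon)}(\rho^{(j)}) - D - \tau > L$ --- your \OPT lower bound is then false, because the unconstrained minimizer $\tilde{\mbf{k}}_t$ \emph{wanted} to accept more but could not. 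Closing this requires the additional bound $z_{\ADV}\mathbb{V} \leq \tfrac{(1-z_{\PCM})(U+D)}{1+\varepsilon/z_{\PCM}}$ used in Sub-case 2.2.2, which charges the wasted advice-following cost against the shortfall $z_{\PCM} - (1-z_{\ADV})$ in the pseudo-cost integral. Without the $z_{\PCM}$/$z_{\ADV}$ characterizations and this trade-off, the claimed ratio $O(\log n)\gamma^{(\varepsilon)}$ does not follow.
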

\vspace{-0.4em}
{\color{blue}
\noindent For robustness, we define two cases that characterize ``bad'' advice, namely ``inactive'' advice that forces mandatory allocation (\textbf{Case 1}, see \eqref{eq:case1robustness}), and ``overactive'' advice that incurs sub-optimal cost (\textbf{Case 2}, see \eqref{eq:case2robustness}).}
For each of these, we derive bounds on the portion of \CarbonClipper's expected solution that is \textit{allowed} to come from the pseudo-cost minimization without violating consistency.

{\color{blue}
In Case 1, \CarbonClipper assumes that \ADV can satisfy the constraint at the best possible service cost $L$, so we derive an \textit{upper bound} describing the maximum utilization achievable via the pseudo-cost minimization before the mandatory allocation (see \sref{Prop.}{pro:zpcm}).}  In Case 2, \CarbonClipper must follow \ADV to avoid violating consistency, even if \ADV incurs sub-optimal cost -- we derive a \textit{lower bound} on the amount of utilization that \CarbonClipper must ``spend'' while continually satisfying the $(1+\varepsilon)$-consistency constraint (see \sref{Prop.}{pro:zadv}).  These characterizations enable pseudo-cost proof techniques (e.g., as in \sref{Thm.}{thm:etaCompPCM}) that show $O(\log n) \gamma^{(\varepsilon)}$-robustness in each case. \hfill \qed

\bigskip
\noindent \textbf{Proof Sketch of \autoref{thm:optimalconstrobCSWM}.}
In \sref{Definition}{dfn:yadversary-constrob}, we define a slight variant of the special metric and $y$-adversary construction from \sref{Thm.}{thm:lowerboundCSWM}, denoted by $\mathcal{A}'_y$.  Informally, $\mathcal{A}'_y$ presents ``good'' cost functions at distant points, before eventually presenting just the best service cost functions (i.e., $y$) at the starting point.
We consider two types of advice that each capture consistency and robustness, respectively.  In this setting, bad advice completes none of the deadline constraint before the mandatory allocation, while good advice makes the exact decisions that recover $\OPT(\mathcal{A}'_y)$.

{\color{blue}
Using the proof of \autoref{thm:lowerboundCSWM}, we characterize the cost of a learning-augmented algorithm \ALG according to two arbitrary \textit{constraint satisfaction functions} $s(y), t(y) : [L,U] \to [0,1]$ (see \eqref{eq:lb-costconstrob}). 
Conditioned on the advice that \ALG receives, any $\alpha$-consistent and $\gamma$-robust $\ALG$ must satisfy two conditions, where the robustness condition follows from the proof of \autoref{thm:lowerboundCSWM} (see \eqref{eq:rob-bound-gron-min}), and the consistency condition is given by $\gamma \int_{\mbf{\nicefrac{U}{\gamma}}}^L \ln \left( \frac{U - u - D + 2\tau}{ U - \nicefrac{U}{\gamma} - D - 2\tau } \right)  du + \left[ 2D + 2\tau \right] \left[ \gamma \ln \left( \frac{U - L - D + 2\tau}{ U - \nicefrac{U}{\gamma} - D - 2\tau} \right)  \right] \leq \alpha L - L$ (see \eqref{eq:const-bound-gron-min}).}  Substituting $\alpha \coloneqq (1+ \varepsilon)$ and binding the inequality above yields the result. \hfill \qed

\section{Generalization to Time-Varying Metrics}
\label{sec:timevarying}
{\color{blue}
Before moving to our case study, we present a generalization of the results in \sref{Sections}{sec:pcm} and \ref{sec:clip} to settings with \textit{time-varying metrics}.  
This is motivated by the applications of \SOAD (see \autoref{sec:examples}) since in practice, the distance between points in the metric (e.g., network delays, transit costs) may not be constant.
The extension to time-varying metrics is straightforward, and we present corollaries for both \PCM and \CarbonClipper after formalizing the extension of \SOAD that we consider.
}

\subsection{\SOAD with time-varying distances (\SOADt)}
In \SOAD with time-varying distances, we let $d_t(\cdot, \ \cdot) : t \in [T]$ denote a \textit{time-varying distance function} between points in $X$, and we assume that an online algorithm \ALG is always able to observe the current distance $d_t$ at time $t$.  Additionally, we redefine $D$ to be an upper bound on the normalized spatial distance between any two points in $X$ over the entire time horizon $T$, namely $D = \sup_{t \in [T]} \left( \max_{u, v \in X : u \not = v} \frac{d_t( u, \ v )}{\min \{ c^{(u)}, c^{(v)} \} } \right)$.
{\color{blue}
Although distances between the locations of $X$ are time varying, \SOADt assumes that the temporal switching cost between \ON and \OFF states at a single point $u \in X$ is constant (i.e., $\Vert \cdot \Vert_{\ell_1(\boldsymbol{\beta})}$ is \textit{not} time-varying) for simplicity of presentation.
}

\subsection{Main results}
In the following results, we show that our robust algorithm \PCM (see \autoref{alg:pcm}), and our learning-augmented algorithm \CarbonClipper (see \autoref{alg:carbonclipper}) are both sufficiently flexible to provide guarantees in \SOADt with minimal changes. {\color{blue} Note that the lower bounds in the time-invariant setting still apply to the time-varying setting (e.g., by setting $d_t$ constant for all $t \in [T]$).}

First, as a corollary to \autoref{thm:etaCompPCM}, we show that \PCM retains its $O( \log n ) \eta$ competitive bound in the setting of \SOADt.  We state the result here and give the full proof in \autoref{apx:timevaryPCM}.
\begin{cor}\label{cor:timevaryPCM}
\emph{\PCM} is $O(\log n ) \eta$-competitive for \SOADt, where $\eta$ is given by \eqref{eq:eta}.
\end{cor}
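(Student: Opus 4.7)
The plan is to mirror the proof of \autoref{thm:etaCompPCM}, verifying that each step of the argument remains valid when distances vary with time. Since the new definition of $D$ uniformly upper-bounds all pairwise normalized distances across the horizon, the pseudo-cost function $\psi$ in \sref{Definition}{dfn:psi-pcm} remains well-defined and \sref{Assumption}{ass:switching-cost} continues to hold at every time step.

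First, I would adapt \PCM so that at each time $t$ it constructs a fresh probabilistic tree embedding $\calT_t$ of the current metric $(X, d_t)$, yielding a vector space $(K_t, \norm{\cdot})$ with expected $O(\log n)$ distortion relative to $d_t$. The pseudo-cost minimization \eqref{eq:pseudocostPCM} is then solved in this current embedding, with the previous decision $\mbf{p}_{t-1}$ lifted into $K_t$ via the appropriate map $\Phi_t$ before evaluating the movement cost term $\norm{\mbf{k} - \mbf{k}_{t-1}}$.

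Next, I would re-verify the two key lemmas. \sref{Lemma}{lem:pcm-opt-lb} shows by contradiction that $\psi(z^{(j)}) - D$ lower-bounds the best service cost observed in the input sequence; since this argument relies only on the pseudo-cost structure and on $D$ bounding all movement costs (both of which hold in \SOADt), it carries over unchanged, giving the same lower bound on $\OPT(\mathcal{I})$. \sref{Lemma}{lem:pcm-alg-ub} bounds \PCM's expected cost by the pseudo-cost integral plus the mandatory allocation cost plus a $\tau z^{(j)}$ excess term; in the time-varying setting, each per-step movement cost in $\Delta_{\mathcal{S}}$ under $d_t$ is bounded by the corresponding $\ell_1$ distance in $\calT_t$ up to $O(\log n)$ distortion, so summing over time preserves the same overall bound.

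The main obstacle is handling the transition between consecutive tree embeddings cleanly. Concretely, I need to show that coupling $\mbf{p}_{t-1}$ and $\mbf{p}_t$ according to the optimal transport plan under $d_t$ incurs an expected movement cost bounded by $O(\log n) \norm{\Phi_t \mbf{p}_{t-1} - \Phi_t \mbf{p}_t}$. This follows from the standard property that a tree embedding gives a per-step upper bound on the Wasserstein-1 distance with expected $O(\log n)$ distortion, combined with the observation that the temporal switching contributions in $\calT_t$ are preserved exactly (as in \sref{Definition}{dfn:tree}). With these ingredients in place, combining the two verified lemmas via the integral inequality from the original proof yields the $O(\log n) \eta$-competitive bound for \SOADt.
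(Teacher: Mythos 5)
Your proposal is correct and follows essentially the same route as the paper's proof: redefine $D$ as a uniform bound over the whole horizon, reconstruct the tree embedding $(K,\norm{\cdot})$ at each step so that it reflects $d_t$, and observe that \sref{Lemma}{lem:pcm-opt-lb} and \sref{Lemma}{lem:pcm-alg-ub} only use the $D$ upper bound, the per-step $O(\log n)$ distortion of the embedding, and the fixed temporal switching cost $\tau c^{(u)}$. One small point of care in your handling of the embedding transition: the algorithm's realized movement cost is dominated by the tree norm \emph{pointwise} (FRT trees never contract distances), so the single $O(\log n)$ factor should enter only through the lower bound on $\OPT$ -- if an $O(\log n)$ distortion also appeared on the algorithm's side, as your phrasing of the coupling step suggests, the final bound would degrade to $O((\log n)^2)\,\eta$.
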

Furthermore, as a corollary to \autoref{thm:constrobCLIP}, we show that \CarbonClipper's consistency-robustness bound also holds for the time-varying setting of \SOADt when just one term is swapped within the consistency constraint.  We state the result here and give the full proof in \autoref{apx:timevaryCLIP}.
\begin{cor}\label{cor:timevaryCLIP}
With a minor change to the consistency constraint, \emph{\CarbonClipper} is $(1+\varepsilon)$-consistent and $O(\log n) \gamma^{(\varepsilon)}$-robust for \SOADt, where $\gamma^{(\varepsilon)}$ is the solution to \eqref{eq:gamma}.
\end{cor}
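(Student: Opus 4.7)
The plan is to follow the structure of the proof of \autoref{thm:constrobCLIP} (decomposing into consistency and robustness via analogs of \sref{Lemmas}{lem:clipconst} and~\ref{lem:cliprob}), observing that most of the argument transfers because the pseudo-cost function $\psi^{(\varepsilon)}$ depends on the metric only through the uniform normalized distance bound $D$, which in \SOADt is defined as a supremum over $t \in [T]$ and hence still uniformly bounds every pairwise distance at every time step. Since the temporal switching cost structure is unchanged and $D$ is a valid upper bound throughout, the tree embedding of \sref{Def.}{dfn:tree} can be built once using this uniform bound and retains $O(\log n)$ expected distortion with respect to every $d_t$, so the vector space $(K, \norm{\cdot})$ remains a valid host for the robust pseudo-cost minimization.

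The ``minor change'' to the consistency constraint is needed precisely in the hedging term $\mathbb{W}^1(\mbf{p}, \mbf{a}_t)$ in \eqref{eq:const-constraint}. This term charges \CarbonClipper in advance for the movement cost it might pay if a future mandatory allocation forces it to follow \ADV; in the static case, this cost is exactly $\mathbb{W}^1(\mbf{p}, \mbf{a}_t)$ because the metric does not change. In \SOADt, the actual movement cost at some future time $t' > t$ is $\mathbb{W}^1_{d_{t'}}(\mbf{p}, \mbf{a}_t)$, which can exceed $\mathbb{W}^1_{d_t}(\mbf{p}, \mbf{a}_t)$. I would replace this term with a uniform upper bound $\overline{\mathbb{W}}(\mbf{p}, \mbf{a}_t)$ computed under the worst-case distance $D$ (i.e., treating every pair of distinct points as if they were at normalized distance $D$), which upper bounds $\mathbb{W}^1_{d_{t'}}(\mbf{p}, \mbf{a}_t)$ for all $t' \in [T]$ and thus restores the worst-case hedging property that drives the consistency analysis. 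Note that $\mathbb{W}^1(\mbf{p}, \mbf{p}_{t-1})$ does \emph{not} need to change, as that term represents the movement cost actually paid at time $t$, which is known under $d_t$.

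With this modification, the proof of consistency proceeds exactly as in \sref{Lemma}{lem:clipconst}: at the time $j$ just before mandatory allocation, the consistency constraint still holds, and in both Case 1 (\CarbonClipper is behind \ADV in utilization) and Case 2 (\CarbonClipper is ahead), one substitutes the worst-case service, switching, and movement costs into the hedging terms and verifies the resulting ratio is bounded by $(1+\varepsilon)$. The only new step is verifying that $\overline{\mathbb{W}}(\mbf{p}, \mbf{a}_t)$ genuinely upper bounds the actual movement cost that \CarbonClipper pays during mandatory allocation, which follows immediately from $d_{t'}(u,v) \leq D \cdot \min\{c^{(u)}, c^{(v)}\}$ for all $t'$. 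The proof of $O(\log n) \gamma^{(\varepsilon)}$-robustness from \sref{Lemma}{lem:cliprob} goes through with essentially no change, since the pseudo-cost minimization in \eqref{eq:pseudocostCLIP} and the updates to $\rho^{(t)}$ via \eqref{eq:robustUnconstrained} depend only on $\psi^{(\varepsilon)}$, the tree embedding, and $D$, all of which are unaffected by time variation in $d_t$ once $D$ is taken as the supremum.

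The main obstacle I anticipate is ensuring that the modified (worst-case) hedging term does not over-constrain \eqref{eq:pseudocostCLIP} in a way that could make the feasible set empty or inflate the consistency bound beyond $(1+\varepsilon)$. This requires verifying that the action $\mbf{p} = \mbf{a}_t$ remains feasible for the constrained problem at every step (so mimicking \ADV is always an option), which follows since plugging $\mbf{p} = \mbf{a}_t$ zeroes out $\overline{\mathbb{W}}(\mbf{p}, \mbf{a}_t)$, and the remaining bookkeeping is identical to the static case. Once feasibility is established, both bounds are preserved and \sref{Corollary}{cor:timevaryCLIP} follows by combining the two analogs of \sref{Lemmas}{lem:clipconst} and~\ref{lem:cliprob}.
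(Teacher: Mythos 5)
Your core modification is exactly the one the paper makes: the hedging term $\mathbb{W}^1(\mbf{p},\mbf{a}_t)$ in \eqref{eq:const-constraint} is replaced by a worst-case optimal-transport distance $\overline{\mathbb{W}}^1$ computed as if every pair of distinct points were at distance $D\min\{c^{(u)},c^{(v)}\}$, the term $\mathbb{W}^1(\mbf{p},\mbf{p}_{t-1})$ is left alone, and both \sref{Lemma}{lem:clipconst} and \sref{Lemma}{lem:cliprob} are re-examined to check that they only ever use (i) the prepaid $\overline{\mathbb{W}}^1$ as an upper bound on the future cost of moving back to \ADV and (ii) the bound $\mathbb{W}^1(\mbf{p}_t,\mbf{a}_t)\le\overline{\mathbb{W}}^1(\mbf{p}_t,\mbf{a}_t)\le D\max\{c(\mbf{p}_t),c(\mbf{a}_t)\}$. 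Your feasibility discussion (that $\mbf{p}=\mbf{a}_t$ remains in the feasible set because earlier constraints already prepaid the worst-case cost of returning to \ADV) is also the paper's argument.

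There is, however, one concrete error in your setup: you claim the tree embedding of \sref{Def.}{dfn:tree} ``can be built once using this uniform bound and retains $O(\log n)$ expected distortion with respect to every $d_t$.'' That is false in general. The Fakcharoenphol--Rao--Talwar guarantee gives $O(\log n)$ expected distortion only with respect to the \emph{specific} metric being embedded; a single HST built from one snapshot (or from the uniform-$D$ metric) can have unbounded distortion relative to a different $d_{t'}$, since the pairwise distances may change arbitrarily subject only to the upper bound $D$. The $O(\log n)$ factor in the robustness bound is precisely the distortion between the tree norm $\norm{\cdot}$ used by the pseudo-cost minimization and the true metric in which \OPT and the algorithm actually pay movement costs, so if the tree no longer approximates $d_t$ the comparison in the analog of \sref{Lemma}{lem:pcm-opt-lb} breaks down. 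The paper instead assumes the embedding $(K,\norm{\cdot})$ is reconstructed at each time step so that $\norm{\cdot}$ accurately reflects the current $d_t$ (which is observable by assumption); with that correction, the rest of your argument goes through.
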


\section{Case Study: Carbon-aware Workload Management in Data Centers}
\label{sec:exp}
{\color{blue}
We end the paper with a case study application of \SOAD and our algorithms for the motivating application of carbon-aware workload management on a simulated global network of data centers.
}

\subsection{Experimental setup}\label{sec:expsetup}
{\color{blue}
We simulate a carbon-aware scheduler that schedules a delay-tolerant batch job on a network of data centers. %
We simulate a global network of data centers based on measurements between Amazon Web Services (AWS) regions.  %
We construct \CSWM instances as follows: we generate a job with length $J$ (in hours), an arrival time (rounded to the nearest hour), and a ``data size'' $G$, where $G$ gives the amount of data (in GB) to be transferred while migrating the job.  
The task is to finish the job before the deadline $T$ while minimizing total CO$_2$ emissions, which are a function of the scheduling decisions and the carbon intensity at each time step and region.
}

\noindent \textbf{AWS measurement data.}
{\color{blue}
We pick 14 AWS regions~\cite{aws} based on available carbon data (see \autoref{tab:characteristics} in the Appendix). %
Among these regions, we collect 72,900 pairwise measurements of latency and throughput, %
compute the mean and variance, %
and sample a latency matrix.  
To model migration overhead, 
we scale the data transferred (and corresponding latency) to match $G$.  %
These values are scaled by carbon data to define a distance metric on the regions in terms of \textit{CO$_2$ overhead}.%
}
{\color{blue}
To model network heterogeneity, %
we set a parameter $\kappa \in [0,1]$ to adjust the simulated energy of the network.  $\kappa$ is a ratio -- if $\kappa = 0.5$, a minute of data transfer from machine(s) in one region to machine(s) in another uses half as much energy as executing at the full allocation (i.e., $\smash{x_t^{\ON^{(\cdot)}}} = 1$) for one minute.
}

\noindent \textbf{Carbon data traces. \ }
{\color{blue}
We obtain %
hourly carbon intensity data for each region, expressed as grams of CO$_2$ equivalent per kilowatt-hour. } %
In the main body, we consider \textit{average carbon intensity}~\cite{electricity-map}, which gives the average emissions of all electricity generated on a grid at a certain time; 
{\color{blue}
this data spans 2020-2022 and includes all regions.  
In \autoref{apx:expmarginal}, we also consider \textit{marginal carbon intensity}~\cite{watttime}; this signal is available for 9 regions in 2022, and also includes proprietary forecasts.  %
}

We use the latency of moving data between regions to calculate a \textit{CO$_2$ overhead} for the metric $(X, d)$ (latency $\times$ energy $\times$ carbon intensity).  In most cases, we approximate the network's carbon intensity by the average across regions.  
{\color{blue}
When specified, we introduce variation by resampling the carbon intensity of up to $\Upsilon \in [0, n^2]$ links each time step. We henceforth call $\Upsilon$ a \textit{volatility factor}; resampling assigns a new random carbon intensity (within $[L, U]$) to a link between two regions.
}

\noindent \textbf{Cloud job traces. \ }
We use Google cluster traces~\cite{Reiss:12} that provide a real distribution of job lengths.  We normalize this distribution such that the maximum length is $12$ hours -- each job's length $J$ is drawn from the distribution and rounded up to the next integer, so $J$ falls in the range $\{ 1, ..., 12 \}$.

\noindent \textbf{Forecasts. \ }
{\color{blue}
We generate forecasts of the carbon intensity for each location and time.  These forecasts are used to solve for a predicted optimal solution that assumes they are correct, which becomes black-box \ADV for \CarbonClipper.
For the average carbon intensity signal, we generate synthetic forecasts by combining true data with random noise.\footnote{We use an open-source ML model that provides carbon intensity forecasts for U.S. regions~\cite{Maji:22:CC} to tune the magnitude of random noise such that $\ADV$'s empirical competitive ratio is slightly worse than an $\ADV$ that uses the ML forecasts.}
Letting $\smash{\text{Carbon}_t^{(u)}}$ denote the carbon intensity at data center $u$ and time $t$, our synthetic forecast is given by $\smash{\text{Pred}_t^{(u)}} = 0.6 \cdot \smash{\text{Carbon}_t^{(u)}} + 0.4 \cdot \text{Unif}(L, U)$.
}
{\color{blue}
To test \CarbonClipper's robustness, Experiment V directly manipulates \ADV. 
}
We set an \textit{adversarial factor} $\xi \in [0,1]$, where $\xi = 0$ implies $\ADV$ is correct.  We use a solver on true data to obtain two solutions, where one is given a \textit{flipped objective} (i.e., it maximizes carbon emissions).  
{\color{blue}
Letting $\{ \mbf{x}_t^\star \}_{t \in [T]}$ denote the decisions of \OPT and ${\{ \breve{\mbf{x}_t} \}}_{t \in [T]}$ denote the decisions of the maximization solution, we have $\ADV \coloneqq \{ (1-\xi) \mbf{x}_t^\star + \xi \breve{\mbf{x}_t}\}_{t \in [T]}$.  We note that although this is unrealistic in practice, manipulating \ADV directly allows us to to quantify the sensitivity of \CarbonClipper against all sources of error.
}

\noindent \textbf{Setup details. \ }
{\color{blue}
We simulate 1,500 jobs for each configuration.  Each job's arrival region and arrival time is uniformly random across all active regions and times.
Each job's deadline $T$ and data size $G$ are either fixed or drawn from a distribution, and this is specified. 
To set the parameters $L$ and $U$, %
we examine the preceding month of carbon intensities (in all regions) leading up to the arrival time and set $L$ and $U$ according to the minimum and maximum, respectively. 
We set the following defaults (i.e., unless otherwise specified):  The metric covers all 14 regions.  Each job's length is drawn from the Google traces as above. The temporal switching coefficient $\tau$ is set to $1$, the network energy factor $\kappa$ is set to $0.5$, and the volatility factor $\Upsilon$ is set to $0$ (i.e., the network is stable). 
}

\noindent \textbf{Benchmark algorithms. \ } %
We compute the offline optimal solution for each instance using CVXPY~\cite{CVXPY}.  %
{\color{blue}
We compare \CarbonClipper and \PCM against four baselines adapted from literature.  
The first is a \textbf{carbon-agnostic} approach that runs the job whenever it is submitted without migration, simulating the behavior of a non-carbon-aware scheduler.
}
{\color{blue}
We also consider two \textit{greedy baselines} that use simple decision rules.
The first of these is a \textbf{greedy} policy that examines the current carbon intensity across all regions at the arrival time, migrates to the ``greenest'' region (i.e., with lowest carbon intensity), and runs the full job.
This captures an observation~\cite{sukprasert2023quantifying} that one migration to a consistently low-carbon region yields most of the benefits of spatiotemporal shifting.  
We also consider a policy that we term \textbf{delayed greedy}, which examines the \textit{full forecast} across all regions, migrating to start the job at the ``best region and time'' (i.e., slot with lowest predicted carbon anywhere).  If there is not enough time to finish the job after the identified slot, it is scheduled to start as close to it as possible.
}
The final baseline is a \textbf{simple threshold}-based approach from temporal shifting literature~\cite{Lechowicz:23, Bostandoost:24:HotCarbon}; it sets a threshold $\sqrt{UL}$, based on prior work in online search~\cite{ElYaniv:01}.  At each time step, it %
runs the job in the best region whose carbon intensity is $\leq \sqrt{UL}$,  
{\color{blue}
without considering migration overheads.  If no regions are $\leq \sqrt{UL}$ at a particular time, the job is checkpointed in place, and a \textit{mandatory allocation} happens when approaching the deadline if the job is not finished.}
\vspace{-0.5em}

\begin{figure*}[t]
    \vspace{-0.5em}
    \minipage{\textwidth}
    \includegraphics[width=\linewidth]{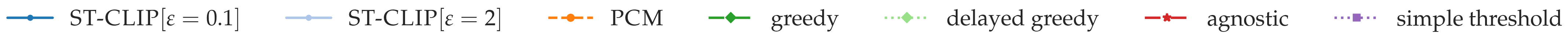}\vspace{0.1em}
    \endminipage\hfill\\
    \minipage{0.32\textwidth}
    \includegraphics[width=\linewidth]{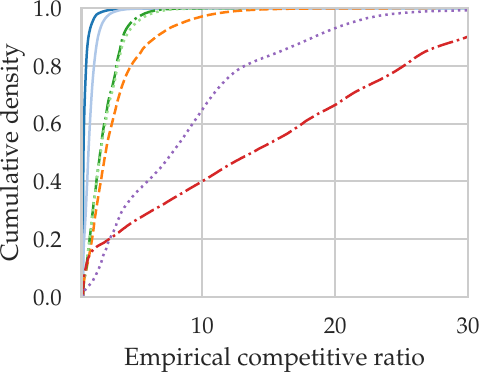}\vspace{-1em}
    \caption{ CDFs of competitive ratios for each algorithm, across all average carbon intensity experiments.}\label{fig:cdf}
    \endminipage\hfill
    \minipage{0.32\textwidth}
    \includegraphics[width=\linewidth]{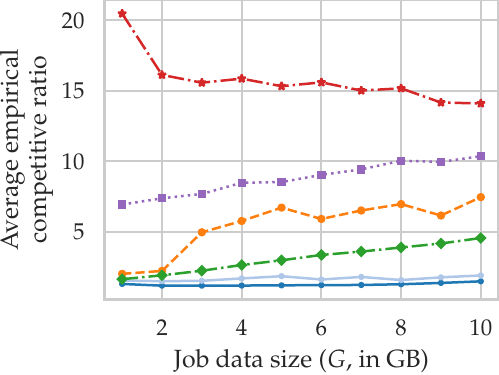}\vspace{-1em}
    \caption{ Average empirical competitive ratios for varying \textit{job data size $G$}, with $T \thicksim \text{Unif}_{\mathbb{Z}}(12, 48)$. }\label{fig:gb}
    \endminipage\hfill
    \minipage{0.32\textwidth}
    \includegraphics[width=\linewidth]{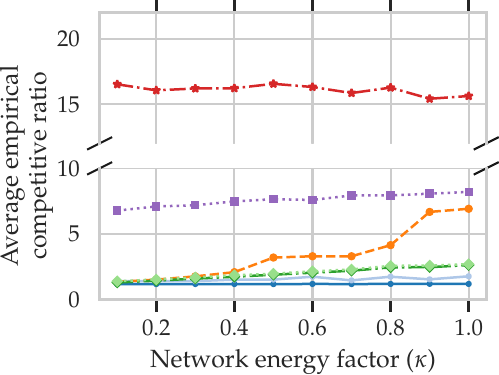}\vspace{-1em}
     \caption{ Average empirical competitive ratios for varying \textit{energy factor $\kappa$}, with $G = 4, T \thicksim \text{Unif}_{\mathbb{Z}}(12, 48).$ }\label{fig:energy}
    \endminipage\hfill  
\end{figure*}

\subsection{Experimental results} \label{sec:expresults}
{\color{blue}
We highlight several experiments here, referring to \autoref{apx:exp} for the extended set.
A summary is given in \autoref{fig:cdf}, where we plot a cumulative distribution function (CDF) of the empirical competitive ratio for all tested algorithms in Expts. I-IV and VI-VIII.  %
Given imperfect advice, $\CarbonClipper[\varepsilon = 2]$ significantly outperforms the baselines, improving on greedy, delayed greedy, simple threshold, and carbon-agnostic by averages of 32.1\%, 33.5\%, 79.4\%, and 88.7\%, respectively.}
In Expts. I-III, each job's deadline is a random integer between $12$ and $48$ (denoted by $T \thicksim \text{Unif}_{\mathbb{Z}} (12, 48)$).  In these experiments, both greedy policies outperform our robust algorithm, \PCM.  
{\color{blue}
This result aligns with prior findings~\cite{sukprasert2023quantifying}; since these experiments consider all 14 regions, there are consistent low-carbon grids in the mix that give an advantage to the greedy policies.
In Expt. IV, we examine this further, showing that the performance of greedy policies can degrade in realistic situations.}

\noindent\textbf{Experiment I: } \textit{Effect of job data size $G$. \ }
{\color{blue}
In \autoref{fig:gb}, we plot the average empirical competitive ratio for job data sizes $G \in \{ 1, \dots, 10 \}$. %
}
Recall that parameter $D$ depends on the diameter of the metric space (i.e., the worst migration overhead between regions); as $G$ increases, this maximum overhead grows.  As predicted by the theoretical bounds, \PCM's performance degrades as $G$ grows; we observe the same effect for the greedy policies and simple threshold.  {\color{blue} Since it can leverage advice, \CarbonClipper maintains consistent performance for many settings of $G$.}

\noindent\textbf{Experiment II: } \textit{Effect of network energy scale $\kappa$. \ }
\autoref{fig:energy} plots the average empirical competitive ratio for $\kappa \in [0.1, 1]$, fixing $G = 4$.  
{\color{blue}
As in Expt. I, $\kappa$ affects the parameter $D$ -- thus, the performance of \PCM degrades slightly as $\kappa$ grows.
When $\kappa$ is small, greedy policies perform nearly as well as \CarbonClipper, though they degrade as $\kappa$ increases; \CarbonClipper's usage of advice yields consistent performance.
}

\noindent\textbf{Experiment III: } \textit{Effect of volatility factor $\Upsilon$. \ }
{\color{blue}
In \autoref{fig:volatility}, we plot the average empirical competitive ratio for $\Upsilon \in [28, 196]$, fixing $G = 4$. }  Aligning with the theoretical results (\sref{Corollary}{cor:timevaryPCM} \& \ref{cor:timevaryCLIP}), we find that \PCM and \CarbonClipper's performance is robust to this volatility.  Both of the greedy policies do not consider the migration overhead and only migrate once, so their performance is consistent.  

\noindent\textbf{Experiment IV: } \textit{Effect of electric grids and data center availability. \ }
{\color{blue}
Greedy policies do well in Expts. I-III,  %
where some regions have consistently low-carbon grids.\footnote{\autoref{fig:carbon_traces} in the Appendix plots a sample of carbon intensity data for all 14 regions to motivate this visually.}
}
\begin{figure*}[t]
    \vspace{-0.5em}
    \minipage{\textwidth}
    \includegraphics[width=\linewidth]{plots/legend.png}\vspace{0.1em}
    \endminipage\hfill\\
     \minipage{0.32\textwidth}
     \includegraphics[width=\linewidth]{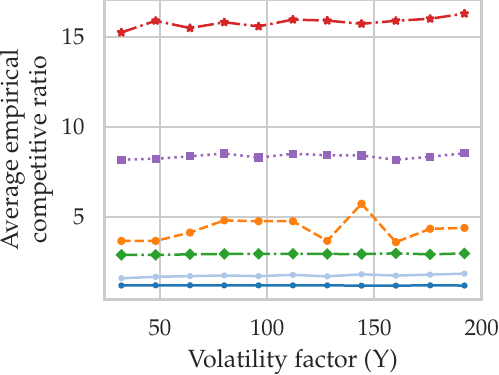}\vspace{-1em}
     \caption{ Average empirical competitive ratios for varying \textit{volatility factor $\Upsilon$}. \  $G = 4, T \thicksim \text{Unif}_{\mathbb{Z}}(12, 48).$ }\label{fig:volatility}
     \endminipage\hfill  
    \minipage{0.32\textwidth}
     \includegraphics[width=\linewidth]{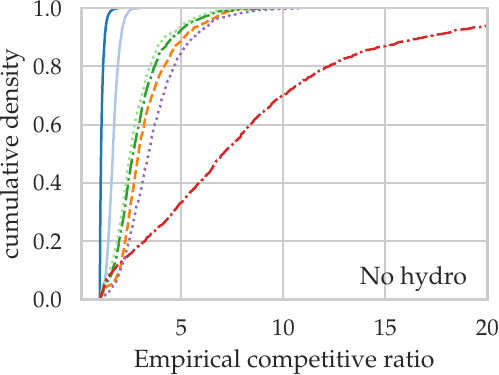}\vspace{-1em}
     \caption{ CDFs of empirical competitive ratios for each tested algorithm on the ``no hydro'' region subset.  }\label{fig:subsetNoHydro}
     \endminipage\hfill
     \minipage{0.32\textwidth}
     \includegraphics[width=\linewidth]{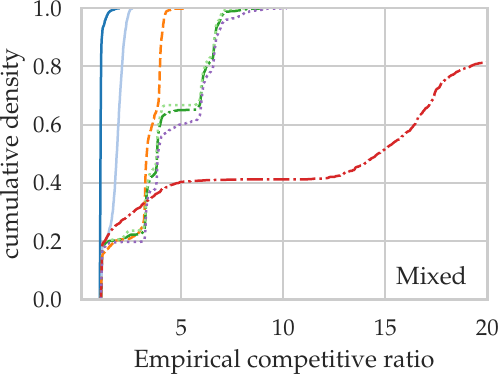}\vspace{-1em}
     \caption{ CDFs of empirical competitive ratios for each tested algorithm on the ``mixed'' region subset.  }\label{fig:subsetMixed}
     \endminipage\hfill
\end{figure*}
{\color{blue}
In practice, a greedy policy may face obstacles if it is unable to migrate to low-carbon regions.  For instance, such regions might reach capacity,
\textit{removing them} as migration options.  Jobs may also be \textit{restricted} from leaving a region due to regulations~\cite{GDPR:16}.  
Further, consistently low-carbon grids often leverage hydroelectric or nuclear sources that are difficult to build at scale %
compared to cheaper renewables~\cite{NREL:24}.  
This is important because it suggests future grids will moreso resemble those marked by renewable intermittency.}
\begin{wrapfigure}{r}{0.32\textwidth}
  \includegraphics[width=0.32\textwidth]{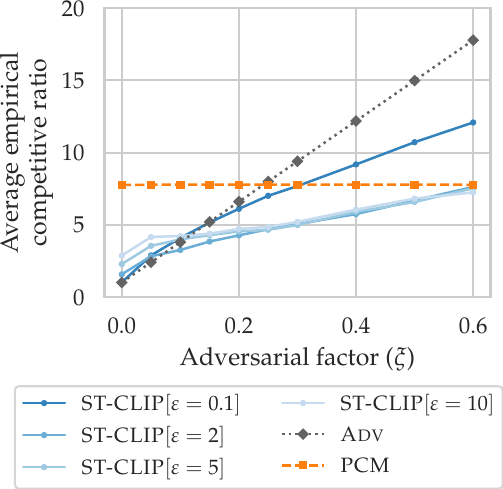}\vspace{-0.6em}
  \caption{ Average empirical competitive ratio for varying \textit{adversarial factor $\xi$}, with $G = 4, T = 12$.  $\xi \to 0$ implies that $\ADV \to \OPT$, and as $\xi$ grows, $\ADV$ degrades.}\label{fig:xi}
\end{wrapfigure}
\noindent In \autoref{fig:subsetNoHydro} and \ref{fig:subsetMixed}, we present results where the metric is %
a \textit{subset} of the 14 regions, giving results on more subsets in \autoref{apx:exp}.  By considering these subsets, we approximate issues discussed above (e.g., data center congestion, grid characteristics).  %
\autoref{fig:subsetNoHydro} considers a ``no hydroelectric'' subset that omits Sweden and Quebec.  Under this subset, \PCM closes the gap with the greedy policies, with an average competitive ratio that is within 4.24\% of both.  \autoref{fig:subsetMixed} considers a smaller subset of $5$ regions: South Korea, Virginia, Sydney, Quebec, and France.  On this mixed set of grids, \PCM outperforms greedy and delayed greedy by 30.91\% and 28.79\%, respectively.  
{\color{teal}
The results above highlight that situations do arise where greedy policies perform worse than both \CarbonClipper and \PCM.  However, such situations are not a majority -- out of the 14 random subsets that we tested, \PCM outperformed the greedy policy in \textit{four} subsets, including the ``mixed'' (\autoref{fig:subsetMixed}) and ``mixed 2'' (\autoref{fig:subsetApx3}) subsets. \PCM is relatively conservative in its decisions, being optimized for worst-case (adversarial) inputs -- this advantages greedy policies on the majority of instances that do not benefit from the ``worst-case hedging'' behavior that \PCM exhibits.
}

\noindent\textbf{Experiment V: } \textit{Effect of bad black-box advice $\xi$. \ }
{\color{blue}
\autoref{fig:xi} plots the effect of bad black-box advice on \CarbonClipper's performance. We test values of $\xi \in [0,0.6]$, %
generating \ADV according to the technique discussed in \autoref{sec:expsetup}.  
\CarbonClipper is initialized with $\varepsilon \in [0.1, 2, 5, 10]$, where $\varepsilon \to 0$ implies that it follows $\ADV$ more closely.  
We also plot the empirical competitive ratio of $\PCM$ and $\ADV$ as opposing baselines.  
}
We find that \CarbonClipper nearly matches $\ADV$ when it is correct, while degrading more gracefully as $\ADV$'s cost increases.  This result shows that \CarbonClipper is empirically robust to even adversarial black-box advice.

\section{Conclusion}
\label{sec:conclusion}
Motivated by sustainability applications,
we introduce and study spatiotemporal online allocation with deadlines (\SOAD), the first online problem that combines general metrics with deadline constraints, bridging the gap between existing literature on metrical task systems and online search.  Our main results present \PCM as a competitive algorithm for \SOAD and \CarbonClipper, a learning-augmented algorithm that achieves the optimal robustness-consistency trade-off.  We evaluate our proposed algorithms in a case study of carbon-aware workload management in data centers.
A number of questions remain for future work, including natural extensions motivated by applications.
For computing applications, \SOAD may be extended to model resource contention and/or delayed access to resources, particularly after moving the allocation to a new point.  
{\color{teal}
Similarly, an extension to model \textit{multiple} workloads with different deadlines would be natural (e.g., scheduling multiple batch jobs, dynamic job arrivals/departures). 
}
Our theoretical results contend with substantial generality (i.e., in the metric); it would be interesting to explore whether improved results can be obtained under a more structured setting.

\begin{acks}
We thank our shepherd Izzy Grosof and the anonymous SIGMETRICS reviewers for their valuable insight and feedback.

This material is based upon work supported by the U.S. Department of Energy, Office of Science, Office of Advanced Scientific Computing Research, Department of Energy Computational Science Graduate Fellowship under Award Number DE-SC0024386, an NSF Graduate Research Fellowship (DGE-1745301), National Science Foundation grants CAREER-2045641, CNS-2102963, CNS-2106299, CNS-2146814, CPS-2136197, CPS-2136199, NGSDI-2105494, NGSDI-2105648, CNS-2325956, 2020888, 2021693, 2045641, 2213636, 2211888, and support from VMWare and Adobe.
\end{acks}

\section*{Disclaimers}
This report was prepared as an account of work sponsored by an agency of the United States Government. Neither the U.S. Government nor any agency thereof, nor any of their employees, makes any warranty, express or implied, or assumes any legal liability or responsibility for the accuracy, completeness, or usefulness of any information, apparatus, product, or process disclosed, or represents that its use would not infringe privately owned rights. Reference herein to any specific commercial product, process, or service by trade name, trademark, manufacturer, or otherwise does not necessarily constitute or imply its endorsement, recommendation, or favoring by the U.S. Government or any agency thereof. The views and opinions of authors expressed herein do not necessarily state or reflect those of the U.S. Government or any agency thereof.

\received{October 2024}
\received[revised]{January 2025}
\received[accepted]{January 2025}

\bibliographystyle{ACM-Reference-Format}
\bibliography{main}

\appendix
\section*{Appendix}

\begin{table}[h]
{\color{blue}
	\caption{A summary of key notations.}
 \vspace{-3mm}
	\label{tab:notations}
 	\small
	\begin{center}
		\begin{tabular}[P]{|c|l|}
            \hline
            \textbf{Notation} & \textbf{Description} \\
            \hline
            $t \in [T]$ & Time step index \\
            \hline
            $\mathcal{X}$ & Feasible set for vector allocation decisions \\
            \hline
            $\mbf{x}_t \in \mathcal{X}$ & Allocation decision at time $t$ \\
            \hline
            $f_t(\cdot) : \mathcal{X} \to \mathbb{R}$ & (\textit{Online input}) Service cost function revealed to the player at time $t$ \\
            \hline 
            $c(\cdot) : \mathcal{X} \to [0,1]$ & Constraint function; describes the fraction satisfied by an allocation \\
			\hline
            $d(u, v) : u,v \in X \to \mathbb{R}$ & Spatial distance in the metric $(X,d)$ \\
			\hline
            $\Vert \mbf{x} - \mbf{x}' \Vert_{\ell_1(\beta)} : \mbf{x}, \mbf{x}' \in \mathcal{X} \to \mathbb{R}$ & Switching costs between \ON and \OFF allocations \\
			\hline
            $g( \cdot, \ \cdot ) \coloneqq d( \cdot, \ \cdot ) + \Vert \cdot \Vert_{\ell_1(\beta)}$ & Combined movement \& switching cost between points and allocations \\
			\hline
            \hline
            $u \in X$ & Point $u$ in an $n$-point metric space $(X, d)$\\
            \hline
            $\ON^{(u)}, \OFF^{(u)}$ & \ON state and \OFF state at point $u$, respectively \\
            \hline
            $x_t^{\ON^{(u)}}, x_t^{\OFF^{(u)}} \in [0,1]$ & Fractional allocations to \ON \ / \OFF states at point $u$ at time $t$ \\
            \hline
            $c^{(u)} \in (0,1] $ & Throughput coefficient; describes constraint satisfied by $x^{\ON^{(u)}} = 1$ \\
            \hline
            $f_t^{(u)} \in [c^{(u)}L, c^{(u)}U]$ & Service cost coefficient at $\ON^{(u)}$ \& time $t$; proportional to $L > 0$ and $U$.  \\
            \hline
            $\beta^{(u)} > 0$ & Switching coefficient; describes switching cost between $\ON^{(u)} \leftrightarrow \OFF^{(u)}$ \\
            \hline
            $\tau : \beta^{(u)} \le \tau c^{(u)}  \ \forall u \in X$ & Upper bound on normalized switching coefficient \\
            \hline
            $D =  \sup_{u, v \in X} \frac{d( u, \ v )}{\min \{ c^{(u)}, c^{(v)} \} }$ & Upper bound on normalized spatial distance between any two points \\
            \hline
            \hline
            $\mathcal{S}$ & Discrete set of all \ON and \OFF states \\
            \hline
            $\Delta_{\mathcal{S}}$ & Probability measure over $\mathcal{S}$ \\
            \hline
            $\mbf{p}_t \in \Delta_{\mathcal{S}}$ & Probability distribution (\& corresponding random allocation) at time $t$\\
            \hline
            $\mathbb{W}_1(\mbf{p}, \mbf{p}') : \mbf{p}, \mbf{p}' \in \Delta_{\mathcal{S}} \to \mathbb{R}$ & Optimal transport distance between distributions (in terms of $g( \cdot, \ \cdot)$) \\
            \hline
            $\delta_s \in \Delta_{\mathcal{S}}$ & Dirac measure supported at $\OFF^{(s)}$ \\
            \hline
            \hline
            $\mathcal{T} = (V,E)$ & Hierarchically separated tree (HST) constructed by \sref{Def.}{dfn:tree}\\
            \hline
            $K \subset \mathbb{R}^{\vert V \vert}$ & Vector space corresponding to $\mathcal{T}$ (see \sref{Def.}{dfn:vs})\\
            \hline
            $\mbf{k}_t \in K$ & Vector decision (\& corresponding prob. distribution) at time $t$\\
            \hline
            $\Phi \in \mathbb{R}^{\vert V \vert \times 2n}$ & Linear map such that $\Phi \mbf{p} \in K \text{ and } \Phi^{-1} \mbf{k} \in \Delta_{\mathcal{S}}$ \\
            \hline
            $\norm{\cdot} : K \to \mathbb{R}$ & Weighted $\ell_1$ norm that recovers optimal transport distances in $\mathcal{T}$ \\
            \hline
            \begin{tabular}{@{}c@{}}$\overline{f}_t(\mbf{k}) = f_t(\Phi^{-1} \mbf{k})$ \\ $\overline{c}(\mbf{k}) = c(\Phi^{-1} \mbf{k})$\end{tabular} & Notation shorthand for functions defined on vector space $K$ \\
            \hline
            \hline
            $z^{(t)} \in [0,1]$ & Utilization; fraction of constraint satisfied in expectation up to time $t$ \\
            \hline
            \hline
            $\ADV(\mathcal{I}) \coloneqq \{ \mbf{a}_t \in \Delta_{\mathcal{S}} \}_{t \in [T]}$ & Black-box advice provided to \CarbonClipper (see \sref{Def.}{dfn:adv})\\
            \hline
            $A^{(t)} \in [0,1]$ & \ADV utilization; fraction of constraint satisfied by $c(\mbf{a}_1) + \dots + c(\mbf{a}_t)$\\
            \hline
		\end{tabular}
	\end{center}
 }
\end{table}

{\color{blue}
\section{Deferred Examples}\label{apx:examples}

In this section, we detail two more examples of applications that motivate the \SOAD problem introduced in the main body, picking up from \autoref{sec:examples}.  

\smallskip
\noindent\textbf{Carbon-aware or cost-aware autonomous electric vehicle charging. \ } Consider an autonomous electric vehicle taxi (AEV) servicing a city~\cite{Waymo} with multiple charging stations.  Suppose that by the end of a day (i.e., deadline $T$), the AEV must replenish the charge that it will have used throughout the day.  Service costs $f_t^{(u)}$ can represent either the carbon emissions of charging at location $u$ during time slot $t$, or the charging cost plus opportunity cost of charging at location $u$ during time slot $t$.  
We note that even within a single city, the \textit{locational marginal emissions} (i.e., the carbon intensity of electricity at a specific location) may vary significantly~\cite{Brown:2020:SpatiotemporalLMP}, and charging prices can be similarly variable based on e.g., time-of-use and/or zonal energy pricing~\cite{LADWPpricingzones}
The metric space $(X,d)$ and the spatial movement cost $d(u_{t-1}, u_t)$ capture either the carbon overhead (in terms of ``wasted'' electricity) or the opportunity/time cost of moving to a different location for charging.  The temporal switching cost $\Vert \mbf{x}_t - \mbf{x}_{t-1}\Vert_{\ell_1(\boldsymbol{\beta})}$ captures the small overhead of stopping or restarting charging at a single location, due to extra energy or time spent connecting or disconnecting from the charger at location $u$.  Finally, the constraint function $c( \mbf{x}_t )$ captures how much charge is delivered during time $t$ according to decision $\mbf{x}_t$, where an $\mbf{x}_t$ that places a full allocation in the $\OFF^{(u)}$ state indicates that the AEV is serving customers (i.e., not charging).  Since the AEV may move around in the city while serving customers, $\mbf{x}_t$ should be updated exogenously to reflect the true state.  We note that \SOAD may not capture the case where the distance between charging stations is large (i.e., moving to a different location incurs substantial discharge) or the case where the AEV is at risk of fully discharging the battery before time $T$, which would require immediate charging.

\smallskip
\noindent\textbf{Allocating tasks to volunteers. \ }  Consider a non-profit that has a task to complete before some short-term deadline $T$, with several locations and scheduled time slots for volunteer efforts (e.g., stores, community centers) throughout a region.  Using platforms such as VolunteerMatch~\cite{Manshadi:23}, volunteers can signal their interests in tasks (e.g., via a ranking) and availability -- in assigning this task, the non-profit may want to maximize the engagement of their assigned volunteer(s).  Service costs $f_t^{(u)}$ can represent the aggregate rankings of the volunteers present at location $u$ during time slot $t$, where a lower number means that they are more interested in a given task.  The metric space $(X,d)$ and the spatial movement cost $d(u_{t-1}, u_{t})$ can capture the cost of e.g., moving supplies that must be present at the location to work on the task.  The temporal switching cost $\Vert \mbf{x}_t - \mbf{x}_{t-1}\Vert_{\ell_1(\boldsymbol{\beta})}$ may capture the cost of e.g., setting up or breaking down the setup required to work on the task at a given location $u$.  Finally, the constraint function $c( \mbf{x}_t )$ captures how much of the task can be completed during time $t$ according to assignment decision $\mbf{x}_t$.  We note that the fractional allocation to \ON \ / \OFF states specified by \SOAD may not be useful in this setting because e.g., groups of volunteers may not be fractionally divisible.
\vspace{1em}

}

\begin{table}[h]
\begin{tabular}{|l|l|lll|lll|}
\hline
                           &                                         & \multicolumn{3}{c|}{\begin{tabular}[c]{@{}c@{}}Average Carbon Intensity \vspace{-0.1em}\\ {\footnotesize \textit{(in gCO$_2$eq/kWh)} \cite{electricity-map}} \end{tabular}}                                                      & \multicolumn{3}{c|}{\begin{tabular}[c]{@{}c@{}}Marginal Carbon Intensity \vspace{-0.1em}\\ {\footnotesize \textit{(in gCO$_2$eq/kWh)} \cite{watttime}} \end{tabular}}                                                \\ \cline{3-8} 
\multirow{-2}{*}{Location} & \multirow{-2}{*}{AWS Region}            & \multicolumn{1}{l|}{ {\small Duration} }                                                                              & \multicolumn{1}{l|}{{\small Min.}}   & {\small Max.}   & \multicolumn{1}{l|}{{\small Duration}}                                                                             & \multicolumn{1}{l|}{ {\small Min.} } & {\small Max.} \\ \hline
\worldflag[length=0.75em, width=0.6em]{US} {\small Virginia, U.S.}             & {\footnotesize \texttt{us-east-1}}                               & \multicolumn{1}{l|}{}                                                                                      & \multicolumn{1}{l|}{293} & 567 & \multicolumn{1}{l|}{}                                                                                     & \multicolumn{1}{l|}{48}    & 1436    \\ \cline{1-2} \cline{4-5} \cline{7-8} 
\worldflag[length=0.75em, width=0.6em]{US} {\small California, U.S.}           & {\footnotesize \texttt{us-west-1}}      & \multicolumn{1}{l|}{}                                                                                      & \multicolumn{1}{l|}{83}  & 451 & \multicolumn{1}{l|}{}                                                                                     & \multicolumn{1}{l|}{67}    & 1100    \\ \cline{1-2} \cline{4-5} \cline{7-8} 
\worldflag[length=0.75em, width=0.6em]{US} {\small Oregon, U.S.}               & {\footnotesize \texttt{us-west-2}}      & \multicolumn{1}{l|}{}                                                                                      & \multicolumn{1}{l|}{42}  & 682 & \multicolumn{1}{l|}{}                                                                                     & \multicolumn{1}{l|}{427}    & 2000    \\ \cline{1-2} \cline{4-5} \cline{7-8} 
\worldflag[length=0.75em, width=0.6em]{CA} {\small Quebec, Canada}             & {\footnotesize \texttt{ca-central-1}}   & \multicolumn{1}{l|}{}                                                                                      & \multicolumn{1}{l|}{26}  & 109  & \multicolumn{1}{l|}{}                                                                                     & \multicolumn{1}{l|}{887}    & 1123    \\ \cline{1-2} \cline{4-5} \cline{7-8} 
\worldflag[length=0.75em, width=0.6em]{GB} {\small London, U.K.}               & {\footnotesize \texttt{eu-west-2}}      & \multicolumn{1}{l|}{}                                                                                      & \multicolumn{1}{l|}{56}  & 403 & \multicolumn{1}{l|}{}                                                                                     & \multicolumn{1}{l|}{706}    & 1082    \\ \cline{1-2} \cline{4-5} \cline{7-8} 
\worldflag[length=0.75em, width=0.6em]{FR} {\small France}                     & {\footnotesize \texttt{eu-west-3}}      & \multicolumn{1}{l|}{}                                                                                      & \multicolumn{1}{l|}{18}  & 199 & \multicolumn{1}{l|}{}                                                                                     & \multicolumn{1}{l|}{549}    & 1099    \\ \cline{1-2} \cline{4-5} \cline{7-8} 
\worldflag[length=0.75em, width=0.6em]{SE} {\small Sweden}                     & {\footnotesize \texttt{eu-north-1}}     & \multicolumn{1}{l|}{}                                                                                      & \multicolumn{1}{l|}{12}  & 59  & \multicolumn{1}{l|}{}                                                                                     & \multicolumn{1}{l|}{438}    & 2556    \\ \cline{1-2} \cline{4-5} \cline{7-8} 
\worldflag[length=0.75em, width=0.6em]{DE} {\small Germany}                    & {\footnotesize \texttt{eu-central-1}}   & \multicolumn{1}{l|}{}                                                                                      & \multicolumn{1}{l|}{130} & 765 & \multicolumn{1}{l|}{}                                                                                     & \multicolumn{1}{l|}{11}    & 1877    \\ \cline{1-2} \cline{4-5} \cline{7-8} 
\worldflag[length=0.75em, width=0.6em]{AU} {\small Sydney, Australia}          & {\footnotesize \texttt{ap-southeast-2}} & \multicolumn{1}{l|}{}                                                                                      & \multicolumn{1}{l|}{267} & 817  & \multicolumn{1}{l|}{\multirow{-9}{*}{\small \begin{tabular}[c]{@{}l@{}}01/01/2022 - \\ 12/31/2022\\ \\ Hourly\\ granularity\\ \\ 8,760\\ data points\end{tabular}}} & \multicolumn{1}{l|}{12}    & 1950    \\ \cline{1-2} \cline{4-8} 
\worldflag[length=0.75em, width=0.6em]{BR} {\small Brazil}                     & {\footnotesize \texttt{sa-east-1}}      & \multicolumn{1}{l|}{}                                                                                      & \multicolumn{1}{l|}{46}  & 292 & \multicolumn{3}{c|}{\cellcolor[HTML]{C0C0C0}}                                                                                                \\ \cline{1-2} \cline{4-5}
\worldflag[length=0.75em, width=0.6em]{ZA} {\small South Africa}               & {\footnotesize \texttt{af-south-1}}     & \multicolumn{1}{l|}{}                                                                                      & \multicolumn{1}{l|}{586} & 785 & \multicolumn{3}{c|}{\cellcolor[HTML]{C0C0C0}}                                                                                                \\ \cline{1-2} \cline{4-5}
\worldflag[length=0.75em, width=0.6em]{IL} {\small Israel}                     & {\footnotesize \texttt{il-central-1}}   & \multicolumn{1}{l|}{}                                                                                      & \multicolumn{1}{l|}{514} & 589 & \multicolumn{3}{c|}{\cellcolor[HTML]{C0C0C0}}                                                                                                \\ \cline{1-2} \cline{4-5}
\worldflag[length=0.75em, width=0.6em]{IN} {\small Hyderabad, India}           & {\footnotesize \texttt{ap-south-2}}     & \multicolumn{1}{l|}{}                                                                                      & \multicolumn{1}{l|}{552} & 758 & \multicolumn{3}{c|}{\cellcolor[HTML]{C0C0C0}}                                                                                                \\ \cline{1-2} \cline{4-5}
\worldflag[length=0.75em, width=0.6em]{KR} {\small South Korea}                & {\footnotesize \texttt{ap-northeast-2}} & \multicolumn{1}{l|}{\multirow{-14}{*}{\small \begin{tabular}[c]{@{}l@{}}01/01/2020 - \\ 12/31/2022\\ \\ Hourly \\ granularity\\ \\ 26,304\\ data points\end{tabular}}} & \multicolumn{1}{l|}{453} & 503 & \multicolumn{3}{c|}{\multirow{-5}{*}{\cellcolor[HTML]{C0C0C0} {\small \textit{Data not available}} }}                                                            \\ \hline
\end{tabular}
\caption{Summary of CO$_2$ data sets for each tested AWS region in our case study experiments, including the minimum and maximum carbon intensities, duration, granularity, and data availability.} \label{tab:characteristics}
\end{table}

\begin{figure*}[h]
    \minipage{\textwidth}
    \includegraphics[width=\linewidth]{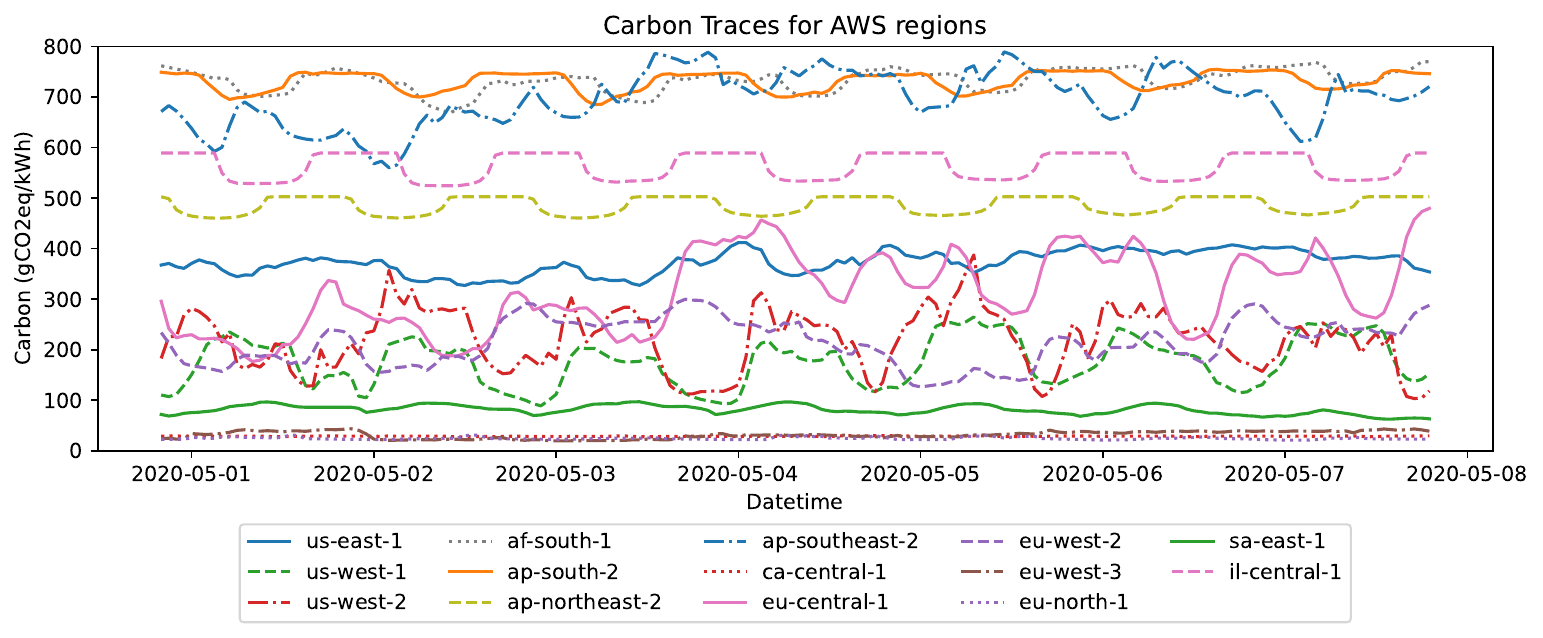}\vspace{-1em}
         \caption{ Average carbon intensity traces~\cite{electricity-map} for all 14 AWS regions, over a week-long period in $2020$.}\label{fig:carbon_traces}
    \endminipage\hfill
\end{figure*}

\section{Supplemental Experiments} \label{apx:exp}
In this section, we present additional results and figures to complement those in the main body.  In the first few results, we present additional experiments manipulating parameters using the \textit{average carbon intensity signal}.  Then, in \autoref{apx:expmarginal}, we present a supplemental slate of experiments using the \textit{marginal carbon intensity signal} obtained from WattTime~\cite{watttime}.

\begin{figure*}[t]
    \minipage{\textwidth}
    \includegraphics[width=\linewidth]{plots/legend.png}\vspace{0.1em}
    \endminipage\hfill\\
    \minipage{0.32\textwidth}
    \includegraphics[width=\linewidth]{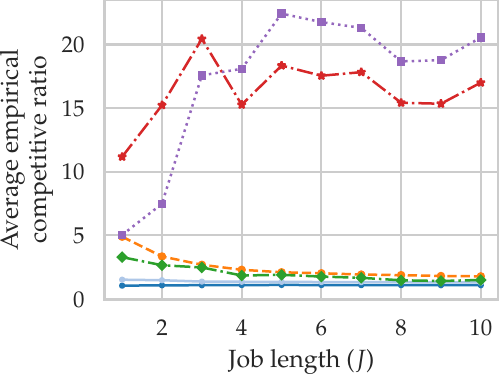}\vspace{-1em}
    \caption{ Average empirical competitive ratios for varying \textit{job length $J$}, with $G = 4, T \thicksim \text{Unif}_{\mathbb{Z}}(12, 48)$. }\label{fig:length}
    \endminipage\hfill  
    \minipage{0.32\textwidth}
    \includegraphics[width=\linewidth]{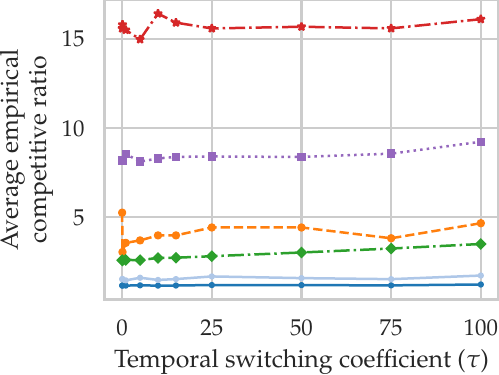}\vspace{-1em}
     \caption{ Average empirical competitive ratios for varying \textit{temporal switching coefficient $\tau$}, with $G = 4, T \thicksim \text{Unif}_{\mathbb{Z}}(12, 48).$ }\label{fig:tau}
     \endminipage\hfill
    \minipage{0.32\textwidth}
     \includegraphics[width=\linewidth]{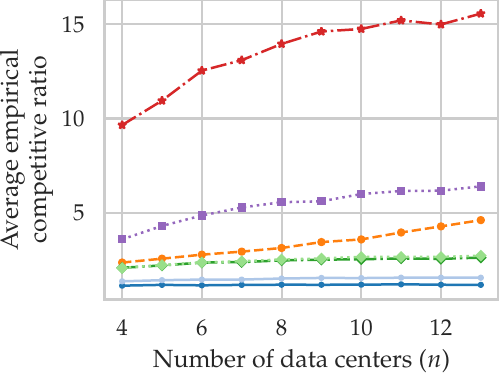}\vspace{-1em}
     \caption{ Average empirical competitive ratios for varying \textit{number of data centers $n$}, with $G = 4, T \thicksim \text{Unif}_{\mathbb{Z}}(12, 48).$ }\label{fig:random_subset}
     \endminipage\hfill  
\end{figure*}

\smallskip
\noindent\textbf{Experiment VI: } \textit{Effect of job length $J$. \ }
\autoref{fig:length} plots the average empirical competitive ratio for fixed job lengths $J \in \{ 1, \dots, 10 \}$, where $J$ is the length of the job in hours.  In this experiment, we fix each job's data size to $G = 4$.  
As $J$ increases, the empirical competitive ratio of \PCM improves, and \CarbonClipper remains consistent.  We note that the simple threshold technique is able to achieve good performance in the case when $J = 1$ -- intuitively, since simple threshold is agnostic to the switching overhead, its performance degrades when it uses more than one opportunity to migrate between regions (i.e., when the job takes more than one time slot). 

As is characteristic of realistic cloud traces, the Google cluster traces we use in Experiments I-V are mostly composed of shorter jobs between $1$ and $2$ hours long.  These results for fixed job lengths highlight that \CarbonClipper and \PCM do even better when given lengthy jobs -- {\color{blue}
such jobs are less frequent but take up a disproportionate amount of compute cycles (and thus contribute disproportionately to the carbon footprint of) a typical data center.}

\subsection*{Experiment VII: \textit{\textmd{\ Effect of temporal switching coefficient $\tau$.}}}
{\color{blue}
Recall that increasing or decreasing $\tau$ simulates jobs that have more or less time-consuming checkpoint and resume overheads, respectively.
In \autoref{fig:tau}, we plot the average empirical competitive ratio for varying $\tau \in \{0, \dots, 100\}$.  
}
In this experiment, we fix $J = 4$ and $G = 4$. %
Compared to varying $G$, $\tau$ has a smaller impact across the board, although as predicted by the theoretical bounds, the performance of \PCM degrades slightly as $\tau$ grows, and the greedy policies are similarly affected.

\subsection*{Experiment VIII: \textit{\textmd{\ Effect of number of data centers $n$.}}} 
{\color{blue}
Building off of the idea in Experiment IV, in \autoref{fig:random_subset} we plot the average empirical competitive ratio for varying $n \in [4, 13]$, where a random subset (of size $n$) is sampled from the base 14 regions for each batch job instance.  
For each job, we fix $G=4$, a deadline $T$ that is a random integer between $12$ and $48$, and use the average carbon intensity signal.  
}
We find that most algorithms' performance degrades as the size of the subset increases.  This is likely because the expected range of carbon intensities expands as more diverse electric grids are included in the subset.  As in previous experiments, \CarbonClipper's performance with black-box advice is consistent as $n$ increases.

\subsection*{Experiment IV (continued): \textit{\textmd{\ Effect of electric grids and data center availability.}}}
Continuing from Experiment IV in the main body, we present results where the metric space is constructed on four additional \textit{subsets} of the 14 regions.  By considering these region subsets, we approximate issues of data center availability, and electric grid characteristics that might face a deployment in practice.

\begin{figure*}[t]
    \minipage{\textwidth}
    \includegraphics[width=\linewidth]{plots/legend.png}\vspace{0.1em}
    \endminipage\hfill\\
     \minipage{0.32\textwidth}
     \includegraphics[width=\linewidth]{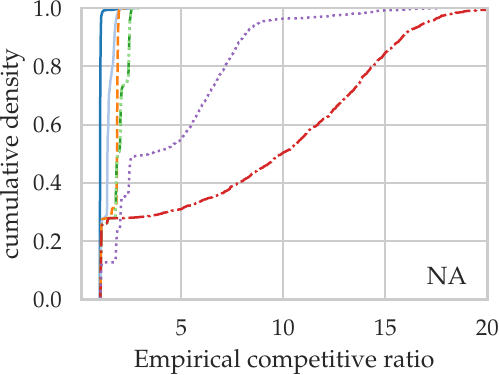}\vspace{-1em}
     \caption{ CDFs of empirical competitive ratios for each tested algorithm on the North America subset.  }\label{fig:subsetApx1}
     \endminipage\hfill
     \minipage{0.32\textwidth}
     \includegraphics[width=\linewidth]{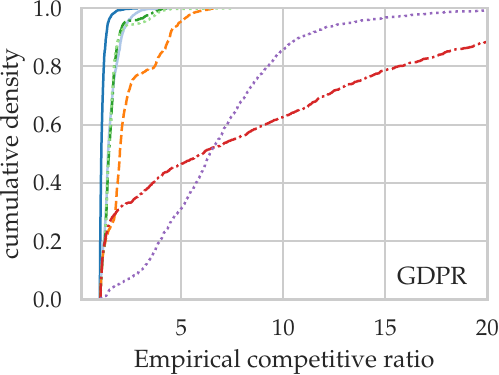}\vspace{-1em}
     \caption{ CDFs of empirical competitive ratios for each tested algorithm on the EU (GDPR) region subset.  }\label{fig:subsetApx2}
     \endminipage\hfill
     \minipage{0.32\textwidth}
     \includegraphics[width=\linewidth]{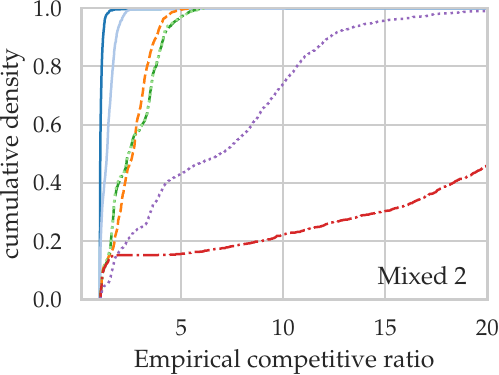}\vspace{-1em}
     \caption{ CDFs of empirical competitive ratios for each tested algorithm on the second ``mixed'' subset.  }\label{fig:subsetApx3}
     \endminipage\hfill
\end{figure*}

\autoref{fig:subsetApx1} considers a North American subset of $4$ regions: California, Oregon, Virginia, and Quebec.  Under this subset, \PCM slightly outperforms greedy and delayed greedy by 8.31\% and 8.24\%, respectively.
\autoref{fig:subsetApx2} considers an ``EU / GDPR'' subset that includes $4$ regions: France, Germany, London, and Sweden.  On this subset, with a large proportion of consistent low-carbon grids (i.e., both Sweden and France), the greedy and delayed greedy policies outperform \PCM by 21.54\% and 20.25\%, respectively.  These results highlight a ``best case'' situation where the greedy policies are able to outperform \PCM and nearly match the performance of \CarbonClipper.
\autoref{fig:subsetApx3} considers a second ``mixed'' subset of $7$ regions: California, South Korea, Germany, Hyderabad, Israel, Sweden, and South Africa.  On this subset, with a geographically distributed mix of high and low-carbon grids, \PCM outperforms greedy and delayed greedy by 3.74\% and 4.73\%, respectively.  

{\color{blue}
We briefly note that since the greedy policies exhibit fairly good performance across many of these experiments, for the intended application of carbon-aware workload management in data centers it may be worthwhile to evaluate the performance of \CarbonClipper when given black-box advice $\ADV$ that simply encodes the decisions of the greedy policy.  Since the black-box advice model can accommodate any arbitrary sequence of decisions, including heuristics, such a composition may achieve a favorable trade-off between average-case performance and worst-case guarantees if e.g., machine-learned forecasts are not available.
}

\subsection*{Experiment IX: \textit{\textmd{\ Runtime (wall clock) overhead measurements.}}}
In this experiment, we measure the wall clock runtime of each tested algorithm. Our experiment implementations are in Python, using NumPy~\cite{NumPy}, SciPy~\cite{SciPy}, and CVXPY~\cite{CVXPY} -- we use the \verb|time.perf_counter()| module in Python to calculate the total runtime (in milliseconds) for each algorithm on each instance, and report this value \textit{normalized by the deadline} to give the \textit{per-slot} (i.e., hourly) overhead of each algorithm.

\autoref{fig:runtime} reports these measurements for batch jobs with deadlines $T \in \{6, ..., 48\}$, fixed $G = 4$, and job lengths from the Google trace are truncated to $\nicefrac{T}{2}$ if necessary.  For this experiment, we run each algorithm and each instance in a single thread on a MacBook Pro with M1 Pro processor and 32 GB of RAM.

\begin{wrapfigure}{r}{0.32\textwidth}
    \vspace{0.2em}
  \includegraphics[width=0.32\textwidth]{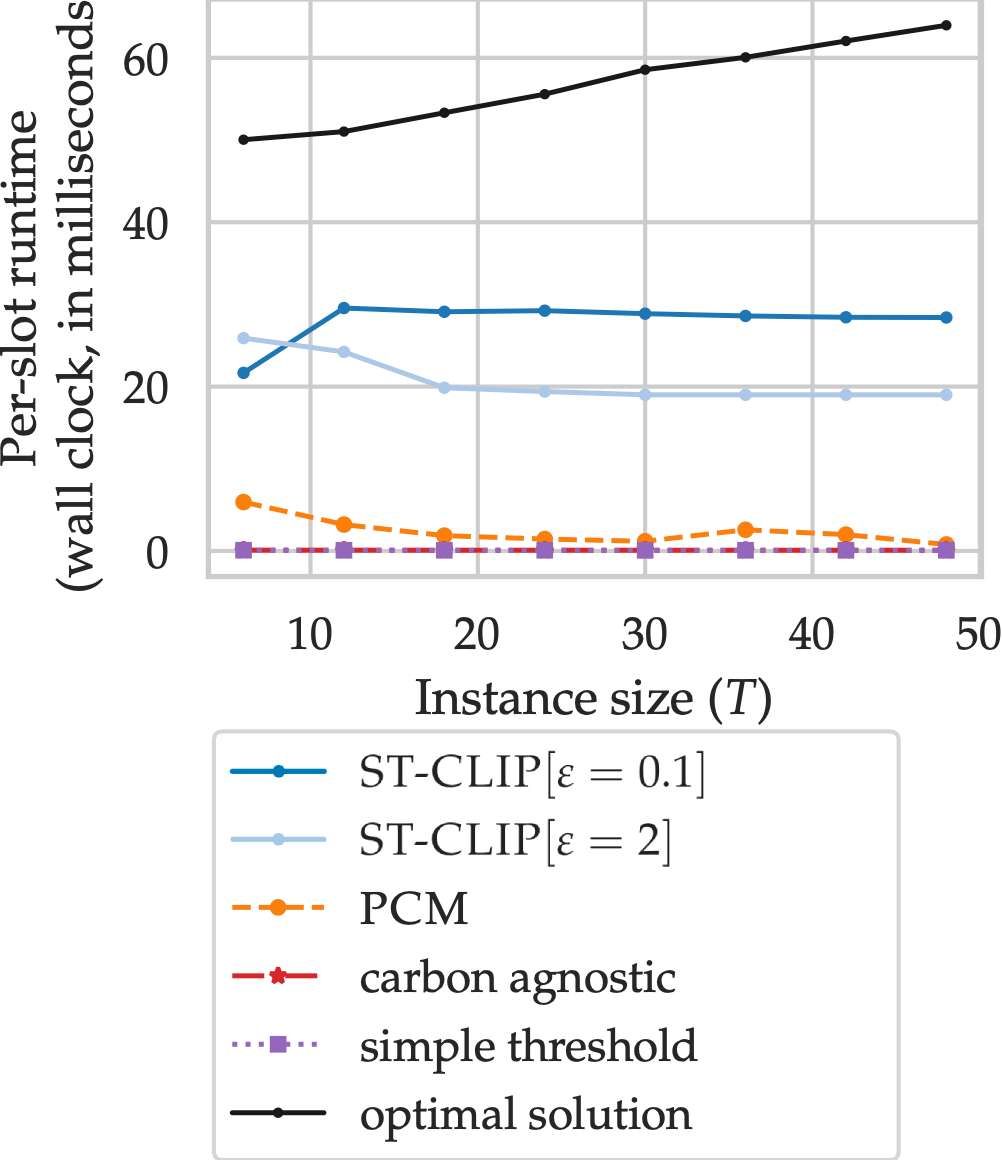}
  \caption{Average per-slot wall clock runtime for \textit{instance sizes (i.e., deadlines) $T \in \{6, ..., 48\}$}, with $G = 4$. }\label{fig:runtime}
\end{wrapfigure}
We find that the average per-slot (i.e., once per hour) runtimes of \PCM and \CarbonClipper are 2.3 milliseconds and 24.3 milliseconds, respectively -- this reflects the relative complexity of the optimization problems being solved in each, and the runtime is steady in the size of the instance, as is expected from an online algorithm.
The optimal solution takes, on average, $173.8$ milliseconds per slot to solve, although note that it finds the solution for all time slots at once.  As the size of the instance ($T$) grows, this per-slot time slightly increases.
{\color{blue}
Intuitively, the decision rule-based algorithms (carbon agnostic, simple threshold, and the greedy policies (not included in the plot)) have the lowest, and functionally negligible, runtime.
}
This last result suggests that one effective way to reduce the impact of \PCM and \CarbonClipper's runtime overhead would be to develop approximations that avoid computing the exact solution to the minimization problem.  However, for carbon intensity signals that are updated every 5 minutes to one hour, the overhead of \PCM and \CarbonClipper is likely reasonable in practice.

\@@par
\ifnum\@@parshape=\z@ \let\WF@pspars\@empty \fi %
\global\advance\c@WF@wrappedlines-\prevgraf \prevgraf\z@
\ifnum\c@WF@wrappedlines<\tw@ \WF@finale \fi

\subsection{Marginal Carbon Intensity} \label{apx:expmarginal}

\begin{figure*}[h]
    \minipage{\textwidth}
    \includegraphics[width=\linewidth]{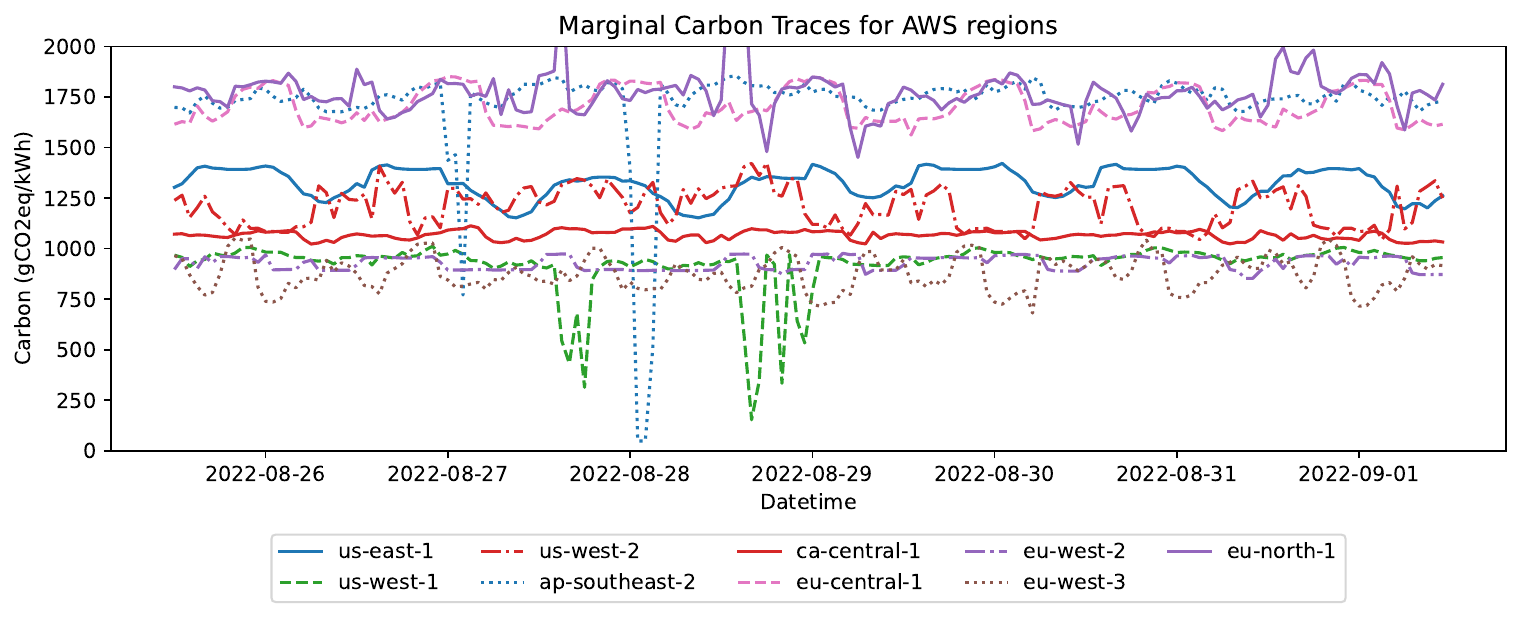}\vspace{-1em}
         \caption{ Marginal carbon intensity traces~\cite{electricity-map} for 9 AWS regions, over a week-long period in $2022$.}\label{fig:marginal_carbon_traces}
    \endminipage\hfill
\end{figure*}

\noindent In contrast to average carbon intensity, the marginal carbon intensity signal calculates the emissions of the generator(s) that are responding to changes in load on a grid at a certain time.  From WattTime~\cite{watttime}, we obtain  data for 9 of the 14 regions we consider, spanning all of 2022.  
{\color{blue}
This data also includes carbon intensity forecasts published by WattTime, and we use these forecasts directly instead of generating synthetic forecasts as in \autoref{sec:expsetup}. 
}
In \autoref{fig:marginal_carbon_traces}, we plot a one-week sample of carbon intensity data to motivate this visually.

A top-level summary of these experiments is given in \autoref{fig:cdf_marginal} -- this plot gives a cumulative distribution function (CDF) of the empirical competitive ratios for all tested algorithms, aggregating over all experiments that use the marginal carbon intensity signal.  In these experiments, we consider all of the $9$ regions for which the marginal data is available, and each job's deadline $T$ is a random integer between $12$ and $48$ (henceforth denoted by $T \thicksim \text{Unif}_{\mathbb{Z}} (12, 48)$).

{\color{blue}
In these experiments, we observe differences that are likely attributable to the characteristic behavior of the marginal carbon intensity signal, which generally represents the high emissions rate of a quick-to-respond generator (e.g., a gas turbine) unless the supply of renewables on the grid exceeds the current demand.
}
However, the relative ordering of performance has been largely preserved. 

\begin{figure*}[t]
    \minipage{\textwidth}
    \includegraphics[width=\linewidth]{plots/legend.png}\vspace{0.1em}
    \endminipage\hfill\\
    \minipage{0.32\textwidth}
     \includegraphics[width=\linewidth]{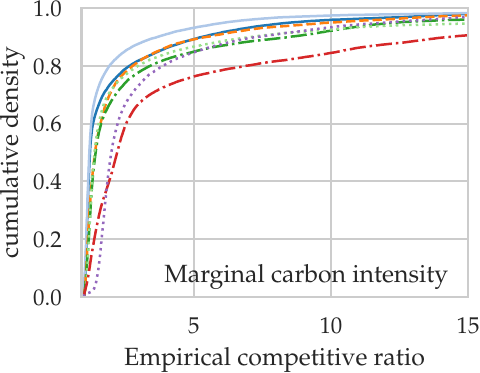}\vspace{-1em}
     \caption{ CDFs of competitive ratios for each algorithm, across all marginal carbon intensity experiments. }\label{fig:cdf_marginal}
     \endminipage\hfill
    \minipage{0.32\textwidth}
     \includegraphics[width=\linewidth]{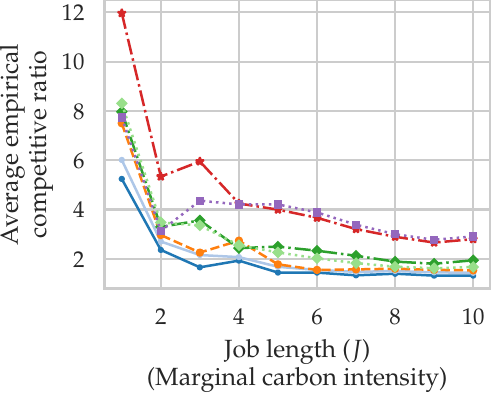}\vspace{-1em}
     \caption{ Average competitive ratios using marginal carbon data, with varying \textit{job length $J$} and $G = 4, T \thicksim \text{Unif}_{\mathbb{Z}}(12, 48)$. }\label{fig:marginal}
     \endminipage\hfill
     \minipage{0.32\textwidth}
     \includegraphics[width=\linewidth]{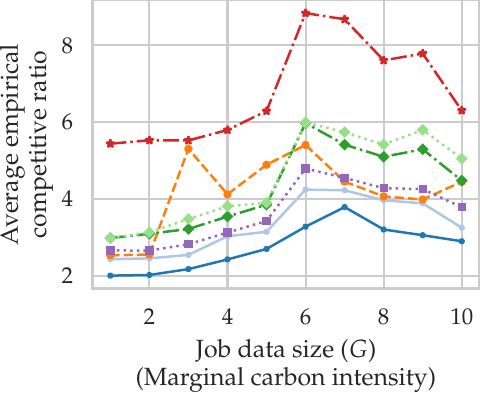}\vspace{-1em}
     \caption{ Average competitive ratios using marginal carbon data, with varying \textit{job data size $G$} and $T \thicksim \text{Unif}_{\mathbb{Z}}(12, 48)$. }\label{fig:marginalgb}
     \endminipage\hfill
\end{figure*}

Interestingly, we note that the real forecasts are worse in the marginal setting compared to the average carbon intensity signal, again likely because of the characteristics of marginal carbon.  Rather than predicting e.g., the diurnal patterns of an average signal, predicting marginal carbon requires a model to pick out specific time slots where curtailment is expected to occur.

{\color{blue}
Despite these challenges of forecasting the marginal signal, \CarbonClipper with $\varepsilon = 2$ outperforms the baselines in both average and worst-case performance, improving on the closest greedy policy by an average of 12.04\%, and outperforming delayed greedy, simple threshold, and carbon-agnostic by averages of 11.02\%, 37.12\%, and 85.67\%, respectively.   
}

\subsection*{Marginal Experiment I: \textit{\textmd{\ Effect of job length $J$.}}} 
{\color{blue}
In \autoref{fig:marginal}, we plot the average competitive ratio for different job lengths $J \in \{1, \dots, 10\}$.  
}
Each job has $G = 4$ and an integer deadline $T$ randomly sampled between $12$ and $48$. 
Compared to the greedy policies, our robust baseline \PCM performs favorably in these experiments, which is likely due to a combination of differences in the marginal setting and the less performant forecasts.
Similar to the average carbon setting, the simple threshold technique performs well when the job length is short, but performance suffers when it has more opportunities for migration.

\subsection*{Marginal Experiment II: \textit{\textmd{\ Effect of job data size $G$.}}}
{\color{blue}
In \autoref{fig:marginalgb}, we plot the average empirical competitive ratio for different job data sizes $G \in \{1, \dots, 10\}$ using the marginal carbon signal.
}
As predicted by the theoretical bounds, and observed in the main body for the average carbon experiments, the performance of \PCM degrades as $G$ grows -- we observe the same effect for the greedy policies and the simple threshold algorithm.  The performance of \CarbonClipper also grows, which suggests that larger migration overhead does have an impact when the advice suffers from a lack of precision due to the more challenging setting for forecasts posed by marginal carbon intensity.

\section{Proofs for \autoref{sec:problem} (Problem Formulation, Challenges and Technical Preliminaries)} \label{apx:equivRand}
{\color{blue}
In the following, we prove \autoref{thm:equivRand}, which shows that the expected cost of any randomized \SOAD decision $\mbf{p}_t \in \Delta_\mathcal{S}$ is equivalent to that of a decision which chooses a point in $X$ probabilistically according to the distribution of $\mbf{p}_t$ and then interprets the $\texttt{ON / OFF}$ probabilities at that point as deterministic allocations in $\mathcal{X}$.
}

\begin{proof}[Proof of \autoref{thm:equivRand}]
{\color{blue}
Suppose that $\mbf{p}_t, \mbf{p}_{t-1}$ are probability distributions over the randomized state space $\Delta_{\mathcal{S}}$, and let $\mathbb{E}\left[ \text{Cost}(\mbf{p}_t, \mbf{p}_{t-1}) \right]$ denote the expected cost of decision $\mbf{p}_t$.}  This cost is defined as follows:
\begin{align*}
\mathbb{E}\left[ \text{Cost}(\mbf{p}_t, \mbf{p}_{t-1}) \right] &= \mathbb{E} \left[ f_t(\mbf{p}_t) + g (\mbf{p}_t, \mbf{p}_{t-1} ) \right].
\end{align*}
Recall that because the cost function $f_t$ is linear and separable, the expectation can be written as: 
{\color{blue}
\[
\mathbb{E}\left[ f_t(\mbf{p}_t) \right] = \sum_{u \in X} f_t^{(u)} p_t^{\ON^{(u)}}.
\]
For any $\mbf{p}_t \in \Delta_{\mathcal{S}}$, let $\mbf{r}_t \coloneqq \{ 
r_t^{(u)} \gets p_t^{\ON^{(u)}} + p_t^{\OFF^{(u)}} : u \in X\} \in \Delta_X$, i.e., a vector that aggregates the total probabilities across the states space $\mathcal{S}$ at each point of $X$.
}

We note that by disaggregating the spatial and temporal switching costs, we have that the expectation $\mathbb{E} \left[ g (\mbf{p}_t, \mbf{p}_{t-1} ) \right]$ can be written in terms of the Wasserstein-1 distance with respect to the underlying metric $X$ and a linear temporal term that depends on the probability assigned to the \OFF state.  This is the case because the optimal transport plan $\mathbb{W}^1(\mbf{p}_t, \mbf{p}_{t-1})$ must always involve first moving probability mass to/from the \OFF and \ON states at each point of $X$, and then within the spatial metric -- 
{\color{blue}
this follows since \SOAD defines that movement within the spatial metric $X$ can only be made between \ON states -- i.e., a player moving from $\OFF^{(u)}$ to $\OFF^{(v)}$ must first traverse to $\ON^{(u)}$, then through the metric $X$, and finally through $\ON^{(v)}$.
\[
\mathbb{E} \left[ g (\mbf{p}_t, \mbf{p}_{t-1} ) \right] = \mathbb{W}^1 (\mbf{r}_t, \mbf{r}_{t-1}) + \sum_{u \in X} \beta^{(u)} \vert p_t^{\OFF^{(u)}} - p_{t-1}^{\OFF^{(u)}} \vert.
\]
}
Thus, the expected cost of $\mbf{p}_t$ can be written as:
{\color{blue}
\[
\mathbb{E}\left[ \text{Cost}(\mbf{p}_t, \mbf{p}_{t-1}) \right] = \sum_{u \in X} f_t^{(u)} p_t^{\ON^{(u)}} + \mathbb{W}^1 (\mbf{r}_t, \mbf{r}_{t-1}) + \sum_{u \in X} \beta^{(u)}  \vert p_t^{\OFF^{(u)}} - p_{t-1}^{\OFF^{(u)}} \vert.
\]
\noindent In the mixed probabilistic/deterministic setting, the true allocation to $\ON^{(u)}$ (denoted by $\tilde{p}_t^{\ON^{(u)}}$) for any point $u \in X$ is defined as $\tilde{p}_t^{\ON^{(u)}} = \nicefrac{p_t^{\ON^{(u)}}}{r_t^{(u)}}$ (conversely, we have  $\tilde{p}_t^{\OFF^{(u)}} = \nicefrac{p_t^{\OFF^{(u)}}}{r_t^{(u)}}$).
Letting $\mathbb{L}_t^{(u)} \in \{0,1\}$ denote an indicator variable that encodes the player's location (i.e., point) at time $t$ (0 if player is not at $u$, 1 if player is at $u$), the expected cost of $\tilde{\mbf{p}}_t$ can be written as follows:
\[
\mathbb{E}\left[ \text{Cost}(\tilde{\mbf{p}}_t, \tilde{\mbf{p}}_{t-1}) \right] = \mathbb{E}\left[ \sum_{u \in X} \mathbb{L}_t^{(u)} f_t^{(u)} \tilde{p}_t^{\ON^{(u)}} \right] + \mathbb{W}^1(\mbf{r}_t, \mbf{r}_{t-1}) + \mathbb{E}\left[ \sum_{u \in X} \left\vert \mathbb{L}_t^{(u)} \beta^{(u)} \tilde{p}_t^{\OFF^{(u)}} - \mathbb{L}_{t-1}^{(u)} \beta^{(u)} \tilde{p}_{t-1}^{\OFF^{(u)}} \right\vert \right].
\]
Noting that $\mathbb{E} \left[ \mathbb{L}_t^{(u)} \right] = r_t^{(u)}$ by linearity of expectation we have:
\[
\mathbb{E}\left[ \text{Cost}(\tilde{\mbf{p}}_t, \tilde{\mbf{p}}_{t-1}) \right] = \sum_{u \in X} r_t^{(u)} f_t^{(u)} \tilde{p}_t^{\ON^{(u)}} + \mathbb{W}^1(\mbf{r}_t, \mbf{r}_{t-1}) + \sum_{u \in X} \vert  r_t^{(u)} \beta^{(u)} \tilde{p}_t^{\OFF^{(u)}} - r_{t-1}^{(u)} \beta^{(u)} \tilde{p}_{t-1}^{\OFF^{(u)}} \vert.
\]
Furthermore, recalling the definitions of $\tilde{p}_t^{\ON^{(u)}}$ and $\tilde{p}_t^{\OFF^{(u)}}$, we have the following:
\[
\mathbb{E}\left[ \text{Cost}(\tilde{\mbf{p}}_t, \tilde{\mbf{p}}_{t-1}) \right] = \sum_{u \in X} f_t^{(u)} p_t^{\ON^{(u)}} + \mathbb{W}^1(\mbf{r}_t, \mbf{r}_{t-1}) + \sum_{u \in X} \beta^{(u)} \vert p_t^{\OFF^{(u)}} - p_{t-1}^{\OFF^{(u)}} \vert.
\]
Recalling that $\mathbb{W}^1(\mbf{p}_t, \mbf{p}_{t-1}) = \mathbb{W}^1(\mbf{r}_t, \mbf{r}_{t-1}) + \sum_{u \in X} \beta^{(u)} \vert p_t^{\OFF^{(u)}} - p_{t-1}^{\OFF^{(u)}} \vert$ by the structure of the spatial and temporal switching costs completes the proof, since $\mathbb{E}\left[ \text{Cost}(\mbf{p}_t, \mbf{p}_{t-1}) \right] = \mathbb{E}\left[ \text{Cost}(\tilde{\mbf{p}}_t, \tilde{\mbf{p}}_{t-1}) \right]$, and thus the expected cost is equivalent if a point (location) is first chosen probabilistically and the \ON \ / \OFF probabilities at that point are then interpreted as deterministic (fractional) allocations.}
\end{proof}

\section{Proofs for \autoref{sec:pcm} (A Competitive Online Algorithm)} \label{apx:pcm}
\subsection{Convexity of the pseudo-cost minimization problem in \PCM} \label{apx:pseudo-convex}

{\color{blue}
In this section, we prove \autoref{thm:PCMconvex}, which states that the pseudo-cost minimization problem central to the design of \PCM is a convex minimization problem, implying that it can be solved efficiently.

For convenience, let $h_t(\mbf{k}) : t \in [T]$ represent the pseudo-cost minimization problem's objective for a single arbitrary time step:
}
\begin{equation}
    h_t(\mbf{k}) = f_t(\mbf{k}) + \norm{\mbf{k} - \mbf{k}_{t-1}} - \int_{z^{(t-1)}}^{z^{(t-1)} + \overline{c}( \mbf{k} ) }\psi(u) du.
\end{equation}
 
\begin{proof}[Proof of \autoref{thm:PCMconvex}]

We prove the statement by contradiction.
By definition, the sum of two convex functions gives a convex function.  Since $\norm{\mbf{k} - \mbf{k}_{t-1}}$ is a norm and $\mbf{k}_{t-1}$ is fixed, by definition it is convex.  We have also assumed as part of the problem setting that each $f_t( \mbf{k} )$ is linear.  Thus, $f_t(\mbf{k}) + \norm{\mbf{k} - \mbf{k}_{t-1}}$ must be convex.
{\color{blue}
The remaining term is the negation of $\int_{z^{(t-1)}}^{z^{(t-1)} + \overline{c}( \mbf{k} ) }\psi(u) du$.  Let $w(\overline{c}(\mbf{k})) = \int_{z^{(t-1)}}^{z^{(t-1)} + \overline{c}( \mbf{k} ) }\psi(u) du$.
}
By the fundamental theorem of calculus, we have 
\[
\nabla w(\overline{c}(\mbf{k})) = \psi( z^{(t-1)} + \overline{c}( \mbf{k} ) ) \nabla \overline{c}(\mbf{k}).
\]
Let $b(\overline{c}(\mbf{k})) = \psi( z^{(t-1)} + \overline{c}( \mbf{k} ) )$.  Then we have 
\[
\nabla^2 w(\overline{c}(\mbf{k})) = \nabla^2 \overline{c}(\mbf{k}) w(\overline{c}(\mbf{k})) + \nabla \overline{c}(\mbf{k}) b'(\overline{c}(\mbf{k})) \nabla \overline{c}(\mbf{k})^\intercal.
\]
{\color{blue}
Since $\overline{c}(\mbf{k})$ is piecewise linear by the definition of \SOAD, we know that $\nabla^2 \overline{c}(\mbf{k}) b(\overline{c}(\mbf{k})) = 0$.
}
Since $\psi$ is monotonically decreasing on the interval $[0,1]$, we know that $ b'(\overline{c}(\mbf{k})) < 0$, and thus $\nabla \overline{c}(\mbf{k}) b'(\overline{c}(\mbf{k})) \nabla \overline{c}(\mbf{k})^\intercal$ is negative semidefinite.  This implies that $w(\overline{c}(\mbf{k}))$ is concave in $\mbf{k}$.

Since the negation of a concave function is convex, this causes a contradiction, because the sum of two convex functions gives a convex function.
Thus, $h_t(\cdot) = f_t(\mbf{k}) + \norm{ \mbf{k} - \mbf{k}_{t-1}} - \int_{z^{(t-1)}}^{z^{(t-1)} + \overline{c}( \mbf{k} ) }\psi(u) du$ is always convex under the assumptions of \SOAD.
\end{proof}

By showing that $h_t( \cdot ) $ is convex, it follows that the pseudo-cost minimization \eqref{eq:pseudocostPCM} in \PCM is a convex minimization problem (i.e., it can be solved efficiently using numerical methods).

\subsection{Proof of \autoref{thm:etaCompPCM}} \label{apx:etaCompPCM}

In the following, we prove Theorem~\ref{thm:etaCompPCM}. In what follows, we let $\mathcal{I} \in \Omega$ denote an arbitrary valid $\SOAD$ instance.  Let $z^{(j)} = \sum_{t\in[T]} \overline{c}( \mbf{k}_t )$ denote the final utilization before the mandatory allocation. Also note that $z^{(t)} = \sum_{m\in[t]} \overline{c}( \mbf{k}_m )$ is non-decreasing over $t$.

{\color{blue}
In what follows, we let $\eta$ be defined as the solution to $\ln \left( \frac{U - L - D - 2\tau}{U - \nicefrac{U}{\eta} - D} \right) = \frac{1}{\eta}$, which has a closed form given by:
\begin{equation}
    \eta \coloneqq \left[ W \left( \frac{(D+L-U+2\tau) \exp \left(\frac{D-U}{U} \right) }{ U } \right) + \frac{U - D}{U} \right]^{-1}.
\end{equation}
Note that setting $\eta$ as above satisfies the following equality within the pseudo-cost function $\psi$ (defined in \sref{Def.}{dfn:psi-pcm}):
$$\psi(1) = U - \tau + (\nicefrac{U}{\eta} - U + D + \tau) \exp(\nicefrac{z}{\eta}) = L + D .$$
}
We start by proving \sref{Lemma}{lem:pcm-opt-lb}, which states that $\OPT$ is lower bounded by 
\[
{\OPT}(\mathcal{I}) \ge \frac{\max \left\{ \psi(z^{(j)}) - D, L \right\}}{O(\log n)}.
\]
\paragraph{Proof of \sref{Lemma}{lem:pcm-opt-lb}.} 

Without loss of generality, denote the minimum gradient of any cost function (excluding \OFF states) by $\nabla_{\min}$.  Suppose that a cost function $f_m$ with $\nabla_{\min}$ gradient (at a dimension corresponding to any \ON state) arrives at time step $m$.

\noindent Recall that $\PCM$ solves the following pseudo-cost minimization problem at time $m$:
\[
\mbf{k}_m = \argmin_{\mbf{k} \in K : \overline{c}(\mbf{k}) \leq 1-z^{(t-1)}} f_m (\mbf{k}) + \norm{ \mbf{k} - \mbf{k}_{m-1} } - \int_{z^{(m-1)}}^{z^{(m-1)}+ \overline{c}(\mbf{k})} \psi(u) du
\] 

{\color{blue} %
\noindent By assumption, since $f_m( \cdot)$ is linear and satisfies $\nabla f_m < \psi(z^{(j)}) - D$, there must exist a dimension in $f_m$ (i.e., a service cost associated with an $\ON$ state) that satisfies the following.  Let $\ON^{[d]} \subset [d]$ denote the index set (i.e., the dimensions in $\mbf{k}$) that correspond to allocations in $\ON$ states.
\[
\exists i \in \ON^{[d]} : f_m(\mbf{k})_i \leq \left[ \nabla_{\min} \cdot \overline{c}(\mbf{k}) \right]_i.
\]
Also note that $\norm{ \mbf{k} - \mbf{k}_{m-1} }$ is upper bounded by $(D + \tau) \overline{c}(\mbf{k}) $, since in the worst-case, \PCM must pay the max movement and switching cost to move the allocation to the ``furthest'' point and make a decision $\mbf{k}$.}

{\color{blue}
Since $\psi$ is monotone decreasing on the interval $z \in [0,1]$, by definition we have that $\mbf{k}_m$ solving the true pseudo-cost minimization problem is lower-bounded by the $\Breve{\mbf{k}}_m$ solving the following minimization problem (specifically, the constraint satisfaction satisfies $\overline{c} ( \Breve{\mbf{k}}_m ) \leq \overline{c} ( \mbf{k}_m ) $):
\begin{equation}
\Breve{\mbf{k}}_m = \argmin_{\mbf{k} \in K : \overline{c}(\mbf{k}) \leq 1-z^{(t-1)}} \nabla_{\min} \cdot \overline{c}(\mbf{k}) + D \overline{c}(\mbf{k}) +\tau \overline{c}(\mbf{k}) - \int_{z^{(m-1)}}^{z^{(m-1)}+ \overline{c}(\mbf{k})} \psi(u) du. \label{eq:lbminprob}
\end{equation}
By expanding the right hand side, we have:}
\begin{align*}
    &  \nabla_{\min} \cdot \overline{c}(\mbf{k}) + D \overline{c}(\mbf{k}) + \tau \overline{c}(\mbf{k}) - \int_{z^{(m-1)}}^{z^{(m-1)}+ \overline{c}(\mbf{k})} \psi(u) du\\
    & \nabla_{\min} \cdot \overline{c}(\mbf{k}) + D \overline{c}(\mbf{k}) + \tau \overline{c}(\mbf{k}) - \int_{z^{(m-1)}}^{z^{(m-1)}+ \overline{c}(\mbf{k})} \left[ U - \tau + (\nicefrac{U}{\eta} - U + D + \tau) \exp(\nicefrac{z}{\eta}) \right] du\\
    & (\nabla_{\min} - U + D + \tau) \overline{c}(\mbf{k}) - \left[ (\tau - U + D) \eta + U \right] \left( \exp \left( \frac{z^{(m-1)}+ \overline{c}(\mbf{k})}{\eta} \right) - \exp \left( \frac{z^{(m-1)}}{\eta} \right) \right)
\end{align*}
{\color{blue}
Letting $\overline{c}(\mbf{k})$ be some scalar $y$ (which is valid since we assume there is at least one dimension $i \in [d]$ where the growth rate of $f_m(\cdot)$ is at most $\nabla_{\min}$), the pseudo-cost minimization problem finds the value $y$ that minimizes the following quantity:
}
\begin{align*}
    &  (\nabla_{\min} - U + D +\tau) y - \left[ (\tau - U + D) \eta + U \right] \left( \exp \left( \frac{z^{(m-1)}+ y}{\eta} \right) - \exp \left( \frac{z^{(m-1)}}{\eta} \right) \right)
\end{align*}
Taking the derivative of the above with respect to $y$ yields the following:
\begin{align}
    & =  \nabla_{\min} + D + \tau - U - \frac{\left[ (\tau - U + D) \eta + U \right] \exp \left( \frac{z^{(m-1)}+ y}{\eta} \right)}{\eta}
    & = \nabla_{\min} + D - \psi(z^{(m-1)} + y) \label{align:mingrad}
\end{align}
{\color{blue}
Note that since the minimization problem is convex by \autoref{thm:PCMconvex}, the unique solution to the above coincides with a point $y$ where the derivative is zero.  This implies that $\PCM$ will increase $\overline{c}(\mbf{k})$ until $\nabla_{\min} = \psi(z^{(m-1)} + y) - D$, which further implies that $\psi(z^{(m)}) = \nabla_{\min} + D$.  

Since this minimization $\Breve{\mbf{k}}_m$ is a lower bound on the true value of $\mbf{k}_m$, this implies that $\psi(z^{(t)}) - D$ is a lower bound on the minimum service cost coefficient (excluding \OFF states) seen \textit{so far} at time $t$.}  Further, the \textit{final utilization} $z^{(j)}$ gives that the minimum service cost coefficient over any cost function over the entire sequence is lower bounded by $\psi(z^{(j)}) - D$.  

Note that the best choice for $\OPT$ is to service the entire workload at the minimum service cost if it is feasible.  
{\color{blue}
Since the vector space $(K, \norm{\cdot})$ used by \PCM has at most $O(\log n)$ distortion with respect to the underlying metric used by $\OPT$ (see \sref{Definition}{dfn:tree}), this implies that $\OPT(\mathcal{I}) \geq \frac{ \max \left\{ \psi(z^{(j)}) - D, \ L \right\} }{O(\log n)}$.  \hfill \qed
}

\bigskip

Next, we prove \sref{Lemma}{lem:pcm-alg-ub}, which states that the expected cost of $\PCM(\mathcal{I})$ is upper bounded by $\ex[ {\PCM}(\mathcal{I}) ] \le \int_{0}^{z^{(j)}} \psi(u) du + (1- z^{(j)}) U + \tau z^{(j)}$.
\paragraph{Proof of \sref{Lemma}{lem:pcm-alg-ub}.}

Recall that $z^{(t)} = \sum_{m\in[t]} \overline{c}( \mbf{k}_m )$ is non-decreasing over $t$.

\noindent Observe that whenever $\overline{c}( \mbf{k}_t ) > 0$, we have that $f_t(\mbf{k}_t) + \norm{ \mbf{k}_t - \mbf{k}_{t-1} } < \int_{z^{(t-1)}}^{z^{(t-1)}+ \overline{c}( \mbf{k}_t )} \psi(u) du$.  Then, if $\overline{c}(\mbf{k}_t) = 0$, which corresponds to the case when $\mbf{k}_t$ allocates all of the marginal probability mass to $\OFF$ states, we have the following:
\begin{equation}
f_t(\mbf{k}_t) + \norm{ \mbf{k}_t - \mbf{k}_{t-1} } - \int_{z^{(t-1)}}^{z^{(t-1)}+ \overline{c}( \mbf{k}_t )}  \psi(u) du = 0 + \norm{ \mbf{k}_t - \mbf{k}_{t-1} } - 0 = \norm{ \mbf{k}_t - \mbf{k}_{t-1} } \label{align:excesscost}
\end{equation}
{\color{blue}
This gives that for any time step where $\overline{c}(\mbf{k}_t ) = 0$, we have the following inequality, which follows by observing that from \sref{Assumption}{ass:switching-cost}, any marginal probability mass assigned to $\ON$ states in the previous time step can be moved to $\OFF$ states at a cost of at most $\tau \overline{c}(\mbf{k}_{t-1})$.
\begin{equation}
    f_t(\mbf{k}_t) + \norm{ \mbf{k}_t - \mbf{k}_{t-1} } \le \norm{ \mbf{k}_t - \mbf{k}_{t-1} } \le \tau \overline{c}(\mbf{k}_{t-1}), \forall t\in[T] : \overline{c}(\mbf{k}_t ) = 0.
\end{equation}
Since any time step where $\overline{c}( \mbf{k}_t ) > 0$ implies that $f_t(\mbf{k}_t) + \norm{ \mbf{k}_t - \mbf{k}_{t-1} } < \int_{z^{(t-1)}}^{z^{(t-1)}+ \overline{c}(\mbf{k}_t )} \psi(u) du$, we have the following inequality across all time steps (\textit{i.e., an upper bound on the excess cost not accounted for by the pseudo-cost}):
}
\begin{equation}
    f_t(\mbf{k}_t) + \norm{ \mbf{k}_t - \mbf{k}_{t-1} } - \int_{z^{(t-1)}}^{z^{(t-1)}+ \overline{c}(\mbf{k}_t )} \psi(u) du \leq \tau \overline{c}(\mbf{k}_{t-1}), \forall t\in[T].
\end{equation}
Thus, we have
\begin{align}
   \tau z^{(j)} = \sum_{t\in[j]} \tau \overline{c}(\mbf{k}_{t-1}) &\ge \sum_{t\in[j]}  \left[ f_t(\mbf{k}_t) + \norm{ \mbf{k}_t - \mbf{k}_{t-1} } - \int_{z^{(t-1)}}^{z^{(t-1)}+ \overline{c}(\mbf{k}_t )} \psi(u) du \right]\\
   &= \sum_{t\in[j]}  \left[ f_t(\mbf{k}_t) + \norm{ \mbf{k}_t - \mbf{k}_{t-1} } \right] - \int_{0}^{z^{(j)}} \psi(u) du \\
    &= \PCM(\mathcal{I}) - (1-z^{(j)})U - \int_{0}^{z^{(j)}}\psi(u) du.
\end{align}
Combining Lemma~\ref{lem:pcm-opt-lb} and Lemma~\ref{lem:pcm-alg-ub} gives
\begin{align}
    \texttt{CR} \le \frac{\ex[ \PCM(\mathcal{I}) ]}{\OPT(\mathcal{I})} \le \frac{\int_{0}^{z^{(j)}} \psi(u) du + (1- z^{(j)}) U + \tau z^{(j)}}{\max \{ \psi(z^{(j)}) - D, L \} } \leq \eta,
\end{align}
where the last inequality holds since for any $z\in[0,1]$:
\begin{align}
  \int_{0}^{z} \psi(u) du + \tau z + (1- z) U &=  \int_{0}^{z} \left[U - \tau + (\nicefrac{U}{\eta} - U + D + \tau) \exp(\nicefrac{z}{\eta}) \right] du + (1 - z)U + \tau z \\
  &=  \left[ \left((\tau - U + D){\eta}+U\right)\mathrm{e}^\frac{u}{{\eta}}+ U u - \tau u  \right]_{0}^{z} + (1 - z)U + \tau z \\ %
  &= \left((\tau - U + D){\eta}+U\right)\mathrm{e}^\frac{z}{{\eta}} - (\tau - U + D){\eta}\\
  &= \eta \left[ \left((\tau - U + D) + \frac{U}{\eta} \right)\mathrm{e}^\frac{z}{{\eta}} - {\tau} + U - D \right]\\
  &= \eta [\psi(z) - D]. \label{align:bestservicecost}
\end{align}
This completes the proof of \autoref{thm:etaCompPCM}. \hfill \qed

\subsection{Proof of \autoref{thm:lowerboundCSWM}} \label{apx:lowerboundCSWM}

In this section, we prove Theorem~\ref{thm:lowerboundCSWM}, which states that $\eta$ (as defined in \eqref{eq:eta}) is the optimal competitive ratio for \SOAD.
To show this lower bound, we first define a family of special two-stage adversaries, a corresponding metric space $X$, and then show that the competitive ratio for any algorithm is lower bounded under the instances provided by these adversaries.

Prior work has shown that difficult instances for online search problems with a minimization objective occur when inputs arrive at the algorithm in an decreasing order of cost~\cite{Lorenz:08, Lechowicz:23, ElYaniv:01, SunZeynali:20}.  
{\color{blue}
For \SOAD, we extend this idea and additionally consider how adaptive adversaries can strategically present good service cost functions at distant points in the metric first, followed by good service costs at the starting point (e.g., ``at home''), to create a family of sequences that simultaneously penalize the online player for moving ``too much'' and for not moving enough.
}

We now formalize two such families of adversaries, namely $\{\mathcal{G}_y \}_{y \in [L,U]}$ and $\{ \mathcal{A}_y \}_{y \in [L, U]}$, where $\mathcal{A}_y$ and $\mathcal{G}_y$ are both called $y$-adversaries.

\begin{figure*}[!h]
    \minipage{0.5\textwidth}
    \centering
    \includegraphics[width=\linewidth]{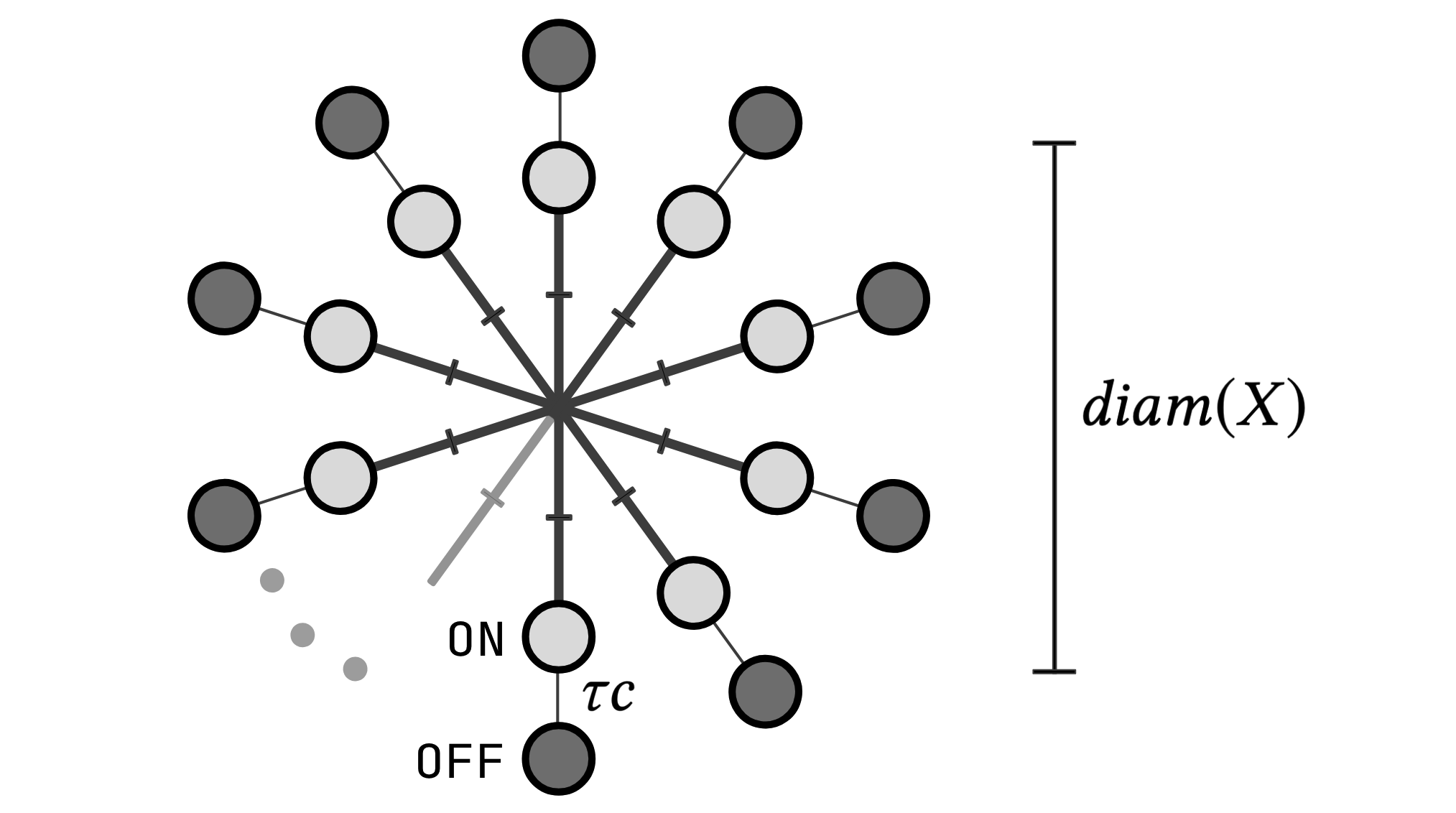}\vspace{-0.4em}
    \endminipage\hfill
    \minipage{0.5\textwidth}
    \caption{A motivating illustration of the lower bound star metric considered in  \sref{Definition}{dfn:yadversary-caslb}. Light gray circles represent the points of the metric space $(X, d)$, and darker circles represent the \OFF states of \SOAD. Note that the distance between any two points in the metric is $\diam(X) = D c$. }
    \endminipage
\end{figure*} \label{fig:metric-lb}
\begin{dfn}[$y$-adversaries for \SOAD] 
Let $m \in \mathbb{N}$ be sufficiently large, and $\sigma := \nicefrac{(U-L)}{m}$. 
The metric space $X$ is a \textit{weighted star metric} with $n$ points%
, each with an \texttt{ON} and \texttt{OFF} state.  
{\color{blue}
For the constraint function $c(\cdot)$, we set one throughput constant for all \ON states such that $c^{(u)} \ll 1 : u \in X$.  This value is henceforth simply denoted by $c$.  See \autoref{fig:metric-lb} for an illustration.
}

The movement cost can be represented by a weighted $\ell_1$ norm $\norm{\cdot}$ that combines the spatial distances given by the metric with the temporal switching cost.
{\color{blue}
Recall \sref{Assumption}{ass:switching-cost} -- at any single point $u$, the switching cost between the \ON and \OFF states is given by $\tau c \cdot  \vert x_t^{\ON^{(u)}} -  x_{t-1}^{\ON^{(b)}} \vert$ (i.e., for two arbitrary allocations $\mbf{x}_t$ and $\mbf{x}_{t-1}$) for $\tau > 0$.
Furthermore, for any two disjoint points in the metric $u, v : u \not = v$, the distance between $\ON^{(u)}$ and $\ON^{(v)}$ is exactly $Dc = \diam(X)$.
}

Let $y$ denote a value on the interval $[L, U]$, which represents the ``best service cost function'' presented by the $y$-adversary in their sequence.
We define two distinct stages of the input.

{\color{blue}
In \textbf{Stage 1}, the adversary presents two types of cost functions such that ``good cost functions'' are always at points that are distant to the online player.  These cost functions are denoted (for convenience) as $\text{Up}(x) = U x$, and $\text{Down}^i (x) = (U - i \sigma) x$, where one such cost function is delivered at each point's $\ON$ state, at each time step.

Without loss of generality, let the starting point be $s \in X$ and the start state be $\OFF^{(s)}$ (for both \emph{\ALG} and \emph{\OPT}).  In the first time step, the adversary presents cost function $\text{Up}(x) = U x$ at the starting point's $\ON$ state (i.e., $\ON^{(s)}$), and $\text{Down}^1 (x)$ at all of the other $(n-1)$ $\ON$ states.

If \emph{\ALG} \emph{ever} moves a non-zero fractional allocation to an $\ON$ state other than the starting point, that point becomes \emph{inactive}, meaning that the adversary will present $\text{Up}(x)$ at that location in the next time step and for the rest of the sequence.

In the second time step, the adversary presents cost function $\text{Up}(x) = U x$ at the starting point and any inactive points, and $\text{Down}^1 (x)$ at all of the other $\ON$ states. } The adversary continues to sequentially present  $\text{Down}^1 (x)$ in this manner until it has presented it at least $\mu$ times (where $\mu \coloneqq \nicefrac{1}{c}$).  It then moves on to present $\text{Down}^2 (x)$, $\text{Down}^3 (x)$, and so forth.  The adversary follows the above pattern, presenting ``good cost functions'' to a shrinking subset of $\ON$ states until they present $\text{Down}^{m_y} (x) = y x$ up to $\mu$ times at any remaining active states.  In the time step after the last $\text{Down}^{m_y} (x) = y x$ is presented, the adversary presents $\text{Up}(x)$ everywhere, and \textbf{Stage 1} ends.

{\color{blue}
In \textbf{Stage 2}, the adversary presents $\text{Up}(x)$ and $\text{Down}^i (x)$ cost functions at the starting point $s$, in an alternating fashion.  All other $\ON$ states are considered inactive in this stage, so they only receive $\text{Up}(x)$.  In the first time step, the adversary presents $\text{Down}^1 (x)$ at the starting point, followed by $\text{Up}(x)$ in the following time step.  In the third time step, the adversary presents $\text{Down}^2 (x)$ at the starting point, followed by $\text{Up}(x)$ in the subsequent time step.  The adversary continues alternating in this manner until they present $\text{Down}^{m_y} (x) = y x$ at the starting point.  In the $\mu-1$ time steps after $\text{Down}^{m_y} (x) = y x$ is presented, the adversary presents $\text{Down}^{m_y} (x) = y x$ at the starting point (allowing \emph{\OPT} to reduce their switching cost). } Finally, the adversary presents $\text{Up}(x)$ everywhere for the final $\mu$ time steps, and \textbf{Stage 2} ends.

The first family of $y$-adversaries, $\{\mathcal{G}_y \}_{y \in [L,U]}$, only uses \textbf{Stage 1} -- the sequence ends when \textbf{Stage 1} ends.
The second family, $\{ \mathcal{A}_y \}_{y \in [L, U]}$, sequentially uses both stages -- cost functions are presented using the full \textbf{Stage 1} sequence first, which is then followed by the full \textbf{Stage 2} sequence.  This concatenated sequence ends when  \textbf{Stage 2} ends.

{\color{blue}
Note that the final cost function for any point in any $y$-adversary instance is always $\text{Up}(x)$. }

\end{dfn} \label{dfn:yadversary-caslb}

\begin{proof}[Proof of \autoref{thm:lowerboundCSWM}]
Let $s(y)$ and $t(y)$ denote \textit{constraint satisfaction functions} mapping $[L,U] \rightarrow [0,1]$, that fully describe $\ALG$'s expected deadline constraint satisfaction (i.e., $\mathbb{E}\left[ \sum_{t \in [j]} \overline{c}(\mbf{k}) \right]$ before the mandatory allocation) during \textbf{Stage 1} and \textbf{Stage 2}, respectively.  

Note that for large $m$, \textbf{Stage 1} and \textbf{Stage 2} for $y = z - \sigma$ are equivalent to first processing \textbf{Stage 1} and \textbf{Stage 2} for $y = z$, and then processing an additional batch of cost functions such that the best cost function observed is $\text{Down}^{m_{z-\sigma}} (x) = (z - \sigma) x$.

As the expected deadline constraint satisfactions at each time step are unidirectional (irrevocable), we must have that $s(y)$ and $t(y)$ both satisfy $s(y - \sigma) \geq s(y)$ and $t(y - \sigma) \geq t(y)$, i.e. $s(y)$ and $t(y)$ are non-decreasing on $[L, U]$.
Note that the optimal solutions for adversaries $\mathcal{G}_y$ and $\mathcal{A}_y$ are not the same.  

{\color{blue}
In particular, we have that $\OPT(\mathcal{G}_y) = \min\{ y + D + \tau, U \}$.  For relatively large $y$, the optimal solution may choose to satisfy the constraint at the starting point, but for sufficiently small $y$, the optimal solution on $\mathcal{G}_y$ may choose to move to a distant point.
For $\mathcal{A}_y$, since the ``good cost functions'' arrive at the starting point, we have $\OPT(\mathcal{A}_y) = y$ for any $y \in [L, U]$.
}

Due to the adaptive nature of the $y$-adversary, any $\ALG$ incurs expected movement cost proportional to $s(y)$ during \textbf{Stage 1}.  
{\color{blue}
Furthermore, since $c \ll 1$ for all $\ON$ states, note that as soon as $s(y) > c$, in expectation, $\ALG$ has \textit{moved away from} the starting point.  Thus, during \textbf{Stage 2}, $\ALG$ must also incur expected movement cost proportional to $t(y)$ as it ``moves back'' to the starting point.
}
Let $\mbf{l} = \nicefrac{U}{\eta^\star} - 2\tau$ denote the worst marginal cost that an $\eta^\star$-competitive $\ALG$ should be willing to accept in either stage.
The total expected cost incurred by an $\eta^\star$-competitive online algorithm $\ALG$ on adversaries $\mathcal{A}_y$ and $\mathcal{G}_y$ can be expressed as follows:
\begingroup
\allowdisplaybreaks
\begin{align}
\ex [ \ALG(\mathcal{G}_y) ] &= s(\mbf{l}) \mbf{l} - \int^y_{\mbf{l}} u d s(u) + D s(y) + (1 - s(y))U + 2\tau s(y) \\
\ex [ \ALG(\mathcal{A}_y) ] &= s(\mbf{l}) \mbf{l} - \int^y_{\mbf{l}} u d s(u) + D s(y) + t(\mbf{l}) \mbf{l} - \int^y_{\mbf{l}} u d t(u) + Dt(y) +\\ 
& \hspace{15em} (1 - s(y) - t(y))U + 2\tau \left[ s(y) + t(y) \right] \label{eq:lb-special-cost}
\end{align}
\endgroup
{\color{blue}
In the above expressions, $uds(u)$ is the expected cost of buying $ds(u)$ constraint satisfaction at cost $u$.  The same convention extends to $udt(u)$.  $Ds(y)$ and $Dt(y)$ represent the movement cost paid by $\ALG$ during \textbf{Stage 1} and \textbf{Stage 2}, respectively, given that ``good service cost functions'' arrive at distant points (i.e. at distance $D$).}  $(1 - s(y))U$ and $(1 - s(y) - t(y))U$ give the expected cost of the mandatory allocation on adversary $\mathcal{G}_y$ and $\mathcal{A}_y$, respectively.  Similarly, $2\tau$ gives the expected temporal switching cost due to allocation decisions made (excepting the mandatory allocation).

For any $\eta^\star$-competitive algorithm, the constraint satisfaction functions $s(\cdot)$ and $t(\cdot)$ must simultaneously satisfy $\ex[ \ALG(\mathcal{G}_y) ] \leq \eta^\star \OPT(\mathcal{G}_y)$ and $\ex[ \ALG(\mathcal{A}_y) ] \leq \eta^\star \OPT(\mathcal{A}_y)$ for all $y \in [L, U]$.  This gives a necessary condition that the functions must satisfy as follows:
{\color{blue} 
\begin{align*}
s(\mbf{l}) \mbf{l} - \int^y_{\mbf{l}} u d s(u) + D s(y) + (1 - s(y))U + 2\tau s(y) \leq \eta^\star \left[ y + D + 2\tau \right]& \\
s(\mbf{l}) \mbf{l} - \int^y_{\mbf{l}} u d s(u) + t(\mbf{l}) \mbf{l} - \int^y_{\mbf{l}} u d t(u) + (1 - s(y) - t(y))U + \left[ D + 2\tau \right] (s(y) + t(y)) \leq \eta^\star \left[ y \right]&
\end{align*}
}
By integral by parts, the above expressions imply that the constraint satisfaction functions $s(y)$ and $t(y)$ must satisfy the following conditions:
\begin{align}
s(y) &\geq \frac{U - \eta^\star y - \eta^\star D - 2 \eta^\star \tau}{U - y - D - 2\tau} - \frac{1}{U - y - D - 2\tau} \int_{\mbf{l}}^y s(u) du\\
t(y) &\geq \frac{(y + D - U + 2\tau) s(y) }{U - y - D - 2\tau} + \frac{U - \eta^\star y}{U - y - D - 2\tau} - \frac{1}{U - y - D - 2\tau} \int_{\mbf{l}}^y s(u) + t(u) du\\ 
&\geq \frac{\eta^\star D + \eta^\star 2\tau}{U - y - D - 2\tau} - \frac{1}{U - y - D - 2\tau} \int_{\mbf{l}}^y t(u) du \label{align:lb-conditionineq}
\end{align}
In what follows, we substitute $\mbf{l} = \frac{U}{\eta} - 2\tau$.
By Gr\"{o}nwall's Inequality \cite[Theorem 1, p. 356]{Mitrinovic:91}, we have the following:
\begingroup
\allowdisplaybreaks
\begin{align*}
s(y) & \geq \frac{U - \eta^\star y - \eta^\star D - 2 \eta^\star \tau}{U - y - D - 2\tau} - \int_{\mbf{l}}^y \frac{U - \eta^\star u - \eta^\star D - 2 \eta^\star \tau}{(U - u - D - 2\tau)^2} du \\
& \geq \frac{U - \eta^\star y - \eta^\star D - 2 \eta^\star \tau}{U - y - D - 2\tau} - \left[ \frac{2 \tau \eta^\star + U \eta^\star - 2\tau -U}{u - U + D + 2\tau} - \eta^\star \ln \left( U - u - D - 2\tau \right) \right]_{\mbf{l}}^y \\
& \geq \eta^\star \ln \left( U - y - D - 2\tau \right) - \eta^\star \ln \left( U - \mbf{l} - D - 2\tau \right) - \frac{\eta^\star D + \eta^\star 2 \tau}{\frac{U}{\eta^\star} - U + D}, \quad \forall y \in [L, U].\\
t(y) & \geq \frac{\eta^\star D + \eta^\star 2 \tau}{U - y - D - 2\tau} -  \int_{\mbf{l}}^y \frac{\eta^\star D + \eta^\star 2 \tau}{(U - u - D- 2\tau)^2} du \\
& \geq \frac{\eta^\star D + \eta^\star 2 \tau}{U - y - D - 2\tau} - \left[ \frac{- \eta^\star D - \eta^\star 2 \tau}{u - U + D + 2\tau} \right]_{\mbf{l}}^y \\
& \geq \frac{\eta^\star D + \eta^\star 2 \tau}{\mbf{l} - U + D + 2\tau}.
\end{align*}
\endgroup
By the definition of the adversaries $\mathcal{G}_y$ and $\mathcal{A}_y$, we have that $t(L) \leq 1 - s(L)$, and thus $s(L) \leq 1 - t(L)$.  We combine this inequality with the above bounds to give the following condition for any\\ $\eta^\star$-competitive online algorithm:
\[
\eta^\star \ln \left( U - L - D - 2\tau \right) - \eta^\star \ln \left( U - \mbf{l} - D - 2\tau \right) - \frac{\eta^\star D + \eta^\star 2 \tau}{\frac{U}{\eta^\star} - U + D} \leq s(L) \leq 1 - t(L) \leq 1 - \frac{\eta^\star D + \eta^\star 2 \tau}{\mbf{l} - U + D + 2\tau}.
\]
The optimal $\eta^\star$ is obtained when the above inequality is binding, and is given by the solution to the following transcendental equation (after substituting $\frac{U}{\eta} - 2\tau$ for $\mbf{l}$):
\begin{align}
\ln \left[ \frac{U - L - D - 2\tau}{U - \nicefrac{U}{\eta^\star} - D} \right] = \frac{1}{\eta^\star}. \label{eq:etastar}
\end{align}
The solution to the above is given by the following (note that $W(\cdot)$ denotes the Lambert W function):
\begin{align}
    \eta^\star \to \left[ W \left( \frac{(D+L-U+2\tau) \exp \left(\frac{D-U}{U} \right) }{ U } \right) + \frac{U - D}{U} \right]^{-1}
\end{align}
\end{proof}

\section{Proofs for \autoref{sec:clip} (A Learning-augmented Algorithm)} \label{apx:clip}
\subsection{Proof of \autoref{thm:constrobCLIP}} \label{apx:constrobCLIP}

In this section, we prove Theorem~\ref{thm:constrobCLIP}, which states that \CarbonClipper is $(1+\varepsilon)$-consistent for any $\varepsilon \in (0, \eta - 1]$, and $O(\log n) \gamma^{(\varepsilon)}$-robust, where $\gamma^{(\varepsilon)}$ is the solution to \eqref{eq:gamma}.

\begin{proof}
We show the result by separately considering consistency (the competitive ratio when advice is correct) and robustness (the competitive ratio when advice is not correct) in turn.

Recall that the black-box advice \ADV is denoted by a decision $\mbf{a}_t$ at each time $t$.
Throughout the proof, we use shorthand notation $\CLIP_t$ to denote the expected cost of $\CarbonClipper$ up to time $t$, and $\ADV_t$ to denote the cost of $\ADV$ up to time $t$.
We start by proving \sref{Lemma}{lem:clipconst} to show that \CarbonClipper is $(1+\varepsilon)$-consistent.

\paragraph{Proof of \sref{Lemma}{lem:clipconst}.}

First, we note that the constrained optimization enforces that the expected cost of \CarbonClipper so far plus a term that forecasts the mandatory allocation is always within $(1 + \varepsilon)$ of the advice.  There is always a feasible $\mbf{p}_t$ that satisfies the constraint, because setting $\mathbf{k}_t = \Phi \mbf{a}_t$ is always within the feasible set.  Formally, if time step $j \in [T]$ denotes the time step marking the start of the mandatory allocation, the constraint in \eqref{eq:const-constraint} holds for every time step $t \in [j]$.

Thus, to show $(1+\varepsilon)$ consistency, we must resolve the cost of any actions during the \textit{mandatory allocation} and show that the final expected cost of \CarbonClipper is upper bounded by $(1+\varepsilon) \ADV$.

Let $\mathcal{I} \in \Omega$ be an arbitrary valid \CSWM sequence.
If the mandatory allocation begins at time step $j < T$, both $\CarbonClipper$ and $\ADV$ must greedily satisfy the constraint during the last $m$ time steps $[j, T]$.  This is assumed to be feasible, and the cost due to switching in and out of \texttt{ON / OFF} states is assumed to be negligible as long as $m$ is sufficiently large.

Let $(1 - z^{(j-1)})$ denote the remaining deadline constraint that must be satisfied by $\CarbonClipper$ in expectation at these final $m$ time steps, and let $(1 - A^{(j-1)})$ denote the remaining deadline constraint to be satisfied by $\ADV$.
We consider two cases, corresponding to the cases where $\CarbonClipper$ has \textit{underprovisioned} with respect to \ADV (i.e., it has completed less of the deadline constraint in expectation) and \textit{overprovisioned} (i.e., completed more of the deadline constraint), respectively.

\begin{center} \textbf{Case 1: $\CarbonClipper(\mathcal{I})$ has ``underprovisioned'' ($(1 - z^{(j-1)}) > (1 - A^{(j-1)})$). \ } \end{center} 
In this case, \CarbonClipper must satisfy \textit{more} of the deadline constraint (in expectation) during the mandatory allocation compared to \ADV.
From the previous time step, we know that the following constraint holds: 
\begin{align*}
&\CLIP_{j-1} + \mathbb{W}^{1} (  \mbf{p}_{j-1}, \mbf{a}_{j-1} ) + \tau c(\mbf{a}_{j-1}) + (1 - A^{(j-1)} ) L + (A^{(j-1)} - z^{(j-1)})U \le\\
& \hspace{20em} (1+ \varepsilon) \left[ \ADV_{j-1} + \tau c(\mbf{a}_{j-1}) + (1 - A^{(j-1)} ) L \right].
\end{align*}

Let $\{ \mbf{p}_t \}_{t \in [j, T]}$ and $\{ \mbf{a}_t \}_{t \in [j, T]}$ denote the decisions made by $\CarbonClipper$ and $\ADV$ during the mandatory allocation, respectively.  Conditioned on the fact that $\CarbonClipper$ has completed $z^{(j-1)}$ fraction of the deadline constraint in expectation, we have that $\mathbb{E} \left[ \sum_{t=j}^T c(\mbf{p}_t) \right] = (1 - z^{(j-1)})$ and $\sum_{t=j}^T c(\mbf{a}_t)  = (1 - A^{(j-1)})$.

Considering $\{ f_t(\cdot) \}_{t \in [j, T]}$, by assumption we have a lower bound based on $L$, namely $\sum_{t=j}^T f_t(\mbf{a}_t) \geq L \sum_{t=j}^T c(\mbf{a}_t)$.  
For the service costs that \CarbonClipper must incur \textit{over and above} what \ADV incurs, we have a upper bound based on $U$, so $\mathbb{E} \left[ \sum_{t=j}^T f_t(\mbf{p}_t) \right] \leq \sum_{t=j}^T f_t(\mbf{a}_t) + U (\sum_{t=j}^T c(\mbf{p}_t)- \sum_{t=j}^T c(\mbf{a}_t))$

{\color{blue}
Note that the worst case for $\CarbonClipper$ occurs when $\sum_{t=j}^T f_t(\mbf{a}_t)$ exactly matches this lower bound, i.e., $\sum_{t=j}^T f_t(\mbf{a}_t) = L \sum_{t=j}^T c(\mbf{a}_t)$, as $\ADV$ is able to satisfy the rest of the deadline constraint at the best possible marginal price.  Furthermore, note that if $\CarbonClipper$ and $\ADV$ are in different points of the metric at time $j$, the term $\mathbb{W}^{1} ( \mbf{p}_{j-1}, \mbf{a}_{j-1} )$ in the left-hand-side of the constraint allows $\CarbonClipper$ to ``move back'' and follow \ADV just before the mandatory allocation begins, thus leveraging the same cost functions as \ADV.
}
By the constraint in the previous time step, we have the following:
\begin{align}
\CLIP_{j-1} + \mathbb{W}^{1}  (  \mbf{p}_{j-1}, \mbf{a}_{j-1} ) + & \tau c(\mbf{a}_{j-1}) + (1 - A^{(j-1)} ) L + (A^{(j-1)} - z^{(j-1)})U& \\
& \leq (1+\varepsilon) [ \ADV_{j-1} + \tau c(\mbf{a}_{j-1}) + (1 - A^{(j-1)} ) L  ], \\
\CLIP_{j-1} + \tau c(\mbf{a}_{j-1}) + & L  \sum_{t=j}^T c(\mbf{a}_t) + U \left( \sum_{t=j}^T c(\mbf{p}_t)- \sum_{t=j}^T c(\mbf{a}_t) \right)  \\
& \leq (1+\varepsilon) \left[ \ADV_{j-1} + \tau c(\mbf{a}_{j-1}) + L \sum_{t=j}^T c(\mbf{a}_t) \right] \leq (1+\varepsilon) \ADV(\mathcal{I}).  \\
\mathbb{E} \left[ \CarbonClipper(\mathcal{I}) \right] &\leq (1+\varepsilon) \ADV(\mathcal{I}). \label{eq:case1consistency}
\end{align}

\begin{center} \textbf{Case 2: $\CarbonClipper(\mathcal{I})$ has ``overprovisioned'' ($(1 - z^{(j-1)}) \le (1 - A^{(j-1)})$). \ } \end{center}
In this case, \CarbonClipper must satisfy \textit{less} of the deadline constraint (in expectation) during the mandatory allocation compared to \ADV.

\noindent From the previous time step, we know that the following constraint holds: 
\begin{align*}
\CLIP_{j-1} + \mathbb{W}^{1} ( \mbf{p}_{j-1}, \mbf{a}_{j-1} ) + \tau c(\mbf{a}_{j-1}) + (1 - z^{(j-1)} ) L  \le (1+ \varepsilon) \left[ \ADV_{j-1} + \tau c(\mbf{a}_{j-1}) + (1 - A^{(j-1)} ) L \right].
\end{align*}

Let $\{ \mbf{p}_t \}_{t \in [j, T]}$ and $\{ \mbf{a}_t \}_{t \in [j, T]}$ denote the decisions made by \CarbonClipper and $\ADV$ during the mandatory allocation, respectively. As previously, we have that $\mathbb{E} \left[ \sum_{t=j}^T c(\mbf{p}_t) \right] = (1 - z^{(j-1)})$ and $\sum_{t=j}^T c(\mbf{a}_t)  = (1 - A^{(j-1)})$.

Considering $\{ f_t(\cdot) \}_{t \in [j, T]}$, we have a lower bound on $\sum_{t=j}^T f_t(\cdot)$ based on $L$, namely $\sum_{t=j}^T f_t(\mbf{p}_t) \geq L \sum_{t=j}^T c(\mbf{p}_t)$ and $\sum_{t=j}^T f_t(\mbf{a}_t) \geq L \sum_{t=j}^T c(\mbf{a}_t)$.
Since \CarbonClipper has ``overprovisioned'', we know $\mathbb{E} \left[ \sum_{t=j}^T c(\mbf{p}_t) \right] \leq \sum_{t=j}^T c(\mbf{a}_t)$, and thus it follows that $\ex\left[ \sum_{t=j}^T f_t(\mbf{p}_t) \right] \leq \sum_{t=j}^T f_t(\mbf{a}_t)$.

By the constraint in the previous time step, we have:
\begin{align*}
\frac{\CLIP_{j-1} + \mathbb{W}^{1} ( \mbf{p}_{j-1},  \mbf{a}_{j-1} ) + \tau c(\mbf{a}_{j-1}) + (1 - z^{(j-1)}) L}{\ADV_{j-1} + \tau c(\mbf{a}_{j-1}) + (1 - A^{(j-1)} ) L } = \hspace{10em} &\\
\frac{\CLIP_{j-1} + \mathbb{W}^{1} ( \mbf{p}_{j-1},  \mbf{a}_{j-1} ) + \tau c(\mbf{a}_{j-1}) + L \sum_{t=j}^T c(\mbf{p}_t)}{\ADV_{j-1} +\tau c(\mbf{a}_{j-1}) + L \sum_{t=j}^T c(\mbf{a}_t)} &\leq (1 + \varepsilon).
\end{align*}
\noindent Let $y = \ex \left[ \sum_{t=j}^T f_t(\mbf{p}_t) \right] - L \sum_{t=j}^T c(\mbf{p}_t)$, and let $y' = \sum_{t=j}^T f_t(\mbf{a}_t) - L \sum_{t=j}^T c(\mbf{a}_t)$.  By definition, $y \geq 0$ and $y' \geq 0$.
Note that by resolving the mandatory allocation and by definition, we have that the final expected cost $\mathbb{E} \left[ \CarbonClipper(\mathcal{I}) \right] \le \CLIP_{j-1} + \mathbb{W}^1 ( \mbf{p}_{j-1},  \mbf{a}_{j-1} ) + \tau c(\mbf{a}_{j-1}) + L \sum_{t=j}^T c(\mbf{p}_t) + y$, and $\ADV(\mathcal{I}) \ge \ADV_{j-1} + \tau c(\mbf{a}_{j-1}) + L \sum_{t=j}^T c(\mbf{a}_t) + y'$.

Furthermore, since \CarbonClipper has ``overprovisioned'' and by the linearity of the cost functions $f_t( \cdot )$, we have that $y \leq y'$.
Combined with the constraint from the previous time step, we have the following bound:
\begin{align}
\frac{\mathbb{E} \left[ \CarbonClipper(\mathcal{I}) \right]}{\ADV(\mathcal{I})} \leq \frac{\CLIP_{j-1} + \mathbb{W}^1 ( \mbf{p}_{j-1},  \mbf{a}_{j-1} ) + \tau c(\mbf{a}_{j-1}) + L \sum_{t=j}^T c(\mbf{p}_t) + y}{\ADV_{j-1} + \tau c(\mbf{a}_{j-1}) + L \sum_{t=j}^T c(\mbf{a}_t) + y'} & \\
\leq \frac{\CLIP_{j-1} + \mathbb{W}^1 ( \mbf{p}_{j-1},  \mbf{a}_{j-1} ) + \tau c(\mbf{a}_{j-1}) + L \sum_{t=j}^T c(\mbf{p}_t)}{\ADV_{j-1} + \tau c(\mbf{a}_{j-1}) + L \sum_{t=j}^T c(\mbf{a}_t)}  &\leq (1 + \varepsilon). \label{eq:case2consistency}
\end{align}

Thus, by combining the bounds in each of the above two cases, the result follows, and we conclude that $\CarbonClipper$ is $(1+\varepsilon)$-consistent with accurate advice. \hfill \qed

\bigskip

\noindent Having proved consistency, we next prove \sref{Lemma}{lem:clipconst} to show that \CarbonClipper is $O(\log n) \gamma^{(\varepsilon)}$-robust.

\paragraph{Proof of \sref{Lemma}{lem:cliprob}.}
Let $\varepsilon \in (0, \eta^\star -1]$ be the target consistency (recalling that \CarbonClipper is $(1+\varepsilon)$-consistent), and let $\mathcal{I} \in \Omega$ denote an arbitrary valid \CSWM sequence.
To prove the robustness of \CarbonClipper, we consider two ``bad cases'' for the advice $\ADV(\mathcal{I})$, and show that in the worst-case, \CarbonClipper's competitive ratio is bounded by $O(\log n) \gamma^{(\varepsilon)}$.

\smallskip

\begin{center} \textbf{Case 1: $\ADV(\mathcal{I})$ is ``inactive''. \ } \end{center}
Consider the case where \ADV accepts nothing during the main sequence and instead satisfies the entire deadline constraint at the end of the sequence immediately before the mandatory allocation, incurring the worst possible movement \& switching cost in the process.  In the worst-case, this gives that $\ADV(\mathcal{I}) = U + D + \tau$.

{\color{blue}
Based on the consistency constraint (and using the fact that \CarbonClipper will always be ``overprovisioning'' with respect to \ADV throughout the main sequence), we can derive an upper bound on the constraint satisfaction that \CarbonClipper is ``allowed to accept'' from the robust pseudo-cost minimization.  Recall the following constraint:
}
\begin{align*}
\CLIP_{t-1} + f_t(\mbf{p}_t) + \mathbb{W}^1 ( \mbf{p}_t, \mbf{p}_{t-1} ) + \mathbb{W}^1 ( \mbf{p}_t, \mbf{a}_t ) + \tau c(\mbf{a}_t)& + (1 - z^{(t-1)} - c( \mbf{p}_t) )L\\
&\le (1+ \varepsilon) \left[ \ADV_t + \tau c(\mbf{a}_t) + (1 - A^{(t)} ) L \right].
\end{align*}

\begin{pro}\label{pro:zpcm}
Under ``inactive'' advice, $z_{\PCM}$ is an upper bound on the amount that \emph{\CarbonClipper} can accept from the pseudo-cost minimization in expectation without violating $(1+\varepsilon)$-consistency, and is defined as:
\[
z_{\PCM} = \gamma^{(\varepsilon)} \ln \left[ \frac{U - L - D - 2\tau}{ U - \nicefrac{U}{\gamma^{(\varepsilon)}} - D - 2\tau } \right]
\]
\end{pro}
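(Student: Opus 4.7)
The plan is to specialize the consistency constraint \eqref{eq:const-constraint} to the ``inactive'' advice regime and then upper-bound the utilization reachable by the pseudo-cost minimization subject to this simplified constraint. Under inactive advice we have $\mbf{a}_t = \delta_s$, $c(\mbf{a}_t) = 0$, $A^{(t)} = 0$, and $\ADV_t = 0$ for every $t$ in the main sequence, so every $\ADV$-dependent term on both sides drops out except the standalone $(1+\varepsilon)L$, yielding
\[
\CLIP_t + \mathbb{W}^1(\mbf{p}_t, \delta_s) + (1 - z^{(t)})L \le (1+\varepsilon)L,
\]
or equivalently $\CLIP_t + \mathbb{W}^1(\mbf{p}_t, \delta_s) \le (\varepsilon + z^{(t)})L$ for every $t$ prior to mandatory allocation.

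Next, I would lower-bound $\CLIP_t + \mathbb{W}^1(\mbf{p}_t, \delta_s)$ over the worst-case service-cost sequence, using the acceptance threshold enforced by the pseudo-cost minimization. Mirroring the derivation in the proof of \autoref{thm:etaCompPCM} (see \eqref{align:mingrad} and \eqref{align:bestservicecost}), the minimization \eqref{eq:pseudocostCLIP} only absorbs additional utilization $dz$ when the marginal service cost, plus any required movement and temporal switching overhead, is no larger than $\psi^{(\varepsilon)}(\rho^{(t)})$. An adversary seeking to slow \CarbonClipper's progress will feed it marginal service costs exactly at this threshold; in this regime the accumulated cost satisfies a bound of the form
\[
\CLIP_t + \mathbb{W}^1(\mbf{p}_t, \delta_s) \ \ge \ \int_{0}^{z^{(t)}} \psi^{(\varepsilon)}(u)\,du \ - \ (D + 2\tau)\, z^{(t)},
\]
where the offset terms absorb the $D$ (movement out to a low-cost point and back to $s$) and $2\tau$ (switching $\ON$ then $\OFF$ within a single point) already baked into the definition of $\psi^{(\varepsilon)}$. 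Since the robust pseudo-utilization $\rho^{(t)}$ coincides with $z^{(t)}$ whenever the unconstrained minimization \eqref{eq:robustUnconstrained} is at least as aggressive as the constrained one (which is the case under inactive advice, since the consistency constraint is what forces the algorithm to move back to $\delta_s$), the integral is over the full $[0, z^{(t)}]$ range.

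Combining the two bounds produces the necessary condition
\[
\int_{0}^{z^{(t)}} \psi^{(\varepsilon)}(u)\,du \ - \ (D + 2\tau)\,z^{(t)} \ \le \ (\varepsilon + z^{(t)})L
\]
on any $z^{(t)}$ that the pseudo-cost minimization can reach without violating $(1+\varepsilon)$-consistency. Substituting the closed form of $\psi^{(\varepsilon)}$, carrying out the integration, and invoking the defining equation \eqref{eq:gamma} for $\gamma^{(\varepsilon)}$, I expect the binding case to collapse to
\[
\exp\!\left(\frac{z^{(t)}}{\gamma^{(\varepsilon)}}\right) = \frac{U - L - D - 2\tau}{U - \nicefrac{U}{\gamma^{(\varepsilon)}} - D - 2\tau},
\]
which upon taking logarithms is exactly $z_{\PCM}$. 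The main obstacle will be the bookkeeping in the second step: tracking the movement cost $\mathbb{W}^1(\mbf{p}_t, \delta_s)$ and the temporal switching contributions so that precisely $D+2\tau$ (rather than $D+\tau$, say) appears both as the linear offset in the pseudo-cost lower bound and inside the denominator of the closed-form logarithm. Verifying that $\gamma^{(\varepsilon)}$ as defined by \eqref{eq:gamma} is exactly the value that makes the binding transcendental equation equivalent to the closed form for $z_{\PCM}$ is what ties the argument together.
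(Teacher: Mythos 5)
Your overall strategy is the same as the paper's: specialize the consistency constraint \eqref{eq:const-constraint} to inactive advice (so that $\mbf{a}_t = \delta_s$, $A^{(t)} = 0$, $\ADV_t = 0$ and the right-hand side collapses to $(1+\varepsilon)L$), express the worst-case accumulated cost of the pseudo-cost-driven decisions via $\int_0^{z}\psi^{(\varepsilon)}(u)\,du$, find where the constraint binds, and invoke the defining equation \eqref{eq:gamma} for $\gamma^{(\varepsilon)}$ to identify the binding utilization with $z_{\PCM}$. That skeleton is exactly what the paper does.

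The gap is in the cost accounting that you yourself flag as "the main obstacle," and it is not just bookkeeping of constants — the sign of the offset is wrong. The paper takes the worst case to be $\CLIP_{t-1} = \int_0^{z^{(t-1)}}\psi^{(\varepsilon)}(u)\,du + \tau z^{(t-1)}$ (the pseudo-cost is an \emph{upper} bound on service-plus-movement per accepted increment, and the additional $+\tau z$ pays for switching \OFF afterwards, exactly as in \sref{Lemma}{lem:pcm-alg-ub}), and adds the hedging term $\mathbb{W}^1(\mbf{p}_t,\mbf{a}_t) = D z^{(t-1)}$ for mass sitting at distance $D$ from $\delta_s$. So the binding condition is $\int_0^{z}\psi^{(\varepsilon)}(u)\,du + (\tau + D)z + (1-z)L = (1+\varepsilon)L$. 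Your proposed lower bound $\int_0^{z}\psi^{(\varepsilon)}(u)\,du - (D+2\tau)z$ subtracts the movement/switching budget instead of adding the realized worst-case movement and the $\mathbb{W}^1(\mbf{p}_t,\mbf{a}_t)$ charge. While $\int_0^{z}\psi^{(\varepsilon)}(u)\,du - (D+\tau)z$ is indeed a valid lower bound on the service cost of threshold-priced acceptances (the adversary need not force the algorithm to actually pay the movement budget baked into $\psi^{(\varepsilon)}$), using a \emph{smaller} left-hand side makes the constraint bind at a \emph{larger} utilization, so your argument would only establish an upper bound strictly weaker than $z_{\PCM}$ and would not reduce to $\exp(z/\gamma^{(\varepsilon)}) = \frac{U-L-D-2\tau}{U-\nicefrac{U}{\gamma^{(\varepsilon)}}-D-2\tau}$ under \eqref{eq:gamma}. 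To close the gap you must charge the robust decisions their full worst-case cost, including the prepaid $\mathbb{W}^1(\mbf{p}_t,\mbf{a}_t) = Dz$ term, rather than crediting back the unused portion of the pseudo-cost's movement allowance.
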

\begin{proof}
Consider an arbitrary time step $t$.  If \CarbonClipper is \textit{not} allowed to make a decision that makes progress towards the constraint (i.e., it cannot accept anything more from the robust pseudo-cost minimization), we have that $c(\mbf{p}_t)$ is restricted to be $0$. Recall that $\mbf{a}_t = \delta_s$ (where $\delta_s$ is the Dirac measure at the starting \OFF state) for any time steps before the mandatory allocation, because the advice is assumed to be inactive.
By definition, since any cost functions accepted so far in expectation (i.e., in $\CLIP_{t-1}$) can be attributed to the robust pseudo-cost minimization, we have the following in the worst-case, using the same techniques used in the proof of \autoref{thm:etaCompPCM}:
\begin{align*}
\CLIP_{t-1} = \int_0^{z^{(t-1)}} \psi^{(\varepsilon)} (u) du + \tau z^{(t-1)}.
\end{align*}

Combining the above with the left-hand side of the consistency constraint, we have the following. Observe that $\mbf{p}_t$ is in an \OFF state, $\mbf{a}_t = \delta_s$, and any prior movement costs to make progress towards the constraint can be absorbed into the pseudo-cost $\psi$ since $\norm{\mbf{k}_{t} - \mbf{k}_{t-1}} \geq \mathbb{W}^1 ( \mbf{p}_t, \mbf{p}_{t-1} )$.  
{\color{blue}
Furthermore, in the worst-case, $\mathbb{W}^1 ( \mbf{p}_t, \mbf{a}_t ) = D z^{(t-1)}$ (i.e., the pseudo-cost chooses to move to a point in the metric that is a distance $D$ away from \ADV). 
}
\begin{align*}
\CLIP_{t-1} + \mathbb{W}^1 ( \mbf{p}_t, \mbf{a}_t ) + (1 - z^{(t-1)})L &= \int_0^{z^{(t-1)}} \psi^{(\varepsilon)} (u) du + \tau z^{(t-1)} + D z^{(t-1)} + (1 - z^{(t-1)})L.
\end{align*}
As stated, let $z^{(t-1)} = z_{\PCM}$.
Then by properties of the pseudo-cost,  
\begin{align*}
\CLIP_{t-1} + \mathbb{W}^1 ( \mbf{p}_t, \mbf{a}_t ) + (1 - z_{\PCM})L &= \int_0^{z_{\PCM}} \psi^{(\varepsilon)} (u) du + \tau z_{\PCM} + (1 - z_{\PCM})U + \\
& \hspace{10em} (1 - z_{\PCM})L + Dz_{\PCM} - (1 - z_{\PCM})U, \\
&= \gamma^{(\varepsilon)} \left[ \psi^{(\varepsilon)} (z_{\PCM}) - D \right] + (1 - z_{\PCM})L + Dz_{\PCM} - (1 - z_{\PCM})U, \\
&= \gamma^{(\varepsilon)} L + \left(L - U \right) \left(1 - \gamma^{(\varepsilon)} \ln \left[ \frac{U - L - D - 2\tau}{ U - \nicefrac{U}{\gamma^{(\varepsilon)}} - D - 2\tau } \right] \right) + Dz_{\PCM}, \\
&= \gamma^{(\varepsilon)} L + L - U + \left(U-L+D\right) \gamma^{(\varepsilon)} \ln \left[ \frac{U - L - D - 2\tau}{ U - \nicefrac{U}{\gamma^{(\varepsilon)}} - D - 2\tau } \right].
\end{align*}

\noindent Substituting for the definition of $\gamma^{(\varepsilon)}$, we obtain:
\begin{align*}
\CLIP_{t-1} + \mathbb{W}^1 ( \mbf{p}_t, \mbf{a}_t ) + (1 - z_{\PCM})L &= \gamma^{(\varepsilon)} L + L - U + \left(U-L+D\right) \gamma^{(\varepsilon)} \ln \left[ \frac{U - L - D - 2\tau}{ U - \nicefrac{U}{\gamma^{(\varepsilon)}} - D - 2\tau } \right], \\
&= \left[ \varepsilon L + U - \gamma^{(\varepsilon)} (U-L+D) \ln \left[ \frac{U - L - D - 2\tau}{ U - \nicefrac{U}{\gamma^{(\varepsilon)}} - D - 2\tau } \right] \right] +\\
& \hspace{6em} L - U + \left(U-L+D\right) \gamma^{(\varepsilon)} \ln \left[ \frac{U - L - D - 2\tau}{ U - \nicefrac{U}{\gamma^{(\varepsilon)}} - D - 2\tau } \right], \\
&= \varepsilon L + L = (1+\varepsilon) L.
\end{align*}

\noindent This completes the proposition, since $(1+\varepsilon) L $ is exactly the right-hand side of the consistency constraint (note that $(1+\varepsilon) \left[ \ADV_t + \tau c(\mbf{a}_t) + (1-A_t) L \right] = (1+\varepsilon) L$).
\end{proof}

If \CarbonClipper is constrained to use at most $z_{\PCM}$ of its utilization to be robust, the remaining $(1-z_{\PCM})$ utilization must be used for the mandatory allocation and/or to follow $\ADV$.  
{\color{blue}
Note that if \CarbonClipper has moved away from \ADV's point in the metric, and \ADV turns out to be ``inactive'' bad advice that incurs sub-optimal service cost late in the sequence, the consistency constraint will become non-binding and \CarbonClipper will not have to move back to follow \ADV in the metric.
}
Thus, we have the following worst-case competitive ratio for \CarbonClipper, specifically for Case 1, where we assume $\OPT(\mathcal{I}) \to \nicefrac{\psi^{(\varepsilon)} (z_{\PCM})}{O(\log n)} = \nicefrac{L}{O(\log n)}$, as in, e.g., \sref{Lemma}{lem:pcm-opt-lb}:
\begin{align}
\frac{\mathbb{E} \left[ \CarbonClipper(\mathcal{I}) \right]}{\OPT(\mathcal{I})} &\le \frac{ \int_0^{z_{\PCM}} \psi^{(\varepsilon)}(u) du + \tau z_{\PCM} + (1 - z_{\PCM})U}{\nicefrac{L}{O(\log n)}} \label{eq:case1robustness} \\
&\le \frac{ \int_0^{z_{\PCM}} \psi^{(\varepsilon)}(u) du + \tau z_{\PCM} + (1 - z_{\PCM}) D + (1 - z_{\PCM})U}{\nicefrac{L}{O(\log n)}}
\end{align}
By the definition of $\psi^{(\varepsilon)}(\cdot)$, we have the following:
\begin{align*}
\frac{\mathbb{E} \left[ \CarbonClipper(\mathcal{I}) \right]}{\OPT(\mathcal{I})} &\le \frac{ \int_0^{z_{\PCM}} \psi^{(\varepsilon)}(u) du + \tau z_{\PCM} + (1 - z_{\PCM}) D + (1 - z_{\PCM})U}{\nicefrac{L}{O(\log n)}}\\
&\le \frac{ \gamma^{(\varepsilon)} \left[ \psi^{(\varepsilon)}(z_{\PCM}) - 2D \right]}{\nicefrac{L}{O(\log n)}} \le \frac{ \gamma^{(\varepsilon)} \left[ L + 2D - 2D \right]}{\nicefrac{L}{O(\log n)}} \ \leq \ O(\log n) \gamma^{(\varepsilon)}.
\end{align*}

\begin{center} \textbf{Case 2: $\ADV(\mathcal{I})$ is ``overactive''. \ } \end{center}
We now consider the case where \ADV incurs bad service cost due to ``accepting'' cost functions which it ``should not'' (i.e. $\ADV(\mathcal{I}) \gg \OPT(\mathcal{I})$).
Let $\ADV(\mathcal{I}) = \mathbb{V} \gg \OPT(\mathcal{I})$ (i.e. the final total cost of \ADV is $\mathbb{V}$ for some $\mathbb{V} \in [L, U]$, and $\mathbb{V}$ is much greater than the optimal solution).

This is without loss of generality, since we can assume that $\mathbb{V}$ is the ``best marginal service and movement cost'' incurred by \ADV at a particular time step and the consistency ratio changes strictly in favor of \ADV.
Based on the consistency constraint, we can derive a lower bound on the amount that \CarbonClipper \textit{must} accept from \ADV in expectation to stay $(1+\varepsilon)$-consistent.
To do this, we consider the following sub-cases:

\noindent $\bullet$ \textbf{Sub-case 2.1}: Let $\mathbb{V} \geq \frac{U+D}{1+\varepsilon}$.

In this sub-case, \CarbonClipper can fully ignore the advice, because the following consistency constraint is never binding (note that $\ADV_t \geq \frac{U+D}{1+\varepsilon} A^{(t)}$):
\begin{align*}
 \CLIP_{t-1} + f_t(\mbf{p}_t) + \mathbb{W}^1 ( \mbf{p}_t, \mbf{p}_{t-1}) + \mathbb{W}^1 ( \mbf{p}_t, \mbf{a}_t )& + \tau c(\mbf{a}_t) + (1 - A^{(t)} ) L + (A^{(t)} - z^{(t-1)} - c(\mbf{p}_t))U\\
 & \hspace{5em} \le (1+ \varepsilon) \left[ \ADV_t + \tau c(\mbf{a}_t) + (1 - A^{(t)} ) L \right], \\
 (D + \tau) c(\mbf{a}_t) + (1 - A^{(t)} ) L + (A^{(t)})U  &\le (1+ \varepsilon) \left[ \mathbb{V} c(\mbf{a}_t) + \tau c(\mbf{a}_t) + (1 - A^{(t)} ) L\right],  \\
 (D + \tau) A^{(t)} + (1 - A^{(t)} ) L + U A^{(t)} &\le (1+ \varepsilon) \left[ \frac{U+D}{1+\varepsilon} A^{(t)} + \tau A^{(t)} + (1 - A^{(t)} ) L \right]
\end{align*}

\noindent $\bullet$ \textbf{Sub-case 2.2}: Let $\mathbb{V} \in ( L, \ \frac{U+D}{1+\varepsilon} )$.

In this case, in order to remain $(1 + \varepsilon)$-consistent, \CarbonClipper must follow \ADV and incur some ``bad cost'', denoted by $\mathbb{V}$.
We derive a lower bound that describes the minimum amount that \CarbonClipper must follow \ADV in order to always satisfy the consistency constraint.

\begin{pro}\label{pro:zadv}
Under ``overactive'' advice, $z_{\ADV}$ is a lower bound on the amount that \emph{\CarbonClipper} must accept from the advice in order to always satisfy the consistency constraint, and is defined as:
\[
z_\ADV \geq 1 - \frac{ \mathbb{V} \varepsilon }{U + D - \mathbb{V}}.
\]
\end{pro}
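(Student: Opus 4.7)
The plan is to derive a lower bound on $z_{\ADV}$ by setting up a worst-case decomposition of \CarbonClipper's total expected cost and then imposing the $(1+\varepsilon)$-consistency requirement. First, I would leverage the reasoning immediately preceding the proposition to assume (without loss of generality) that \ADV has uniform per-unit marginal cost $\mathbb{V}$ throughout the sequence, so that \ADV's total cost $\ADV(\mathcal{I}) = \mathbb{V}$ decomposes cleanly across increments of deadline-constraint satisfaction.

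Next, I would suppose that by the end of the main sequence \CarbonClipper has satisfied a fraction $z_{\ADV}$ of the deadline constraint by following \ADV, and bound \CarbonClipper's total expected cost as the sum of two parts: (i) the cost incurred while following \ADV, which is at most $\mathbb{V} z_{\ADV}$ since each unit of following \ADV inherits \ADV's per-unit cost; and (ii) the cost incurred during the mandatory allocation for the remaining $(1 - z_{\ADV})$ fraction, which in the worst case is bounded by $(U + D)(1 - z_{\ADV})$---a per-unit worst-case service cost $U$ combined with a per-unit normalized worst-case spatial movement cost $D$ (using the normalization of $D$ by throughput coefficients in \sref{Assumption}{ass:switching-cost}, so that movement costs scale with incremental constraint satisfaction rather than as one-time overheads).

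Imposing $(1 + \varepsilon)$-consistency---namely $\mathbb{E}[\CarbonClipper(\mathcal{I})] \leq (1+\varepsilon)\mathbb{V}$---then yields the inequality
\[
\mathbb{V} z_{\ADV} + (U + D)(1 - z_{\ADV}) \leq (1+\varepsilon)\mathbb{V},
\]
which rearranges (using $U + D - \mathbb{V} > 0$ by the Sub-case 2.2 assumption $\mathbb{V} < \nicefrac{(U + D)}{(1+\varepsilon)}$) to
\[
z_{\ADV} \geq \frac{U + D - (1+\varepsilon)\mathbb{V}}{U + D - \mathbb{V}} = 1 - \frac{\varepsilon \mathbb{V}}{U + D - \mathbb{V}},
\]
matching the stated bound. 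This structurally mirrors the derivation for $z_{\PCM}$ in \sref{Proposition}{pro:zpcm}, but with the roles of the robust pseudo-cost and the consistency constraint reversed.

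The hardest part will be rigorously justifying the worst-case per-unit coefficient of $(U + D)$ for the mandatory allocation portion: one must argue that \CarbonClipper can indeed be forced to pay both worst-case service cost \emph{and} per-unit worst-case movement cost simultaneously for every increment of constraint satisfaction during mandatory allocation, which relies on the normalization of $D$ by throughput so that spatial movement distributes across increments of $c(\mbf{p}_t)$. A secondary subtlety is the treatment of the hedging terms $\mathbb{W}^1(\mbf{p}_t, \mbf{a}_t)$ and $\tau c(\mbf{a}_t)$ in \eqref{eq:const-constraint}: these charge \CarbonClipper in advance for potential movement to follow \ADV and for end-of-sequence switching, and one must verify that they are either cleanly absorbed into the $(U+D)(1-z_{\ADV})$ term or dominated by it, so that no double-counting corrupts the clean closed-form bound.
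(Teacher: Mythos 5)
Your final algebra matches the paper's exactly, and the quantities you identify --- a per-unit cost of $\mathbb{V}$ for the fraction accepted while following \ADV, and a per-unit hedge of $U+D$ for the shortfall $1-z_{\ADV}$ --- are precisely the ones the paper uses. The route differs in granularity: the paper does not reason about realized end-of-horizon costs at all, but instead rearranges the consistency constraint \eqref{eq:const-constraint} at a \emph{single} time step to obtain a per-step lower bound $c(\mbf{p}_t) \geq c(\mbf{a}_t) - \tfrac{\varepsilon[\mathbb{V}c(\mbf{a}_t)+L-Lc(\mbf{a}_t)]}{U+D-\mathbb{V}}$ (using $\mathbb{W}^1(\mbf{p}_t,\mbf{a}_t)\leq D(A^{(t)}-c(\mbf{p}_t))$ for the $D$ term and the $\max\{A^{(t)}-z^{(t-1)}-c(\mbf{p}),0\}(U-L)$ term for the $U$ term), then recurses over time steps and sets $A^{(t)}=1$. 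Your aggregate inequality is exactly what that constraint reduces to at the endpoint, so the two derivations are computationally the same; the paper's version additionally yields the intermediate statement $z^{(t)} \geq A^{(t)} - \tfrac{\varepsilon[\mathbb{V}A^{(t)}+L-LA^{(t)}]}{U+D-\mathbb{V}}$ for every $t$, which is what actually certifies that the constraint remains satisfiable throughout the sequence rather than only at the end.

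One wrinkle you should repair: as written, your argument upper-bounds \CarbonClipper's cost by $\mathbb{V}z_{\ADV} + (U+D)(1-z_{\ADV})$ and then ``imposes'' consistency on that upper bound, which does not logically force the inequality $\mathbb{V}z_{\ADV} + (U+D)(1-z_{\ADV}) \leq (1+\varepsilon)\mathbb{V}$ --- a necessity claim needs either a matching adversarial lower bound on the realized cost, or (as the paper does) the observation that the left-hand side of \eqref{eq:const-constraint} \emph{literally charges} these hedging terms, so that the constraint is infeasible at some step unless $z^{(t)}$ keeps pace. Your closing paragraph shows you see that the $(U+D)(1-z_{\ADV})$ quantity lives inside the constraint rather than in the realized mandatory-allocation cost, so grounding the whole derivation there (rather than in the abstract final-cost requirement) closes the gap; with that change the proof is the paper's argument in compressed, aggregate form.
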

\begin{proof}
For the purposes of showing this lower bound, we assume there are no marginal service costs in the instance that would otherwise be accepted by the robust pseudo-cost minimization.

\noindent Based on the consistency constraint, we have the following:
\begin{align*}
\CLIP_{t-1} + f_t(\mbf{p}_t) + \mathbb{W}^1 ( \mbf{p}_t, \mbf{p}_{t-1}) + \mathbb{W}^1 ( \mbf{p}_t, \mbf{a}_t )& + \tau c(\mbf{a}_t) + (1 - A^{(t)} ) L + (A^{(t)} - z^{(t-1)} - c(\mbf{p}_t))U\\
& \hspace{5em} \le (1+ \varepsilon) \left[ \ADV_t + \tau c(\mbf{a}_t) + (1 - A^{(t)} ) L \right].
\end{align*}
We let $f_t(\mbf{p}_t) + \mathbb{W}^1 ( \mbf{p}_t, \mbf{p}_{t-1}) + \mathbb{W}^1 ( \mbf{p}_t, \mbf{a}_t ) + \tau c(\mbf{a}_t) \leq v c(\mbf{p}_t)$ for any $\mbf{p}_t : c(\mbf{p}_t) < c(\mbf{a}_t)$, which holds by linearity of the cost functions $f_t(\cdot)$ and a prevailing condition that $c( \mbf{p}_t ) \leq c( \mbf{a}_t )$ for the ``bad service costs'' accepted by \ADV.
{\color{blue}
Note that \CarbonClipper must ``follow'' \ADV to distant points in the metric to avoid violating consistency, and recall that $\mbf{p}_t = \mbf{a}_t$ is always in the feasible set.  }
Under this condition that $\CarbonClipper$ follows \ADV, $\mathbb{W}^1 ( \mbf{p}_t, \mbf{p}_{t-1}) + \tau c(\mbf{a}_t)$ is upper bounded by the movement cost of $\ADV$ and absorbed into $\mathbb{V}$.  
{\color{blue}
The term $\mathbb{W}^1 ( \mbf{p}_t, \mbf{a}_t ) $ is upper bounded by $D(A^{(t)} - c(\mbf{p}_t))$ by \sref{Assumption}{ass:switching-cost} of the metric.  }

\begin{align*}
\CLIP_{t-1} + \mathbb{V} c(\mbf{p}_t) + L - L A^{(t)}& + U A^{(t)} + D A^{(t)} - U z^{(t-1)} - U c(\mbf{p}_t) - D c(\mbf{p}_t) \le \\
& \hspace{9em} (1+ \varepsilon) \left[ \mathbb{V}A^{(t-1)} + \mathbb{V}c(\mbf{a}_t) + L - L A^{(t)}  \right],\\
\mathbb{V} c(\mbf{p}_t) - D c(\mbf{p}_t) - U c(\mbf{p}_t) &\le (1+ \varepsilon) \left[ \mathbb{V}A^{(t-1)} + \mathbb{V}c(\mbf{a}_t) + L - L A^{(t)}  \right] - \\
& \hspace{9em} \CLIP_{t-1} - L + L A^{(t)} - U A^{(t)} - DA^{(t)} + U z^{(t-1)}, \\
\mathbb{V} c(\mbf{p}_t) - D c(\mbf{p}_t) - U c(\mbf{p}_t) &\le \mathbb{V} A^{(t)} - D A^{(t)} - U A^{(t)} - \CLIP_{t-1} + U z^{(t-1)} +\\
& \hspace{9em} \varepsilon \left[ \mathbb{V} A^{(t-1)} + \mathbb{V} c(\mbf{a}_t) + L - L A^{(t)}  \right], \\
c(\mbf{p}_t) &\ge \frac{\mathbb{V} A^{(t)} - D A^{(t)} - U A^{(t)} - \CLIP_{t-1} + U z^{(t-1)} + \varepsilon \left[ \mathbb{V} A^{(t)} + L - L A^{(t)}  \right]}{\mathbb{V}-D-U}.
\end{align*}

\noindent In the event that $A^{(t-1)} = 0$ (i.e., nothing has been accepted so far by either \ADV or \CarbonClipper), we have:
\begin{align*}
c(\mbf{p}_t) &\ge \frac{\mathbb{V} c(\mbf{a}_t) - Dc(\mbf{a}_t) - U c(\mbf{a}_t) + \varepsilon \left[ \mathbb{V} c(\mbf{a}_t) + L - L c(\mbf{a}_t)  \right]}{\mathbb{V}-D-U}, \\
c(\mbf{p}_t) &\ge c(\mbf{a}_t) - \frac{\varepsilon \left[ \mathbb{V} c(\mbf{a}_t) + L - L c(\mbf{a}_t)  \right]}{U+D-\mathbb{V}}.
\end{align*}

Through a recursive definition, we can show that for any $A^{(t)}$, given that \CarbonClipper has satisfied $z^{(t-1)}$ of the deadline constraint by following \ADV so far, it must set $\mbf{p}_t$ such that:
\begin{align*}
z^{(t)} &\ge z^{(t-1)} + c(\mbf{a}_t) - \frac{\varepsilon \left[  \mathbb{V} c(\mbf{a}_t) + L - L c(\mbf{a}_t)  \right]}{U+D-\mathbb{V}}.
\end{align*}

{\color{blue}
\noindent Continuing the assumption that $\mathbb{V}$ is constant, if \CarbonClipper has accepted $z^{(t-1)}$ thus far, we have the following if we assume that all of the constraint satisfaction up to this point happened in a single previous time step $m$:
}
\begin{align*}
c(\mbf{p}_t) &\ge A^{(t)} - \frac{U c(\mbf{p}_m) + D c(\mbf{p}_m) - \CLIP_{t-1} + \varepsilon \left[ \mathbb{V} A^{(t)} + L - L A^{(t)}  \right]}{U+D-\mathbb{V}}, \\
c(\mbf{p}_t) &\ge c(\mbf{a}_t) + c(\mbf{a}_m) - c(\mbf{p}_m) - \frac{\varepsilon \left[ \mathbb{V} (c(\mbf{a}_t) + c(\mbf{a}_m)) + L - L (c(\mbf{a}_t) + c(\mbf{a}_m))  \right]}{U+D-\mathbb{V}}, \\
c(\mbf{p}_t) + c(\mbf{p}_m) &\ge c(\mbf{a}_t) + c(\mbf{a}_m) - \frac{\varepsilon \left[ \mathbb{V} (c(\mbf{a}_t) + c(\mbf{a}_m)) + L - L (c(\mbf{a}_t) + c(\mbf{a}_m))  \right]}{U+D-\mathbb{V}}, \\
z^{(t)} &\ge A^{(t)} - \frac{\varepsilon \left[ \mathbb{V} A^{(t)} + L - L A^{(t)}  \right]}{U+D-\mathbb{V}}.
\end{align*}

\noindent This gives intuition into the desired $z_\ADV$ bound.  The above motivates that the \textit{aggregate} expected constraint satisfaction by \CarbonClipper at any given time step $t$ must satisfy a lower bound.  Consider that the worst case for Sub-case 2.2 occurs when all of the $v$ prices accepted by \ADV arrive first, before any prices that would be considered by the pseudo-cost minimization.  Then let $A^{(t)} = 1$ for some arbitrary time step $t$, and we have the stated lower bound on $z_\ADV$.
\end{proof}

If \CarbonClipper is forced to use $z_\ADV$ of its utilization to be $(1+\varepsilon)$ consistent against \ADV, that leaves at most $(1-z_\ADV)$ utilization for robustness.  We define $z' = \min ( 1 - z_\ADV, z_{\PCM})$ and consider the following two cases.  

\noindent $\bullet$ \textbf{Sub-case 2.2.1}: if $z' = z_{\PCM}$, the worst-case competitive ratio is bounded by the following.  Note that if $z' = z_{\PCM}$, the amount of utilization that \CarbonClipper can use to ``be robust'' is exactly the same as in \textbf{Case 1}, and we again have that $\OPT(\mathcal{I}) \to \nicefrac{\psi^{(\varepsilon)} (z_{\PCM})}{O(\log n)} = \nicefrac{L}{O(\log n)}$:
\begin{align}
\frac{\mathbb{E} \left[ \CarbonClipper(\mathcal{I}) \right]}{\OPT(\mathcal{I})} &\leq \frac{\int_0^{z_{\PCM}} \psi^{(\varepsilon)}(u) du + \tau z_{\PCM} + (1 -z_{\ADV} - z_{\PCM})U + z_{\ADV} \mathbb{V}}{\nicefrac{L}{O(\log n)}}, \\
&\leq \frac{\int_0^{z_{\PCM}} \psi^{(\varepsilon)}(u) du + \tau z_{\PCM} + (1 - z_{\PCM})D + (1 - z_{\PCM})U}{\nicefrac{L}{O(\log n)}}, \label{eq:case2robustness} \\
&\leq O(\log n) \gamma^{(\varepsilon)}.
\end{align}

\noindent $\bullet$ \textbf{Sub-case 2.2.2}: if $z' = 1 - z_{\ADV}$, the worst-case competitive ratio is bounded by the following.  Note that \CarbonClipper \textit{cannot} use $z_{\PCM}$ of its utilization for robustness, so the following bound assumes that the ``robust service costs'' accepted by \CarbonClipper are bounded by the \textit{worst $(1 - z_{\ADV})$ fraction} of the pseudo-cost function $\psi^{(\varepsilon)}$ (note that $\psi^{(\varepsilon)}$ is non-increasing on $z \in [0,1]$):
\begin{align*}
\frac{\mathbb{E} \left[ \CarbonClipper(\mathcal{I}) \right]}{\OPT(\mathcal{I})} \leq \frac{\int_0^{1 - z_{\ADV}} \psi^{(\varepsilon)}(u) du + \tau (1 - z_{\ADV}) + z_{\ADV} \mathbb{V} }{\nicefrac{L}{O(\log n)}}.
\end{align*}

\noindent Note that if $z' = 1 - z_\ADV$, we know that $1 - z_\ADV < z_{\PCM}$, which further gives the following by definition of $z_\ADV$:
\begin{align*}
    1-z_{\PCM} &< 1 - \frac{\mathbb{V} \varepsilon}{U+D-\mathbb{V}}, \\
    \mathbb{V} \varepsilon &< (U+D-\mathbb{V}) z_{\PCM}, \\
    \mathbb{V} &< \frac{U+D}{(1+ \frac{\varepsilon}{z_{\PCM}})}.
\end{align*}
\noindent By plugging $\mathbb{V}$ back into the definition of $z_\ADV$, we have that $z_\ADV \mathbb{V} \leq \left( \frac{(1-z_{\PCM})(U+D)}{1 + \frac{\varepsilon}{z_{\PCM}}} \right)$, giving the following:
\begin{align}
\frac{\mathbb{E} \left[ \CarbonClipper(\mathcal{I}) \right]}{\OPT(\mathcal{I})} &\leq \frac{\int_0^{1 - z_{\ADV}} \psi^{(\varepsilon)}(u) du + \tau (1 - z_{\ADV}) + \left( \frac{(1-z_{\PCM})(U+D)}{1 + \frac{\varepsilon}{z_{\PCM}}} \right) }{\nicefrac{L}{O(\log n)}}, \\
&\leq \frac{\int_0^{z_{\PCM}} \psi^{(\varepsilon)}(u) du + \tau z_{\PCM} + (1 - z_{\PCM})D + (1 - z_{\PCM})U}{\nicefrac{L}{O(\log n)}},\\
&\leq  O(\log n) \gamma^{(\varepsilon)}.
\end{align}

Thus, by combining the bounds in each of the above two cases, the result follows, and we conclude that $\CarbonClipper$ is $O(\log n) \gamma^{(\varepsilon)}$-robust. \hfill \qed

Having proven \sref{Lemma}{lem:clipconst} (consistency) and \sref{Lemma}{lem:cliprob} (robustness), the statement of \autoref{thm:constrobCLIP} follows: \CarbonClipper is $(1+\varepsilon)$-consistent and $O(\log n) \gamma^{(\varepsilon)}$-robust given any advice for \CSWM. 
\end{proof}

\subsection{Proof of \autoref{thm:optimalconstrobCSWM}} \label{apx:optimalconstrobCSWM}

In this section, we prove Theorem~\ref{thm:optimalconstrobCSWM}, which states that $\gamma^{(\varepsilon)}$ (as defined in \eqref{eq:gamma}) is the optimal robustness for any $(1 + \varepsilon)$-consistent learning-augmented \CSWM algorithm.
\begin{proof}
To show this result, we build off the same $y$-adversaries for \CSWM defined in \sref{Definition}{dfn:yadversary-caslb}, where $y \in [L,U]$.  For the purposes of showing consistency, we define a slightly tweaked adversary $\mathcal{A}'_y$:
\begin{dfn}[$\mathcal{A}'_y$ adversary for learning-augmented \CSWM] \label{dfn:yadversary-constrob}
Recall the $\mathcal{A}_y$ adversary defined in \sref{Definition}{dfn:yadversary-caslb}.
During \textbf{Stage 1} of the adversary's sequence, $\mathcal{A}_y$ and $\mathcal{A}'_y$ are identical.
{\color{blue}
In \textbf{Stage 2}, $\mathcal{A}'_y$ presents $\text{Up}(x)$ at the starting point's \ON state $\ON^{(s)}$ once, followed by $\text{Down}^{m_y} (x) = y x$ at $\ON^{(s)}$.
All other $\ON$ states are considered inactive in this stage, so they only receive $\text{Up}(x)$.  
In the $\mu-1$ time steps after $\text{Down}^{m_y} (x) = y x$ is presented, the adversary presents $\text{Down}^{m_y} (x) = y x$ at the starting point in the metric (allowing \emph{\OPT} and \emph{\ADV} to reduce their switching cost).  Finally, the adversary presents $\text{Up}(x)$ everywhere for the final $\mu$ time steps, and \textbf{Stage 2} ends.
}
\end{dfn}

\noindent As in the proof of \autoref{thm:lowerboundCSWM}, for adversary $\mathcal{A}'_y$, the optimal offline objective is $\OPT(\mathcal{A}'_y) \to y$.
Against these adversaries, we consider two types of advice -- the first is \textit{bad} advice, which stays with their full allocation at the starting $\OFF$ state (i.e., $\mbf{a}_t = \delta_s$) for all time steps $t < j$ before the mandatory allocation, incurring a final cost of $U + 2\tau$.  

{\color{blue}
On the other hand, \textit{good} advice sets $\mbf{a}_t = \delta_s$ for all time steps up to the first time step when $y$ is revealed at the starting point in the metric, after which it sets $a_t^{\ON^{(s)}} = \nicefrac{1}{\mu}$ to achieve final cost $\ADV(\mathcal{A}'_y) = \OPT(\mathcal{A}'_y) = y + \nicefrac{2\tau}{\mu}$.
}

We let $\left( s(y) + t(y) \right)$ denote a \textit{robust constraint satisfaction function} $[L,U] \rightarrow [0,1]$, that fully quantifies the actions of a learning-augmented algorithm $\ALG$ playing against adaptive adversary $\mathcal{A}'_y$, where $\left( s(y) + t(y) \right)$ gives the progress towards the deadline constraint under the instance $\mathcal{A}'_y$ before (either) the mandatory allocation or the black-box advice sets $a_t^{\ON^{(s)}} > 0$.  Since the conversion is unidirectional (irrevocable), we must have that $s(y - \sigma) + t(y - \sigma) \geq \left( s(y) + t(y) \right)$, i.e. $\left( s(y) + t(y) \right)$ is non-increasing in $[L, U]$.  

As in the proof of \autoref{thm:lowerboundCSWM}, the adaptive nature of $\mathcal{A}'_y$ forces any algorithm to incur a movement and switching cost proportional to $\left( s(y) + t(y) \right)$ during the robust phase, specifically denoted by $\left(D + 2\tau \right) \left( s(y) + t(y) \right)$. 
Recall that by the proof of \autoref{thm:lowerboundCSWM}, for any $\gamma$-competitive online algorithm $\ALG$, we have the following condition on $\left( s(y) + t(y) \right)$:
\begin{align}
\left( s(y) + t(y) \right) & \geq \gamma \ln \left( U - y - D - 2\tau \right) - \gamma \ln \left( U - \mbf{l} - D - 2\tau \right), \quad \forall y \in [L, U]. \label{eq:rob-bound-gron-min}
\end{align}

\noindent Furthermore, we have that the expected cost of $\ALG$ on adversary $\mathcal{A}'_y$ is given by:
\begin{align}
    \mathbb{E} \left[ \ALG(\mathcal{A}'_y) \right] &= s(\mbf{l}) \mbf{l} - \int^y_{\mbf{l}} u d s(u) + D s(y) + t(\mbf{l}) \mbf{l} - \int^y_{\mbf{l}} u d t(u) + Dt(y) +\\
    &\hspace{14em} (1 - s(y) - t(y))U + 2\tau \left[ s(y) + t(y) \right] \label{eq:lb-costconstrob}
\end{align}

\noindent To simultaneously be $\alpha$-consistent when the advice \textit{is} correct, $\ALG$ must satisfy $\mathbb{E} \left[ \ALG(\mathcal{A}'_L) \right] \leq \alpha \OPT(\mathcal{A}'_L) = \alpha L$.  If the advice is correct, $\ALG$ must pay an additional factor of $D$ to move back and follow \ADV in the worst case -- but can satisfy the rest of the deadline constraint at the best cost functions $L$.  It must also still pay for switching incurred by the robust algorithm (recall that $\OPT$ pays no switching cost).  Using integral by parts, we have:
\begin{align}
\int^L_{\mbf{l}} s(u) + t(u) du + \left[ 2D + 2\tau \right] \left( s(L) + t(L) \right) +  (1 - s(y) - t(y))L + L \left( s(y) + t(y) \right) &\leq \alpha L,\\
\int^L_{\mbf{l}} s(u) + t(u) du + \left[ 2D + 2\tau \right] \left( s(L) + t(L) \right) &\leq \alpha L - L. \label{eq:const-bound-gron-min}
\end{align}
By combining equations \eqref{eq:rob-bound-gron-min} and \eqref{eq:const-bound-gron-min}, and substituting $\mbf{l} = \nicefrac{U}{\gamma}$, the robust constraint satisfaction function $\left( s(y) + t(y) \right)$ of any $\gamma$-robust and $\alpha$-consistent online algorithm must satisfy:
\begin{align}
\gamma \int_{\mbf{l}}^L \ln \left( \frac{U - u - D - 2\tau}{ U - \nicefrac{U}{\gamma} - D - 2\tau } \right)  du + \left[ 2D + 2\tau \right] \left[ \gamma \ln \left( \frac{U - L - D - 2\tau}{ U - \nicefrac{U}{\gamma} - D - 2\tau } \right)  \right] &\leq \alpha L - L.
\end{align}
When all inequalities are binding, this equivalently gives that 
\begin{align}
\alpha \geq \gamma +1-\frac{U}{L} + \frac{\gamma(U-L+D)}{L}  \ln \left( \frac{U - L - D - 2\tau}{ U - \nicefrac{U}{\gamma} - D - 2\tau } \right). \label{eq:eta-gamma}
\end{align}
We define $\alpha$ such that $\alpha \coloneqq (1 + \varepsilon)$.  By substituting for $\alpha$ into \eqref{eq:eta-gamma}, we recover the definition of $\gamma^{(\varepsilon)}$ as given by \eqref{eq:gamma}, which subsequently completes the proof.
Thus, we conclude that any $(1+\varepsilon)$-consistent algorithm for \CASLB is at least $\gamma^{(\varepsilon)}$-robust.
\end{proof}

\section{Proofs for \autoref{sec:timevarying} (Generalization to Time-varying Metrics)} \label{apx:timevarying}
\subsection{Proof of \sref{Corollary}{cor:timevaryPCM}} \label{apx:timevaryPCM}

{\color{blue}
In this section, we prove \sref{Corollary}{cor:timevaryPCM}, which states that \PCM is $O(\log n) \eta$-competitive for \CSWMt, the variant of \CSWM where distances in the metric $(X, d)$ are allowed to be time-varying (see \autoref{sec:timevarying}).
}

We recall a few mild assumptions on the \CSWMt problem that directly imply the result.
Let $d_t(\cdot, \ \cdot)$ denote the distance between points in $X$ at time $t \in [T]$.
{\color{blue}
We redefine $D$ as $D = \sup_{t \in [T]} \left( \max_{u, v \in X : u \not = v} \frac{d_t( u, \ v )}{\min \{ c^{(u)}, c^{(v)} \} } \right)$, i.e., it is an upper bound on distance between any two points in the metric at any time over the horizon $T$.  
Recall that the temporal switching cost between \ON and \OFF states at a single point is defined as \textit{non-time-varying}, so $\Vert \cdot \Vert_{\ell_1 (\mathbf{\beta})}$ gives the temporal switching cost for all $t \in [T]$.
}

We also assume that the tree embedding-based vector space $(K, \norm{\cdot})$ defined by \sref{Definition}{dfn:tree} is appropriately reconstructed at each step, and that \PCM has knowledge of the current distances (i.e., $\norm{\cdot}$ accurately reflects $d_t(\cdot, \ \cdot)$ at time $t$).

Under these assumptions, every step in the proof of \autoref{thm:etaCompPCM} holds.  In particular, note that in \sref{Lemma}{lem:pcm-opt-lb}, the only fact about the distance function that is used is the fact that the distance between two \ON states is upper bounded by $D$, in \eqref{eq:lbminprob}, and that the vector space $(K, \norm{\cdot})$ has expected $O(\log n)$ distortion with respect to the underlying metric, which follows by definition.   
{\color{blue}
In \sref{Lemma}{lem:pcm-alg-ub}, most of \PCM's movement cost is absorbed into the integral over the pseudo-cost function $\psi$, and the only other fact about the distance that is used is that the distance between $\ON$ and $\OFF$ states at a single point $u \in X$ is fixed and bounded by $\tau c^{(u)}$, which follows by definition.
}
Thus, we conclude that \PCM is $O(\log n) \eta$-competitive for \CSWMt. \hfill \qed

\bigskip

\subsection{Proof of \sref{Corollary}{cor:timevaryCLIP}} \label{apx:timevaryCLIP}
In this section, we prove \sref{Corollary}{cor:timevaryCLIP}, which states that a minor change to the consistency constraint enables \CarbonClipper to be $(1+\varepsilon)$-consistent and $O(\log n) \gamma^{(\varepsilon)}$-robust for \CSWMt, given any $\varepsilon \in (0, \eta - 1]$.

We start by recalling assumptions on the \CSWMt problem that inform the result.  
{\color{blue}
Recall that we redefine $D$ as $D = \sup_{t \in [T]} \left( \max_{u, v \in X : u \not = v} \frac{d_t( u, \ v )}{\min \{ c^{(u)}, c^{(v)} \} } \right)$, i.e., it is an upper bound on distance between any two points at any time over the horizon $T$, and the temporal switching cost between \ON and \OFF states at a single point is defined as \textit{non-time-varying}, so $\Vert \cdot \Vert_{\ell_1 (\mathbf{\beta})}$ gives the temporal switching cost for all $t \in [T]$.
}

We also assume that the tree embedding-based vector space $(K, \norm{\cdot})$ defined by \sref{Definition}{dfn:tree} is appropriately reconstructed at each step, and that \CarbonClipper has knowledge of the current distances (i.e., $\norm{\cdot}$ and $\mathbb{W}^1(\cdot, \ \cdot)$ accurately reflect $d_t(\cdot, \ \cdot)$ at time $t$).

For the consistency constraint, we define a modified Wasserstein-1 distance function $\overline{\mathbb{W}}^1$ that will be used in the consistency constraint to hedge against the time-varying properties of the metric.  
{\color{blue}
This distance computes the optimal transport between two distributions on $\Delta_{\mathcal{S}}$ while assuming that the underlying distances are given by $\overline{d}( \cdot, \ \cdot)$, which is itself defined such that $\overline{d}( u, \ v ) = D \min \{ c^{(u)}, c^{(v)} \} : u, v \in X : u \not = v$.
\begin{equation}
    \overline{\mathbb{W}}^1 (\mbf{p}, \mbf{p}') \coloneqq \min_{\pi \in \Pi(\mbf{p}, \mbf{p}')} \mathbb{E} \left[ \overline{d}( \mbf{x}, \mbf{x}' ) \right],
\end{equation}
where $( \mbf{x}, \mbf{x}' ) \thicksim \pi_t$ and $\Pi(\mbf{p}, \mbf{p}')$ is the set of distributions over $X^2$ with marginals $\mbf{p}$ and $\mbf{p}'$.

Intuitively, the purpose of $\overline{\mathbb{W}}^1$ is to leverage the $D$ upper bound between points in the metric to give a ``worst-case optimal transport'' distance between distributions, assuming that the time-varying distances increase in future time steps. 
}

To this end, within the definition of the consistency constraint \eqref{eq:const-constraint}, \CarbonClipper for \CSWMt replaces the term $\mathbb{W}^1( \mbf{p}, \mbf{a}_t )$ with $\overline{\mathbb{W}}^1( \mbf{p}, \mbf{a}_t )$ -- this term hedges against the case where \CarbonClipper must move to follow \ADV in a future time step, and in the time-varying distances case, we charge \CarbonClipper an extra amount to further hedge against the case where the underlying distances between \CarbonClipper and \ADV grow in future time steps.

\noindent Paralleling the proof of \autoref{thm:constrobCLIP}, we consider consistency and robustness independently.

\noindent\textbf{Consistency. \ }
For consistency, according to the proof of \sref{Lemma}{lem:clipconst}, we show that resolving the mandatory allocation remains feasible in the time-varying case.  First, note that there is always a feasible $\mbf{p}_t$ that satisfies the consistency constraint, since even in the time-varying case, setting $\mbf{k}_t = \Phi \mbf{a}_t$ is always within the feasible set -- this follows by observing that at a given time $t$, if \CarbonClipper has moved away from \ADV, it has already ``prepaid'' a worst-case movement cost of $\overline{\mathbb{W}}^1( \mbf{p}_m, \mbf{a}_m )$ (for some previous time step $m$) in order to move back and follow \ADV.  
{\color{blue}
The remainder of the proof of \sref{Lemma}{lem:clipconst} only uses the fact that at the beginning of the mandatory allocation (at time $j \in [T]$), $\overline{\mathbb{W}}^1( \mbf{p}_{j-1}, \mbf{a}_{j-1} )$ is an upper bound on the movement cost paid by \CarbonClipper (if necessary) while migrating to \ADV's points in the metric to take advantage of the same service cost functions.  
}
By definition of $\overline{\mathbb{W}}^1$, this follows for any time-varying distance $d_t( \cdot, \ \cdot)$, and the remaining steps in the proof hold. \hfill \qed

\noindent\textbf{Robustness. \ }
For robustness, following the proof of \sref{Lemma}{lem:cliprob}, we show that a certain amount of \CarbonClipper's utilization can be ``set aside'' for robustness.  First, note that in Case 1 (i.e., ``inactive'' advice), the proof of \sref{Proposition}{pro:zpcm} only requires that $\mathbb{W}^1(\mbf{p}_t, \mbf{a}_t)$ is bounded by $D$, which follows immediately by the definition of $\overline{\mathbb{W}}^1$ -- the remaining steps for Case 1 follow.  In Case 2 (i.e., ``overactive'' advice), note that the proof of Sub-case 2.1 similarly only requires that $\mathbb{W}^1(\mbf{p}_t, \mbf{a}_t)$ is bounded by $D$, which follows because $\mathbb{W}^1(\mbf{p}_t, \mbf{a}_t) \leq \overline{\mathbb{W}}^1(\mbf{p}_t, \mbf{a}_t) \leq D \max \{ c(\mbf{p}_t), c(\mbf{a}_t) \}$.  Likewise, the remaining steps for Case 2 follow.  
\hfill \qed

\smallskip

Thus, we conclude that \CarbonClipper is $(1+\varepsilon)$-consistent and $O(\log n) \gamma^{(\varepsilon)}$-robust for \CSWMt, given any $\varepsilon \in (0, \eta - 1]$.

\end{document}